\newcommand{\midd}{\; \; \mbox{\Large{$\mid$}}\;\;}
\newenvironment{varitemize}
{
\begin{list}{\labelitemi}
{\setlength{\itemsep}{0pt}
 \setlength{\topsep}{0pt}
 \setlength{\parsep}{0pt}
 \setlength{\partopsep}{0pt}
 \setlength{\leftmargin}{15pt}
 \setlength{\rightmargin}{0pt}
 \setlength{\itemindent}{0pt}
 \setlength{\labelsep}{5pt}
 \setlength{\labelwidth}{10pt}
}}
{
 \end{list} 
}
\newcounter{numberone}
\newenvironment{varenumerate}
{
\begin{list}{\arabic{numberone}.}
{
  \usecounter{numberone}
  \setlength{\itemsep}{0pt}
  \setlength{\topsep}{0pt}
  \setlength{\parsep}{0pt}
  \setlength{\partopsep}{0pt}
  \setlength{\leftmargin}{15pt}
  \setlength{\rightmargin}{0pt}
  \setlength{\itemindent}{0pt}
  \setlength{\labelsep}{5pt}
  \setlength{\labelwidth}{15pt}
}}
{
\end{list} 
}
\newcounter{numbertwo}
\newcommand{\NN}{\mathbb{N}}
\newcommand{\RR}{\mathbb{R}}
\newcommand{\distrone}{\mathcal{D}}
\newcommand{\distrset}[1]{\mathbb{P}_{#1}}
\newtheorem{myth}{Theorem}
\newtheorem{definition}{Definition}
\newtheorem{lemma}{Lemma}
\newtheorem{theorem}{Theorem}
\newtheorem{corollary}{Corollary}
\newenvironment{proof}{\begin{trivlist}
	\item[\hskip \labelsep {\bfseries Proof.}]}{\hfill $\Box$ \end{trivlist}}
\newcommand{\bnf}{::=}
\DeclarePairedDelimiter\myabs{\lvert}{\rvert}
\newcommand{\types}{\mathrm A}
\newcommand{\typestring}{\textsf{Str}}
\newcommand{\typeone}{\textsf A}
\newcommand{\typetwo}{\textsf B}
\newcommand{\typethree}{\textsf C}
\newcommand{\typefour}{\textsf D}
\newcommand{\aone}{\textsf a}
\newcommand{\atwo}{\textsf b}
\newcommand{\athree}{\textsf c}
\newcommand{\astring}{\textsf s}
\newcommand{\modal}{\boxempty}
\newcommand{\nonmodal}{\blacksquare}
\newcommand{\terms}{\mathrm T}
\newcommand{\emptys}{\epsilon}
\newcommand{\tone}{t}
\newcommand{\ttwo}{s}
\newcommand{\tthree}{r}
\newcommand{\tfour}{q}
\newcommand{\conzero}{{\mathsf{0}}}
\newcommand{\conone}{{\mathsf{1}}}
\newcommand{\tail}{\mathsf{tail}}
\newcommand{\rand}{\mathsf{rand}}
\newcommand{\ccase}[5]{\textsf{case}_{#1}(#2,#3,#4,#5)}
\newcommand{\rrec}[5]{\textsf{rec}_{#1}(#2,#3,#4,#5)}
\newcommand{\varone}{x}
\newcommand{\vartwo}{y}
\newcommand{\varthree}{z}
\newcommand{\varfour}{w}
\newcommand{\vars}{\mathrm X}
\newcommand{\subs}[2]{\{\nicefrac{#1}{#2}\}}
\newcommand{\zero}{{\mathsf 0}}
\newcommand{\one}{{\mathsf 1}}
\newcommand{\values}{\mathrm V}
\newcommand{\vone}{v}
\newcommand{\free}[1]{\{#1\}}
\newcommand{\tcontone}{\Gamma}
\newcommand{\tconttwo}{\Delta}
\newcommand{\emtcont}{\emptyset}
\newcommand{\proves}{\vdash}
\newcommand{\conv}{\Downarrow}
\newcommand{\done}{\mathcal D}
\newcommand{\dtwo}{\mathcal E}
\newcommand{\dthree}{\mathcal F}
\newcommand{\pone}{\alpha}
\newcommand{\supp}{\textsf{S}}
\newcommand{\sem}[1]{\llbracket #1 \rrbracket}
\newcommand{\mcstates}{\mathrm S}
\newcommand{\mclabels}{\mathsf L}
\newcommand{\mcprob}{\mathcal P}
\newcommand{\relone}{\mathcal{R}}
\newcommand{\eval}{\textsf{eval}}
\newcommand{\pass}[1]{\textsf{pass}(#1)}
\newcommand{\pabone}{\mathcal R}
\newcommand{\eqcone}{E}
\newcommand{\labone}{l}
\newcommand{\open}[1]{#1_\circ}
\newcommand{\howe}[1]{#1^H}
\newcommand{\tc}[1]{(#1)^+}
\newcommand{\How}[1]{\mathsf{H_#1}}
\newcommand{\TC}[1]{\mathsf{TC_#1}}
\newcommand{\hopen}[1]{\howe{\open #1}}
\newcommand{\hotc}[1]{\tc{\hopen #1}}
\newcommand{\contone}{C}
\newcommand{\varsec}{\textsf{sec}}
\newcommand{\distone}{d}
\newcommand{\mcone}{\mathcal M}
\newcommand{\neglone}{\varepsilon}
\newcommand{\Nat}{\mathbb{N}}
\newcommand{\Real}{\mathbb{R}}
\newcommand{\secp}{\mathbf{n}}
\newcommand{\bit}{\mathrm{b}}
\newcommand{\strone}{{m}}
\newcommand{\strtwo}{{n}}
\newcommand{\strp}[1]{\underline{#1}}
\newcommand{\strset}{\mathrm{M}}
\newcommand{\view}[1]{\textsf{view}(#1)}
\newcommand{\trone}{\mathsf{T}}
\newcommand{\trtwo}{\mathsf{S}}
\newcommand{\emptyt}{{\mathbf\epsilon}}
\newcommand{\treq}{\simeq^{\mathsf T}}
\newcommand{\hole}{[\cdot]}
\newcommand{\tdone}{\mathcal T}
\newcommand{\tdtwo}{\mathcal S}
\newcommand{\tdthree}{\mathcal U}
\newcommand{\pdone}{\mathcal P}
\newcommand{\pdtwo}{\mathcal Q}
\newcommand{\pdthree}{\mathcal M}
\newcommand{\pdfour}{\mathcal N}
\newcommand{\ppone}{p}
\newcommand{\pptwo}{q}
\newcommand{\ppthree}{r}
\newcommand{\onestep}{\rightarrow}
\newcommand{\pairconv}{\Rrightarrow}
\newcommand{\smallstep}[1]{\Rightarrow^{#1}}
\newcommand{\sister}{\triangledown}
\newcommand{\tramet}{\delta^{\mathtt T}}
\newcommand{\conmet}{\delta^{\mathtt C}}
\newcommand{\relative}[1]{\triangledown_{#1}}
\newcommand{\secs}{{1^\secp}}
\newcommand{\ptreq}{\treq_n}
\newcommand{\ptdone}{\bar\tdone}
\newcommand{\ptdtwo}{\bar\tdtwo}
\newcommand{\ddone}{\mathtt D}
\newcommand{\rfun}[1]{\mathtt{f}_{#1}}
\newcommand{\rbg}{\mathbf{\mathtt{RBG}}}
\newcommand{\level}{\mathbf{\mathtt{LV}}}
\newcommand{\ite}[3]{\mathsf{if}\ #1\ \mathsf{then}\ #2\ \mathsf{else}\ #3}
\newcommand{\T}{\ensuremath{\mathsf{T}}}
\newcommand{\SLR}{\ensuremath{\mathsf{SLR}}}
\newcommand{\RSLR}{\ensuremath{\mathsf{RSLR}}}
\title{On Equivalences, Metrics, and Polynomial Time\\(Long Version)\footnote{This work is partially supported by the ANR
    project 12IS02001 PACE.}}
\author{Alberto Cappai\footnote{Universit\`a di Bologna \& INRIA Sophia Antipolis} \and 
        Ugo Dal Lago\footnote{Universit\`a di Bologna \& INRIA Sophia Antipolis}}
\date{}
\begin{document}

\maketitle

\begin{abstract}
Interactive behaviors are ubiquitous in modern cryptography, but are
also present in $\lambda$-calculi, in the form of higher-order
constructions. Traditionally, however, typed $\lambda$-calculi simply
do not fit well into cryptography, being both deterministic and too
powerful as for the complexity of functions they can express.  We
study interaction in a $\lambda$-calculus for probabilistic polynomial
time computable functions. In particular, we show how notions of
context equivalence and context metric can both be characterized by
way of traces when defined on linear contexts. We then give evidence
on how this can be turned into a proof methodology for computational
indistinguishability, a key notion in modern cryptography. We also
hint at what happens if a more general notion of a context is used.
\end{abstract}
\section{Introduction}
Modern cryptography~\cite{KatzLindell2007} is centered around the idea
that security of cryptographic constructions needs to be defined
precisely and, in particular, that crucial aspects are \emph{how} an
adversary interacts with the construction, and \emph{when} he wins
this game. The former is usually specified by way of an
\emph{experiment}, while the latter is often formulated stipulating that
the probability of a favorable result for the adversary needs to be
small, where being ``small'' usually means being \emph{negligible} in
a security parameter. This framework would however be vacuous if the
adversary had access to an unlimited amount of resources, or if it
were deterministic. As a consequence the adversary is usually assumed
to work within probabilistic polynomial time (PPT in the following),
this way giving rise to a robust definition. Summing up, there are
three key concepts here, namely
\emph{interaction}, \emph{probability} and \emph{complexity}.
Security as formulated above can often be spelled out semantically as
the so-called \emph{computational indistinguishability} between two
distributions, the first one being the one produced by the
construction and the second one modeling an idealized construction or
a genuinely random object.

Typed $\lambda$-calculi as traditionally conceived, do not fit well
into this picture. Higher-order types clearly allow a certain degree
of interaction, but probability and complexity are usually absent:
reduction is deterministic (or at least confluent), while the
expressive power of $\lambda$-calculi tends to be very high. This
picture has somehow changed in the last ten years: there have been
some successful attempts at giving probabilistic $\lambda$-calculi
whose representable functions coincide with the ones which can be
computed by PPT
algorithms~\cite{MitchellMS98,Zhang10,DalLagoParisenToldin}. These
calculi invariably took the form of restrictions on G\"odel's \T,
endowed with a form of binary probabilistic choice. All this has been
facilitated by implicit computational complexity, which offers the
right idioms to start from~\cite{Hofmann97}, themselves based on
linearity and ramification. The emphasis in all these works were
either the characterization of probabilistic complexity
classes~\cite{DalLagoParisenToldin}, or more often
security~\cite{Zhang10,NowakZ10,NowakZ14}: one could see
$\lambda$-calculi as a way to specify cryptographic constructions and
adversaries for them. The crucial idea here is that computational
indistinguishability can be formulated as a form of context
equivalence. The real challenge, however, is whether all this can be
characterized by handier notions, which would alleviate the inherently
difficult task of dealing with all contexts when proving two terms to
be equivalent.

The literature offers many proposals going precisely in this
direction: this includes logical relations, context lemmas, or
coinductive techniques. In applicative bisimulation~\cite{Abramsky90},
as an example, terms are modeled as interactive objects. This way, one
focuses on how the interpreted program interacts with its environment,
rather than on its internal evolution. None of them have so far been
applied to calculi capturing probabilistic polynomial time, and
relatively few among them handle probabilistic behavior.

In this paper, we study notions of equivalence and distance in one of
these $\lambda$-calculi,
called \RSLR~\cite{DalLagoParisenToldin}. More precisely:
\begin{varitemize}
\item
  After having briefly introduced \RSLR\ and studied its basic
  metatheoretical properties (Section~\ref{sect:characterizing}), we
  define \emph{linear context equivalence}. We then show how the role
  of contexts can be made to play by \emph{traces}. Finally, a
  coinductive notion of equivalence in the style of Abramsky's
  bisimulation is shown to be a congruence, thus included in context
  equivalence, but not to coincide with it. We also hint at how all
  this can be extended to metrics. This can be found in
  Section~\ref{sect:metrics}.
\item
  We then introduce a notion of \emph{parametrized context
    equivalence} for \RSLR\ terms, showing that it coincides with
    computational indistinguishability when the compared programs are
    of base type. We then turn our attention to the problem of
    characterizing the obtained notion of equivalence by way of linear
    tests, giving a positive answer to that by way of a notion of
    parametrized trace metric. A brief discussion about the role of
    linear contexts in cryptography is also given. All this is in
    Section~\ref{sect:ci}.
\end{varitemize}

\section{Characterizing Probabilistic Polynomial Time}\label{sect:characterizing}
In this section we introduce \RSLR~\cite{DalLagoParisenToldin}, a
$\lambda$-calculus for probabilistic polynomial time computation,
obtained by extending Hofmann's \SLR~\cite{Hofmann00} with an operator
for binary probabilistic choice. Compared to other presentations of
the same calculus, we consider a call-by-value reduction but elide
nonlinear function spaces and pairs. This has the advantage of making
the whole theory less baroque, without any fundamental loss in
expressiveness (see Section~\ref{sect:hoci} below).

First of all, \emph{types} are defined as follows:
$$
\typeone\bnf\ \typestring\midd\nonmodal\typeone\to\typeone\midd\modal\typeone\to\typeone.
$$
The expression $\typestring$ serves to type strings, and is the only
base type. $\nonmodal\typeone\to\typetwo$ is the type of functions
(from $\typeone$ to $\typetwo$) which can be evaluated in constant
time, while for $\modal\typeone\to\typetwo$ the running time can be
any polynomial. \emph{Aspects} are the elements of
$\{\modal,\nonmodal\}$ and are indeed fundamental to ensure polytime
soundness. We denote them with metavariables like $\aone$ or $\atwo$.
We define a partial order $<:$ between aspects simply as
$\{(\modal,\modal),(\modal,\nonmodal),(\nonmodal,\nonmodal)\}$, and a
subtyping by using the rules in Figure~\ref{fig:subtyping}.
\begin{figure}[!ht]
\centering
\fbox{
  \begin{minipage}{.95\textwidth}
  $$
  \AxiomC{}
  \UnaryInfC{$\typeone<:\typeone$}
  \DisplayProof
  \qquad
  \AxiomC{$\typeone<:\typetwo$}
  \AxiomC{$\typetwo<:\typethree$}
  \BinaryInfC{$\typeone<:\typethree$}
  \DisplayProof
  \qquad
  \AxiomC{$\typetwo<:\typeone$}
  \AxiomC{$\typethree<:\typefour$}
  \AxiomC{$\aone<:\atwo$}
  \TrinaryInfC{$\aone\typeone\to\typethree<:\atwo\typetwo\to\typefour$}
  \DisplayProof
   $$
  \end{minipage}
}\caption{Subtyping Rules}\label{fig:subtyping}
\end{figure}

The syntactical categories of \emph{terms} and \emph{values} are
constructed by the following grammar:
{\footnotesize
\begin{align*} 
\tone\ \bnf\ & \varone\midd\vone\midd\conzero(\tone)\midd\conone(\tone)\midd\tail(\tone)\midd\tone\tone\midd
		\ccase{\typeone}{\tone}{\tone}{\tone}{\tone}\midd
		\rrec{\typeone}{\tone}{\tone}{\tone}{\tone}\midd\rand;\\
\vone\ \bnf\ & \strp\strone\midd\lambda\varone:\aone\typeone.\tone;
\end{align*}}
where $\strone$ ranges over the set $\{0,1\}^*$ of finite, binary
strings, while $\varone$ ranges over a denumerable set of variables
$\vars$. We write $\terms,\values$ for the sets of terms
and values, respectively. The operators $\conzero$ and $\conone$ are
constructors for binary strings, while $\tail$ is a destructor. The
only nonstandard constant is $\rand$, which returns $\strp{0}$ or $\strp{1}$, each
with probability $\frac{1}{2}$, thus modeling uniform binary choice.
The terms $\ccase{\typeone}{\tone}{\tone_0}{\tone_1}{\tone_\epsilon}$
and $\rrec{\typeone}{\tone}{\tone_0}{\tone_1}{\tone_\epsilon}$ are
terms for case distinction and recursion, in which first argument
specifies the term (of base type) which guides the process.
Informally, then, we have the following rules:
\begin{align*}
	\ccase{\typeone}{\strp\emptys}{\tone_0}{\tone_1}{\tone_\epsilon}&\to\tone_\epsilon; &
	\rrec{\typeone}{\strp\emptys}{\tone_0}{\tone_1}{\tone_\epsilon}&\to\tone_\epsilon\\
	\ccase{\typeone}{\strp{\zero\strone}}{\tone_0}{\tone_1}{\tone_\epsilon}&\to\tone_0; &
	\rrec{\typeone}{\strp{\zero\strone}}{\tone_0}{\tone_1}{\tone_\epsilon}&\to(\tone_0\strp{\zero\strone})(\rrec{\typeone}{\strp\strone}{\tone_0}{\tone_1}{\tone_\epsilon})\\
	\ccase{\typeone}{\strp{\one\strone}}{\tone_0}{\tone_1}{\tone_\epsilon}&\to\tone_1;&
        \rrec{\typeone}{\strp{\one\strone}}{\tone_0}{\tone_1}{\tone_\epsilon}&\to(\tone_1\strp{\one\strone})(\rrec{\typeone}{\strp\strone}{\tone_0}{\tone_1}{\tone_\epsilon})
\end{align*}
\newcommand{\tailset}[1]{\mathbf{T}(#1)}
The expression $\strp\emptys$ stands for the empty string and we set
$\tail(\strp\emptys)\onestep\strp\emptys$. Given a string $\strp\strone$, $\tailset{\strp\strone}$ is
the set of strings whose tail is $m$,
e.g. $\tailset{\epsilon}=\{\epsilon,0,1\}$.

As usual, a \emph{typing context} $\tcontone$ is a finite set of
assignments of an aspect and a type to a variable, where as usual any
variable occurs \emph{at most once}. Any such assignment is indicated
with $\varone:\aone\typeone$. The expression $\tcontone,\tconttwo$ stands for the union of
the two typing contexts $\tcontone$ and $\tconttwo$, which are assumed to be
disjoint. The union $\tcontone,\tconttwo$ is indicated with
$\tcontone;\tconttwo$ whenever we want to insist on $\tcontone$ to only
involve the \emph{base} type $\typestring$. \emph{Typing judgments} are in the form
$\tcontone\proves\tone:\typeone$. Typing rules are in Figure~\ref{fig_typingrules}. 
\begin{figure}[!h]
\centering \fbox {
\begin{minipage}{.98\textwidth}
\footnotesize
{
 	$$
 	\AxiomC{$\varone:\aone\typeone\in\tcontone$}
 	\UnaryInfC{$\tcontone\proves\varone:\typeone$}
 	\DisplayProof
        \quad
        \AxiomC{}
        \UnaryInfC{$\tcontone\proves\strp\strone:\typestring$}
        \DisplayProof  
        \quad  
        \AxiomC{$\tcontone\proves\tone:\typestring$}
        \UnaryInfC{$\tcontone\proves\conzero(\tone):\typestring$}
        \DisplayProof
        \quad  
        \AxiomC{$\tcontone\proves\tone:\typestring$}
        \UnaryInfC{$\tcontone\proves\conone(\tone):\typestring$}
        \DisplayProof
 	\quad
        \AxiomC{$\tcontone\proves\tone:\typestring$}
        \UnaryInfC{$\tcontone\proves\tail(\tone):\typestring$}
        \DisplayProof
        \quad  
        \AxiomC{}
        \UnaryInfC{$\proves\rand:\typestring$}
        \DisplayProof
 	$$
        \vspace{6pt}
 	$$
 	\AxiomC{$
        \begin{array}{cc}
        \tcontone;\tconttwo_1\proves\tone:\typestring &
        \tcontone;\tconttwo_3\proves\tone_1:\typeone\\
	\tcontone;\tconttwo_2\proves\tone_0:\typeone &
 	\tcontone;\tconttwo_4\proves\tone_\epsilon:\typeone 
         \end{array}$}
 	\UnaryInfC{$\tcontone;\tconttwo_1,\tconttwo_2,\tconttwo_3,\tconttwo_4\proves\ccase{\typeone}{\tone}{\tone_0}{\tone_1}{\tone_\epsilon}:\typeone$}
 	\DisplayProof
        \qquad  
 	\AxiomC{$\tcontone,\varone:\aone\typeone\proves\tone:\typetwo$}
 	\UnaryInfC{$\tcontone\proves\lambda\varone:\aone\typeone.\tone:\aone\typeone\to\typetwo$}
 	\DisplayProof
        \qquad  
 	\AxiomC{$\tcontone\proves\tone:\typeone$}
 	\AxiomC{$\typeone<:\typetwo$}
 	\BinaryInfC{$\tcontone\proves\tone:\typetwo$}
 	\DisplayProof
 	$$
        \vspace{6pt}       
 	$$
 	\AxiomC{
        $
        \begin{array}{cc}
        \tcontone_1;\tconttwo_1\proves\tone:\typestring &  
        \tcontone_1,\tcontone_2,\tcontone_3;\tconttwo_2\proves\tone_\epsilon:\typeone\\
        \tcontone_1,\tcontone_2\proves\tone_0:\modal\typestring\to\nonmodal\typeone\to\typeone &
        \tcontone_1,\tconttwo_1<:\modal\\
        \tcontone_1,\tcontone_3\proves\tone_1:\modal\typestring\to\nonmodal\typeone\to\typeone &
        \mbox{$\typeone$ is $\modal$-free}
        \end{array}
        $}
 	\UnaryInfC{$\tcontone_1,\tcontone_2,\tcontone_3;\tconttwo_1,\tconttwo_2\proves
          \rrec{\typeone}{\tone}{\tone_0}{\tone_1}{\tone_\epsilon}:\typeone$}
 	\DisplayProof
 	\quad
 	\AxiomC{$\tcontone;\tconttwo_1\proves\tone:\aone\typeone\to\typetwo$}
 	\noLine
 	\UnaryInfC{$\tcontone;\tconttwo_2\proves\ttwo:\typeone$}
 	\AxiomC{$\tcontone,\tconttwo_2<:\aone$}
 	\BinaryInfC{$\tcontone;\tconttwo_1,\tconttwo_2\proves\tone\ttwo:\typetwo$}
 	\DisplayProof
 	$$
}
\end{minipage}}
\caption{\RSLR's Typing Rules}\label{fig_typingrules}
\end{figure}
The expression $\terms^\typeone_\tcontone$ (respectively,
$\values^\typeone_\tcontone$) stands for the set of terms
(respectively, values) of type $\typeone$ under the typing context
$\tcontone$. Please observe how the type system we have just
introduced enforces variables of higher-order type to occur free at
most once and outside the scope of a recursion. Moreover, the type of
terms which serve as step-functions in a recursion are assumed to be
$\modal$-free, and this is precisely what allow this calculus to
characterize polytime functions.

The operational semantics of \RSLR\ is of course probabilistic: any
closed term $\tone$ evaluates not to a single value but to
a \emph{value distribution}, i.e, a function $\done:\values\to\RR$
such that $\sum_{\vone\in\values}\done(\vone)=1$. Judgments 
expressing this fact are in the form $\tone\conv\done$, and
are derived through a formal system whose rules are in Figure~\ref{fig:bigstepsem}.
In the figure, and in the rest of this paper, we use some standard
notation on distributions. More specifically, the expression
$\{\vone_1^{\pone_1},\ldots,\vone_n^{\pone_n}\}$ stands for the
distribution assigning probability $\pone_i$ to $\vone_i$ (for every
$1\leq i\leq n$). The support of a distribution $\done$
is indicated with $\supp(\done)$. Given a set $X$, $\distrset{X}$
is the set of all distributions over $X$.
\begin{figure}[!ht]
\centering
\fbox{
	\begin{minipage}{.95\textwidth}
	\footnotesize{
	$$
	\AxiomC{}
	\UnaryInfC{$\vone\conv\{\vone^1\}$}
	\DisplayProof
	\quad
	\AxiomC{}
	\UnaryInfC{$\rand\conv\{\underline{0}^{\frac12},\underline{1}^{\frac12}\}$}
	\DisplayProof
	\qquad
	\AxiomC{$\tone\conv\done$}
	\AxiomC{$\ttwo\conv\dtwo$}
	\AxiomC{$\{\tthree\subs{\vone}{\varone}\conv\dthree_{\tthree,\vone}\}_{\lambda\varone.\tthree,\vone}$}
	\TrinaryInfC{$\tone\ttwo\conv\sum_{\lambda\varone.\tthree,\vone}\done(\lambda\varone.\tthree)\cdot\dtwo(\vone)\cdot\dthree_{\tthree,\vone}$}
	\DisplayProof
	$$
	\vspace{6pt}
	$$
	\AxiomC{$\tone\conv\{(\strp{\strone_i})^{\ppone_i}\}$}
	\UnaryInfC{$\conzero(\tone)\conv\{(\strp{\zero\strone_i})^{\ppone_i}\}$}
	\DisplayProof
	\qquad
	\AxiomC{$\tone\conv\{(\strp{\strone_i})^{\ppone_i}\}$}
	\UnaryInfC{$\conone(\tone)\conv\{(\strp{\one\strone_i})^{\ppone_i}\}$}
	\DisplayProof
	\qquad
	\AxiomC{$\tone\conv\done$}
	\UnaryInfC{$\tail(\tone)\conv\{(\strp{\strone_i})^{\done(\tailset{\strp{\strone_i}})}\}$}
	\DisplayProof
	$$
	\vspace{6pt}
	$$
	\AxiomC{$\tone\conv\done\qquad\tone_0\conv\done_0\qquad\tone_1\conv\done_1\qquad\tone_\epsilon\conv\done_\epsilon$}
	\UnaryInfC{$\ccase{\typeone}{\tone}{\tone_0}{\tone_1}{\tone_\epsilon}\conv\sum_{\strp\strone}\done(\strp{\zero\strone})\cdot\done_0+\sum_{\strp\strone}\done(\strp{\one\strone})\cdot\done_0+\done(\strp\emptys)\cdot\done_\epsilon$}
	\DisplayProof
	$$
	\vspace{6pt}
	$$
	\AxiomC{$\tone\conv\done$}
	\noLine
	\UnaryInfC{$\tone_\epsilon\conv\done_{\strp\emptys}$}
	\AxiomC{$\{(\tone_0\strp\strone)(\rrec{\typeone}{\strp\strtwo}{\tone_0}{\tone_1}{\tone_\epsilon})\conv\done_{\strp\strone}\}_{\strp\strone=\strp{\zero\strtwo}}$}
	\noLine
	\UnaryInfC{$\{(\tone_1\strp\strone)(\rrec{\typeone}{\strp\strtwo}{\tone_0}{\tone_1}{\tone_\epsilon})\conv\done_{\strp\strone}\}_{\strp\strone=\strp{\one\strtwo}}$}
	\BinaryInfC{$\rrec{\typeone}{\tone}{\tone_0}{\tone_1}{\tone_\epsilon}\conv\sum_{\strp\strone}\done(\strp\strone)\done_{\strp\strone}$}
	\DisplayProof
	$$
	}
	\end{minipage}
}
\caption{Big-step Semantics}\label{fig:bigstepsem} 
\end{figure}
Noticeably:
\begin{lemma}
For every term $\tone\in\terms^\typeone_\emtcont$ there is 
a unique distribution $\done$ such that $\tone\conv\done$,
which we denote as $\sem{\tone}$. Moreover, If $\vone\in\supp(\done)$,
then $\vone\in\values^\typeone_\emtcont$.
\end{lemma}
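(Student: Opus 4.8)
The plan is to prove the three assertions — existence of $\done$, its uniqueness, and $\supp(\done)\subseteq\values^\typeone_\emtcont$ — more or less separately, since only existence really uses a global property of \RSLR. \emph{Uniqueness.} Suppose $\tone\conv\done_1$ and $\tone\conv\done_2$ via derivations $\pi_1$ and $\pi_2$. A derivation is a finite tree, so I would induct on the structure of $\pi_1$. The rules of Figure~\ref{fig:bigstepsem} are syntax-directed: the outermost shape of $\tone$ — a value, $\rand$, $\conzero(\cdot)$, $\conone(\cdot)$, $\tail(\cdot)$, an application, a case, or a recursion — determines which rule can sit at the root of a derivation of $\tone\conv\cdot$, hence $\pi_1$ and $\pi_2$ end with the same rule and its immediate premises concern the same subterms. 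For the rules with a single layer of premises the induction hypothesis applies to those premises directly; for the application and recursion rules one first applies it to the shared premises $\tone\conv\done$, $\ttwo\conv\dtwo$ (resp. $\tone\conv\done$), which forces the index set of the family $\{\tthree\subs\vone\varone\conv\dthree_{\tthree,\vone}\}$ (resp. of the recursion family) to be the same in $\pi_1$ and $\pi_2$, and then one applies it to each member of that family. In every case $\done_1$ and $\done_2$ are computed by the \emph{same} expression from identical subdistributions, so $\done_1=\done_2$. This step uses neither typing nor termination.

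\emph{Existence.} This is the only delicate point, and it cannot be handled by induction on $\tone$ alone, because in the application rule the premise $\tthree\subs\vone\varone\conv\dthree_{\tthree,\vone}$ does not concern a subterm of $\tone\ttwo$. I would introduce the evident call-by-value small-step semantics (a $\beta$-rule, the reduction rules displayed informally above, and closure under evaluation contexts; the sole source of branching being the binary one at $\rand$) and prove the two standard facts for closed typed terms: subject reduction, which rests on a substitution lemma (needed for the $\beta$-step and for the unfolding of a recursion), and progress, i.e. a closed typed term is either a value or reduces. Since \RSLR\ is polytime sound, all its closed typed terms terminate (this is part of the results of~\cite{DalLagoParisenToldin}); by K\"onig's lemma the reduction tree of any $\tone\in\terms^\typeone_\emtcont$ is then finite, and reading off its leaves, each weighted by the product of the $\frac12$'s along its branch, gives a finite distribution $\done$ which a routine induction on the tree turns into a derivation of $\tone\conv\done$. (One could instead prove existence directly by well-founded induction on a termination measure that charges the length of its guiding string to a recursion and a structural size to everything else; either way, this termination ingredient is the real obstacle, the rest being bookkeeping.)

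\emph{The support.} Knowing that a finite derivation of $\tone\conv\done$ exists, I would show $\supp(\done)\subseteq\values^\typeone_\emtcont$ by induction on it, using two routine auxiliary lemmas: an inversion lemma for the typing rules of Figure~\ref{fig_typingrules} that copes with the subsumption rule — so that, e.g., a closed value of an arrow type is an abstraction whose body is typed appropriately, and a closed value of type $\typestring$ is some $\strp\strone$ — and the substitution lemma already used for existence. In the application case, for instance, $\tone\ttwo:\typetwo$ comes from $\tone:\aone\typeone'\to\typetwo$ and $\ttwo:\typeone'$; the induction hypothesis on the premises $\tone\conv\done$, $\ttwo\conv\dtwo$ gives $\supp(\done)\subseteq\values^{\aone\typeone'\to\typetwo}_{\emtcont}$ and $\supp(\dtwo)\subseteq\values^{\typeone'}_{\emtcont}$; inversion and substitution then yield $\proves\tthree\subs\vone\varone:\typetwo$ for each relevant $\lambda\varone.\tthree$ and $\vone$; the induction hypothesis on the premise $\tthree\subs\vone\varone\conv\dthree_{\tthree,\vone}$ gives $\supp(\dthree_{\tthree,\vone})\subseteq\values^\typetwo_{\emtcont}$; and the support of the resulting weighted sum lies in the union of these. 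The constructor, case and recursion cases are analogous, the recursion one applying substitution to the step terms. The only mildly tedious point throughout is the handling of aspects and of the context-splitting in the typing rules.
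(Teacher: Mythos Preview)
Your proposal is correct and, in fact, more careful than the paper's own argument. The paper proceeds by ``induction on the structure of $\tone$'': for uniqueness it assumes two judgments $\tone\conv\done$ and $\tone\conv\done'$ and compares premises case by case; existence is never isolated as a separate obligation; and the support claim is dispatched in one sentence by invoking type preservation under reduction. As you rightly point out, structural induction on $\tone$ does not license the inductive call on $\tthree\subs\vone\varone$ in the application case (nor on the unfolded body in the recursion case), so what the paper is really doing for uniqueness is the induction on the derivation that you make explicit, and it silently presupposes that some derivation exists in order to have a case analysis at all. Your decomposition --- uniqueness by induction on $\pi_1$, existence via small-step termination plus K\"onig, support via inversion and substitution --- separates these concerns and names the termination ingredient that the paper leaves implicit. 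The paper's version buys brevity; yours buys rigour, and in particular makes clear that existence is the only place where polytime soundness (or any global termination property of \RSLR) is actually needed.
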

\begin{proof}
We proceed by induction on the structure of $\tone$.
\begin{itemize}
\item
	If we have a value $\vone$, then by the rules it converge to $\{\vone^1\}$.
\item
	Similarly if we have a term $\rand$ the only distribution it can converge is $\{\strp\zero^{\frac12},\strp\one^{\frac12}\}$.
\item
	Suppose now to have $\tone_1\tone_2$, and suppose $\tone_1\tone_2\conv\done, \tone_1,\tone_2\conv\done'$.\\
	By construction we have: 
	$$\done=\sum_{\lambda\varone.\tone,\vone}\done_1(\lambda\varone.\tone)\cdot\done_2(\vone)\cdot\done_{\tone,\vone}\qquad\done'=\sum_{\lambda\varone.\tone',\vone'}\done_1'(\lambda\varone.\tone')\cdot\done_2'(\vone')\cdot\done_{\tone',\vone'}'$$
	But, by induction hypothesis we have $\done_1=\done_1',\done_2=\done_2'$ and so also $\done_{\tone,\vone}=\done_{\tone',\vone'}'$ and this means $\done=\done'$
\item
	All the other cases are similar. 
\end{itemize}
The second point comes from the fact that, given a term $\tone$ such that $\proves\tone:\typeone$, if it reduces to $\tone_1,...,\tone_n$, we have that $\proves\tone_i:\typeone$. This is proved by induction on the type derivation. So by combinig the fact that the type is preserved by reduction and the uniqueness of $\done$ we have that for all $\vone\in\supp(\done), \proves\vone:\typeone$.
\end{proof}

\newcommand{\pfone}{F} 
A \emph{probabilistic function} on $\{0,1\}^*$ is a function $\pfone$
from $\{0,1\}^*$ to $\distrset{\{0,1\}^*}$. A term
$\tone\in\terms^{\aone\typestring\to\typestring}_{\emtcont}$ is said
to \emph{compute} $\pfone$ iff for every string $\strp\strone\in\{0,1\}^*$ it
holds that $\tone\strp\strone\conv\distrone$ where
$\distrone(\strp\strtwo)=\pfone(\strp\strone)(\strp\strtwo)$ for every
$\strp\strtwo\in\{0,1\}^*$.  What makes \RSLR\ very interesting, however, is that
it precisely captures those probabilistic functions which can be
computed in polynomial time (see, e.g., \cite{LagoZG14} for a
definition):
\begin{theorem}[Polytime Completeness]
The set of probabilistic functions which can be computed
by \RSLR\ terms coincides with the polytime computable ones.
\end{theorem}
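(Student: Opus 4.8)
The plan is to prove the two inclusions separately: \emph{soundness} ($\RSLR$ terms compute only polytime probabilistic functions) and \emph{completeness} ($\RSLR$ terms compute all of them). Throughout, by a polytime probabilistic function I mean one computed by a probabilistic Turing machine that, on every input $x$ and every sequence of coin tosses, halts within $q(|x|)$ steps for some fixed polynomial $q$, the output distribution being the one induced by the fair coin.

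For soundness, fix a closed $\tone$ with $\proves\tone:\aone\typestring\to\typestring$. I would first replace the big-step system of Figure~\ref{fig:bigstepsem} by an equivalent small-step, call-by-value semantics in which $\rand$ reduces in one labelled step to $\strp{0}$ or $\strp{1}$; the equivalence with $\conv$ is a routine adequacy/confluence argument. Then I would define a \emph{weight} $\myabs{\tone}$ on terms, tailored so that: (i) $\beta$-steps and the string destructors/constructors increase the weight by at most a constant; (ii) firing a recursion $\rrec{\typeone}{\strp\strone}{\tone_0}{\tone_1}{\tone_\epsilon}$ unfolds into a term whose weight growth is governed \emph{only} by the guiding string $\strp\strone$ and is insensitive to the step-functions $\tone_0,\tone_1$; and (iii) from such a weight one extracts simultaneous polynomial bounds $p_\tone(\myabs{\strone})$ on the length of every reduction path from $\tone\strp\strone$ and on the size of every term reachable along it. The two restrictions imposed by the type system — higher-order variables occurring at most once and outside any recursion, and step-functions being $\modal$-free — are exactly what rule out the two classical sources of super-polynomial blowup, namely duplication of higher-order redexes and stacked recursions feeding their output back as a recursion argument. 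This is Hofmann-style, and probabilistic choice is orthogonal because the bounds are uniform over all coin outcomes. Given the bounds, a probabilistic Turing machine simulates leftmost reduction, tossing a fair coin at each $\rand$-redex; it runs in time polynomial in $\myabs{\strone}$, and its output distribution is exactly $\sem{\tone\strp\strone}$ since each $\rand$-step contributes the factor $\tfrac12$ matched by the corresponding $\conv$-rule while all other rules are deterministic.

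For completeness, I would first assemble a toolkit of $\RSLR$-definable string functions: the constructors, $\tail$, concatenation, truncation, and conditionals via $\ccase{\typestring}{\tone}{\tone_0}{\tone_1}{\tone_\epsilon}$; and, crucially, a term mapping $x$ to $\strp{1^{q(\myabs{x})}}$, obtained by the ramified-recursion encoding of polynomials (a $\rrec$ driven by $x$ performs $\myabs{x}$ iterations, and the $\modal$-free discipline still permits enough nesting to realise any fixed polynomial length, as in Bellantoni--Cook and Hofmann). Next, encode a configuration of the reference machine $M$ as a single string, and encode its one-step transition — including the coin toss, via $\rand$ — as a term $\mathit{step}$ of the type $\modal\typestring\to\nonmodal\typestring\to\typestring$ demanded of step-functions (its result type $\typestring$ being $\modal$-free, as the recursion rule requires). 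Iterating $\mathit{step}$ exactly $q(\myabs{x})$ times by a $\rrec$ driven by $\strp{1^{q(\myabs{x})}}$, starting from the initial configuration built from $x$ and then decoding the output tape, gives an $\RSLR$ term computing the same probabilistic function as $M$; the probabilities match because each $\rand$ inside $\mathit{step}$ is fair and the $\conv$-rule for $\rrec$ multiplies branch probabilities precisely as $M$ does over its coin sequences.

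The main obstacle is the soundness direction, and within it the precise choice of the weight measure together with its invariant: one needs a quantity that behaves well (super-additively) under substitution, is stable under $\beta$ and the string operations, and whose growth under the recursion rule is dictated purely by the first, base-type argument while remaining insensitive — thanks to $\modal$-freeness and the linear treatment of higher-order variables — to the step-functions, and one then has to convert a bound on this weight into the two polynomial bounds above. Fixing this invariant is the technical heart; everything else (adequacy of the small-step semantics, the probabilistic simulation, the string toolkit, and the polynomial-clock construction) is routine once the framework is in place.
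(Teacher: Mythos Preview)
Your proposal is a reasonable and essentially correct sketch of the standard combinatorial proof of this result. However, you should be aware that the paper itself does not prove this theorem at all: it simply states that the result is well-known, cites \cite{Zhang10,DalLagoParisenToldin}, and remarks that it ``can be proved in various ways, e.g.\ combinatorially or categorically.'' So there is no proof in the paper to compare against; you have supplied what the paper deliberately omits.

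What you outline is precisely the combinatorial route the paper alludes to, and it matches the structure of the arguments in the cited works. The soundness direction via a Hofmann-style weight/potential that is stable under substitution and whose growth under $\mathsf{rec}$ is controlled by the base-type guard (exploiting $\modal$-freeness and the linear treatment of higher-order variables) is the standard technique; the completeness direction via a polynomial clock $1^{q(|x|)}$ driving an iterated one-step transition with $\rand$ supplying the coin is likewise standard. Your identification of the weight invariant as the technical crux is accurate.
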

This result is well-known~\cite{Zhang10,DalLagoParisenToldin}, and can be proved in various
ways, e.g. combinatorially or categorically.

We conclude this section by giving two \RSLR\ programs. Both of them
receive a string in input. The first one returns the same string.
The second one, instead, produces a random string and compare it to the
one received in input; if they are different it returns the same
string, otherwise it returns the opposite.
\begin{align*}
	\tone:=\lambda\varone:\modal\typestring.\varone
	\qquad
	\ttwo:=\lambda\varone:\modal\typestring.\ccase{\typestring}{\varone=(\rbg\ \varone)}{\varone}{\neg\varone}{\neg\varone}
\end{align*}
Where:
\begin{align*}
	&
		\rbg:=\lambda\vartwo:\modal\typestring.\rrec{\typestring}{\vartwo}{\rfun\rbg}{\rfun\rbg}{\strp\emptys}
		\qquad
		\rfun\rbg:=\lambda\varfour:\modal\typestring.\lambda\varthree:\nonmodal\typestring.\ccase{\typestring}{\rand}{\conzero(\varthree)}{\conone(\varthree)}{\strp\emptys}&
\end{align*}
Notice that, even if we haven't defined = and $\neg$, they are easily implementable in \RSLR.\\\\
We give now a simple example of how the big step semantics of a $\RSLR$ term is evaluated; we observe the term $\rbg$ applied to a string $\strp{\zero\one}$.
\begin{align*}
\sem{\rbg\ \strp{\zero\one}} = &
	\sem{\rbg}(\lambda\vartwo.\rrec{\typestring}{\vartwo}{\rfun\rbg}{\rfun\rbg}{\strp\emptys})\cdot\sem{\strp{\zero\one}}(\strp{\zero\one})\cdot\sem{\rrec{\typestring}{\strp{\zero\one}}{\rfun\rbg}{\rfun\rbg}{\strp\emptys}}\\
	= &
	1\cdot1\cdot\sem{\rrec{\typestring}{\strp{\zero\one}}{\rfun\rbg}{\rfun\rbg}{\strp\emptys}} = \\
	= &
	\sem{\strp{\zero\one}}(\strp{\zero\one})\cdot\sem{(\rfun\rbg\strp{\zero\one})(\rrec{\typestring}{\strp\one}{\rfun\rbg}{\rfun\rbg}{\strp\emptys})} = \sem{(\rfun\rbg\strp{\zero\one})(\rrec{\typestring}{\strp\one}{\rfun\rbg}{\rfun\rbg}{\strp\emptys})}
\end{align*}
We can easily say that $\sem{\rfun\rbg\strp{\zero\one}}=\sem{\rfun\rbg\subs{\strp{\zero\one}}{\varfour}}=\{(\lambda\varthree.\ccase{\typestring}{\rand}{\conzero(\varthree)}{\conone(\varthree)}{\strp\emptys})^1\}$.\\
Furthermore we have:
\begin{align*}
\sem{\rrec{\typestring}{\strp\one}{\rfun\rbg}{\rfun\rbg}{\strp\emptys}} = &
	\sem{\strp\one}(\strp\one)\cdot\sem{(\rfun\rbg\strp\one)(\rrec{\typestring}{\strp\emptys}{\rfun\rbg}{\rfun\rbg}{\strp\emptys})}
\end{align*}
So, by the fact that $\sem{\rrec{\typestring}{\strp\emptys}{\rfun\rbg}{\rfun\rbg}{\strp\emptys}}=\{\emptys^1\}$ we have:
\begin{align*}
\sem{\rrec{\typestring}{\strp\one}{\rfun\rbg}{\rfun\rbg}{\strp\emptys}} = &
	\sem{\ccase{\typestring}{\rand}{\conzero(\strp\emptys)}{\conone(\strp\emptys)}{\strp\emptys}} = \sem{\rand}(\strp\zero)\cdot\sem{\conzero(\strp\emptys)}+\sem{\rand}(\strp\one)\cdot\sem{\conone(\strp\emptys)} = \{\strp\zero^{\frac12},\strp\one^{\frac12}\}
\end{align*}
So, by substituting we have:
\begin{align*}
\sem{\rbg\ \strp{\zero\one}} = &
	\sem{(\rfun\rbg\strp{\zero\one})(\rrec{\typestring}{\strp\one}{\rfun\rbg}{\rfun\rbg}{\strp\emptys})} =\\
	= &
	\frac12\cdot\sem{\ccase{\typestring}{\rand}{\conzero(\strp\zero)}{\conone(\strp\zero)}{\strp\emptys}}+\frac12\cdot\sem{\ccase{\typestring}{\rand}{\conzero(\strp\one)}{\conone(\strp\one)}{\strp\emptys}} =\\
	= &
	\frac12\cdot\{\strp{\zero\zero}^{\frac12},\strp{\one\zero}^{\frac12}\} + \frac12\cdot\{\strp{\zero\one}^{\frac12},\strp{\one\one}^{\frac12}\} =\\
	= &
	\{\strp{\zero\zero}^{\frac14},\strp{\zero\one}^{\frac14},\strp{\one\zero}^{\frac14},\strp{\one\one}^{\frac14}\}
\end{align*}
\section{Equivalences}\label{sect:equivalences}
Intuitively, we can say that two programs are equivalent if no one can
distinguish them by observing their external, visible, behavior. A
formalization of this intuition usually takes the form of context
equivalence. A
\emph{context} is a term in which the \emph{hole} $\hole$ 
occurs at most once. Formally, contexts are defined by the following grammar:
{\footnotesize
\begin{align*}
\contone\bnf&\ \tone\midd\hole\midd\lambda\varone.\contone\midd\contone\tone\midd\tone\contone\midd 
		 \midd\conzero(\contone)\midd\conone(\contone)\midd\tail(\contone)\\
            & \midd\ccase{\typeone}{\contone}{\tone}{\tone}{\tone}\midd\ccase{\typeone}{\tone}{\contone}{\contone}{\contone}\midd
             \rrec{\typeone}{\contone}{\tone}{\tone}{\tone}.
\end{align*}}
If the grammar above is extended as follows
$
\contone\bnf \rrec{\typeone}{\tone}{\contone}{\tone}{\tone}\midd
              \rrec{\typeone}{\tone}{\tone}{\contone}{\tone}\midd \rrec{\typeone}{\tone}{\tone}{\tone}{\contone},
$ 
what we get is a \emph{nonlinear} context.
What the above definition already tells us is that our emphasis in
this paper will be on \emph{linear} contexts, which are contexts whose
holes lie outside the scope of any recursion operator. Given a term
$\tone$ we define $\contone[\tone]$ as the term obtained by
substituting the occurrence of $\hole$ in $\contone$ (if any) with
$\tone$. We only consider non-binding contexts here, i.e. contexts are
meant to be filled with closed terms. In other words, the type system from
Section~\ref{sect:characterizing} can be turned into one for contexts
whose judgments take the form
$\tcontone\proves\contone[\proves\typeone]:\typetwo$, which means that
for every closed term $\tone$ of type $\typeone$, it holds that
$\tcontone\proves\contone[\tone]:\typetwo$. See Figure~\ref{fig:contexttypyngrules} for
  details.
\begin{figure}[!ht]
	\centering
	\fbox{
		\begin{minipage}{.95\textwidth}
		\footnotesize{
		$$
		\AxiomC{$\tcontone\proves\tone:\typeone$}
		\UnaryInfC{$\tcontone\proves\tone[\emptyset]:\typeone$}
		\DisplayProof
		\qquad
		\AxiomC{}
		\UnaryInfC{$\proves[\proves\typeone]:\typeone$}
		\DisplayProof
		\qquad
		\AxiomC{$\tcontone\proves\contone[\proves\typeone]:\typestring$}
		\UnaryInfC{$\tcontone\proves\conzero(\contone[\proves\typeone]),\conone(\contone[\proves\typeone]),\tail(\contone[\proves\typeone]):\typestring$}
		\DisplayProof
		$$
		\vspace{6pt}
		$$
		\AxiomC{$\varone:\atwo\typetwo,\tcontone\proves\contone[\proves\typeone]:\typethree$}
		\UnaryInfC{$\tcontone\proves\lambda\varone.\contone[\proves\typeone]:\atwo\typetwo\to\typethree$}
		\DisplayProof
		\qquad
		\AxiomC{$\begin{array}{cc}
		\tcontone;\tconttwo_1\proves\contone[\proves\typeone]:\atwo\typetwo\to\typethree &\\
		\tcontone;\tconttwo_2\proves\tone:\typetwo & \tcontone,\tconttwo_2<:\atwo
		\end{array}$}
		\UnaryInfC{$\tcontone;\tconttwo_1,\tconttwo_2\proves\contone\tone[\proves\typeone]:\typethree$}
		\DisplayProof
		$$
		\vspace{6pt}
		$$
		\AxiomC{$\begin{array}{cc}
		\tcontone;\tconttwo_1\proves\tone:\atwo\typetwo\to\typethree &\\
		\tcontone;\tconttwo_2\proves\contone[\proves\typeone]:\typetwo & \tcontone,\tconttwo_2<:\atwo
		\end{array}$}
		\UnaryInfC{$\tcontone;\tconttwo_1,\tconttwo_2\proves\tone\contone[\proves\typeone]:\typethree$}
		\DisplayProof
		\qquad
		\AxiomC{$\begin{array}{cc}
		\tcontone;\tconttwo_1\proves\contone[\proves\typeone]:\typestring & \tcontone;\tcontone_3\proves\tone_1:\typetwo \\
		\tcontone;\tconttwo_2\proves\tone_0:\typetwo & \tcontone;\tconttwo_4\proves\tone_\epsilon:\typetwo
		\end{array}$}
		\UnaryInfC{$\tcontone;\tconttwo_1,\tconttwo_2,\tconttwo_3,\tconttwo_4\proves\ccase{\typetwo}{\contone}{\tone_0}{\tone_1}{\tone_\epsilon}[\proves\typeone]:\typetwo$}
		\DisplayProof
		$$
		\vspace{6pt}
		$$
		\AxiomC{$\begin{array}{cc}
		\tcontone;\tconttwo_1\proves\tone:\typestring & \tcontone;\tconttwo_3\proves\contone_1[\proves\typeone]:\typetwo \\
		\tcontone;\tconttwo_2\proves\contone_0[\proves\typeone]:\typetwo & \tcontone;\tconttwo_4\proves\contone_\epsilon[\proves\typeone]:\typetwo
		\end{array}$}
		\UnaryInfC{$\tcontone;\tconttwo_1,\tconttwo_2,\tconttwo_3,\tconttwo_4\proves\ccase{\typetwo}{\tone}{\contone_0}{\contone_1}{\contone_\epsilon}[\proves\typeone]:\typetwo$}
		\DisplayProof
		$$
		\vspace{6pt}
		$$
		\AxiomC{$\begin{array}{cc}
		\tcontone_1;\tconttwo_1\proves\contone[\proves\typeone]:\typestring & \tcontone_1,\tcontone_2;\tcontone_3;\tconttwo_2\proves\tone_\epsilon:\typetwo\\
		\tcontone_1,\tcontone_2\proves\tone_0:\modal\typestring\to\nonmodal\typetwo\to\typetwo & \tcontone_1,\tconttwo_1<:\modal\\
		\tcontone_1,\tcontone_3\proves\tone_1:\modal\typestring\to\nonmodal\typetwo\to\typetwo & \typetwo\ \modal\text{-free}
		\end{array}$}
		\UnaryInfC{$\tcontone_1,\tcontone_2,\tcontone_3;\tconttwo_1,\tconttwo_2\proves\rrec{\typetwo}{\contone}{\tone_0}{\tone_1}{\tone_\epsilon}[\proves\typeone]:\typetwo$}
		\DisplayProof
		$$
		}
		\end{minipage}
	}
	\caption{Context Typing Rules}\label{fig:contexttypyngrules}
\end{figure}
\newcommand{\lceq}{\equiv}
\newcommand{\ceq}{\equiv_{\neg\ell}}
Now that the notion of a context has been properly defined, one can
finally give the central notion of equivalence in this paper.
\begin{definition}[Context Equivalence]\label{def:ceq}
  Given two terms $\tone,\ttwo$ such that
  $\proves\tone,\ttwo:\typeone$, we say that $\tone$ and $\ttwo$ are
  \emph{context equivalent} iff for every context $\contone$ such that
  $\proves\contone[\proves\typeone]:\typestring$ we have that
  $\sem{\contone[\tone]}(\strp\emptys)=\sem{\contone[\ttwo]}(\strp\emptys)$.
\end{definition}
The way we defined it means that context equivalence is a family of
relations $\{\lceq_\typeone\}_{\typeone\in\types}$ indexed by types,
which we denote as $\lceq$. If in Definition~\ref{def:ceq}
\emph{nonlinear} contexts replace contexts, we get a finer relation,
called \emph{nonlinear context equivalence}, which we denote as
$\ceq$. Both context equivalence and nonlinear context equivalence are
easily proved to be congruences, i.e. compatible equivalence relations.

\subsection{Trace Equivalence}\label{sect:traceequivalence}
In this section we introduce a notion of \textit{trace equivalence}
for \RSLR, and we show that it characterizes context equivalence.

We define a \textit{trace} as a sequence of actions
$\labone_1\cdot\labone_2\cdot\ldots\cdot\labone_n$ such that
$\labone_i\in\{\pass\vone,\view{\strp\strone}\
|\ \vone\in\values,\strp\strone\in\values^\typestring\}$. Traces are
indicated with metavariables like $\trone,\trtwo$.
The \emph{compatibility} of a trace $\trone$ with a type $\typeone$ is
defined inductively on the structure of $\typeone$. If
$\typeone=\typestring$ then the only trace compatible with $\typeone$
is $\trone=\view{\strp\strone}$, with $\strp\strone\in\values^\typestring$,
otherwise, if $\typeone=\atwo\typetwo\to\typethree$ then traces
compatible with $\typeone$ are in the form
$\trone=\pass{\vone}\cdot\trtwo$ with $\vone\in\values^\typetwo$ and
$\trtwo$ is itself compatible with $\typethree$. With a slight abuse
of notation, we often assume traces to be compatible to the underlying
type.

\newcommand{\smallstepreal}[1]{\mapsto^{#1}}
Due to the probabilistic nature of our calculus, it is convenient to
work with \emph{term distributions}, i.e., distributions whose support
is the set of closed terms of a certain type $\typeone$, instead of
plain terms. We denote term distributions with metavariables like
$\tdone,\tdtwo,\ldots$. The effect traces have to distributions can
be formalized by giving some binary relations:
\begin{varitemize}
\item
  First of all, we need a binary relation on term distributions,
  called $\pairconv$. Intuitively, $\tdone\pairconv\tdtwo$ iff
  $\tdone$ evolves to $\tdtwo$ by performing internal moves,
  only. Furthermore, we use $\onestep$ to indicate a single internal
  move.
\item 
  We also need a binary relation $\smallstep{\cdot}$ between
  term distributions, which is however labeled by a trace, and
  which models internal \emph{and external} reduction.\
\item
  Finally, we need a labeled relation $\smallstepreal{\cdot}$
  between distributions and \emph{real numbers}, which
  captures the probability that distributions accept
  traces.
\end{varitemize}
The three relations are defined inductively by the rules in
Figure~\ref{fig:tdsmallstep}.
\begin{figure}[!ht]
\centering \fbox{
\begin{minipage}{.97\textwidth}
    $$
    \AxiomC{}
    \UnaryInfC{$\tdone\smallstep\emptyt\tdone$}
    \DisplayProof
    \qquad
    \AxiomC{$\tdone\smallstep\trtwo\{(\lambda\varone.\tone_i)^{\ppone_i}\}$}
    \UnaryInfC{$\tdone\smallstep{\trtwo\cdot\pass\vone}\{(\tone_i\subs{\vone}{\varone})^{\ppone_i}\}$}
    \DisplayProof
    \qquad
    \AxiomC{$\tdone\smallstep\trtwo\tdtwo$}
    \AxiomC{$\tdtwo\pairconv\tdthree$}
    \BinaryInfC{$\tdone\smallstep\trtwo\tdthree$}
    \DisplayProof
    $$
    $$
    \AxiomC{$\tdone\smallstep\trtwo\{(\strp{\strone_i})^{\ppone_i}\}$}
    \UnaryInfC{$\tdone\smallstepreal{\trtwo\cdot\view{\strp\strone}}\sum_{\strp{\strone_i}=\strp\strone}\ppone_i$}
    \DisplayProof
    \qquad  
    \AxiomC{$\tone\onestep\{(\tone_i)^{\ppone_i}\}$}
    \UnaryInfC{$\tdone+\{(\tone)^\ppone\}\pairconv\tdone+\{(\tone_i)^{\ppone\cdot\ppone_i}\}$}
    \DisplayProof
    $$
\end{minipage}}\caption{Term Distribution Small-Step Rules}\label{fig:tdsmallstep}
\end{figure}
The following gives basic, easy, results about the relations we
have introduced:
\begin{lemma}
Let $\tdone$ be a term distribution for the type $\typeone$. Then, there is
a unique value distribution $\done$ such that $\tdone\pairconv^*\done$. As a
consequence, for every trace $\trone$ compatible for $\typeone$
there is a unique real number $\ppone$ such that
$\tdone\smallstepreal{\trone}\ppone$. This real number is
denoted as $\Pr(\tdone,\trone)$.
\end{lemma}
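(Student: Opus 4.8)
The plan is to establish the two statements in order: the first by leveraging the uniqueness of big-step evaluation, and the second by induction on the shape of the compatible trace.

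For the first statement, I would extend the interpretation to term distributions by setting $\llbracket\tdone\rrbracket:=\sum_{\tone\in\supp(\tdone)}\tdone(\tone)\cdot\sem{\tone}$; by Lemma~1 each $\sem{\tone}$ is a value distribution supported on $\values^\typeone_\emtcont$, and since $\tdone$ is finitely supported $\llbracket\tdone\rrbracket$ is again such a value distribution. Existence of a value distribution reached by $\pairconv^*$ is then concrete: the correspondence between big-step evaluation and iterated single steps gives $\{\tone^1\}\pairconv^*\sem{\tone}$ for each $\tone$, and applying these reductions in turn to the (finitely many) elements of $\supp(\tdone)$ — legitimate because $\pairconv$ acts on one summand at a time — yields $\tdone\pairconv^*\llbracket\tdone\rrbracket$. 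For uniqueness I would check that a single internal move preserves the interpretation: from $\tone\onestep\{\tone_i^{\ppone_i}\}$ one obtains $\sem{\tone}=\sum_i\ppone_i\cdot\sem{\tone_i}$ by cases on the fired redex and inversion on the big-step rules, hence $\tdtwo\pairconv\tdthree$ implies $\llbracket\tdtwo\rrbracket=\llbracket\tdthree\rrbracket$; since $\llbracket\done\rrbracket=\done$ whenever $\done$ is already a value distribution (and, by progress, a $\pairconv$-normal form is exactly a value distribution), any value distribution $\done$ with $\tdone\pairconv^*\done$ must equal $\llbracket\tdone\rrbracket$. So $\done$ exists and is forced to be $\llbracket\tdone\rrbracket$.

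For the second statement, recall that a trace compatible with $\typeone$ has the form $\pass{\vone_1}\cdot\ldots\cdot\pass{\vone_k}\cdot\view{\strp\strone}$, with the $\vone_j$ of the successive domain types of $\typeone$ and $\strp\strone\in\values^\typestring$; I would induct on $k$. Reading the rules of Figure~\ref{fig:tdsmallstep} backwards, any derivation of $\tdone\smallstepreal{\trone}\ppone$ must first drive $\tdone$ by internal moves to a $\pairconv$-normal form, which by the first statement is unique and equal to $\llbracket\tdone\rrbracket$, and which by Lemma~1 and the value grammar is a distribution of $\lambda$-abstractions when $\typeone$ is an arrow type and a distribution of strings when $\typeone=\typestring$. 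In the base case this normal form is $\{(\strp{\strone_i})^{\ppone_i}\}$ and the only applicable rule reads off $\ppone=\llbracket\tdone\rrbracket(\strp\strone)$; in the inductive case the only applicable rule is the $\pass{\vone_1}$ one, producing the uniquely determined term distribution $\{(\tone_i\subs{\vone_1}{\varone})^{\ppone_i}\}$ of the codomain type (well-typedness of the bodies following by inversion of the abstraction rule and the substitution lemma), to which the induction hypothesis applies for the residual trace. Existence is automatic, since by the first statement value distributions always exist and have exactly the shape — strings, resp. abstractions — that the rules demand; hence there is exactly one derivable $\ppone$, which we name $\Pr(\tdone,\trone)$.

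The step I expect to be most delicate is making the small-step/big-step bridge precise, i.e. pinning down the single-step relation $\onestep$ and verifying both $\{\tone^1\}\pairconv^*\sem{\tone}$ and the invariance of $\llbracket-\rrbracket$ under $\pairconv$, together with the observation needed throughout the second part that a derivation of $\tdone\smallstepreal{\trone}\ppone$ is forced into the canonical alternating shape ``normalize, pass, normalize, \ldots, pass, normalize, view'', so that uniqueness of each intermediate value distribution really does pin down $\ppone$. Everything else — progress, subject reduction for the substitution carried out by $\pass{-}$, and (should one prefer to argue existence via termination of $\pairconv$ rather than by the explicit construction) a well-founded multiset measure on the longest-reduction-lengths of the support — is routine.
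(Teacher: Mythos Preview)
Your proposal is correct, and the overall strategy matches the paper's: reduce the first claim to Lemma~1 on individual terms in the support, then prove the second by induction on the structure of the trace, using the first claim at each stage to pin down the intermediate value distribution.

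The genuine difference is in how you handle \emph{uniqueness} in the first part. The paper simply builds the candidate $\done=\tdone\setminus\{(\tone_j)^{\ppone_j}\}_{j\in J}+\sum_{j\in J}\ppone_j\cdot\done_j$ from the unique big-step semantics $\done_j=\sem{\tone_j}$ and declares $\tdone\pairconv^*\done$, leaving the fact that \emph{every} $\pairconv^*$-path to a value distribution must hit this same $\done$ implicit. You instead introduce the semantic invariant $\llbracket\tdone\rrbracket=\sum_{\tone}\tdone(\tone)\cdot\sem{\tone}$ and check that a single $\pairconv$-step preserves it; this is a cleaner and more honest uniqueness argument, since it does not silently rely on confluence of $\pairconv$. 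The cost is having to verify the small-step/big-step bridge $\sem{\tone}=\sum_i\ppone_i\cdot\sem{\tone_i}$ for each redex shape, which you correctly flag as the point requiring actual work. For the second statement your induction on the number of $\pass{-}$ actions is essentially the same as the paper's induction on the trace; your observation that the $\pass\vone$ rule can only fire on a value distribution (forcing the ``normalize, pass, \ldots'' alternation up to the order of internal steps) is exactly what makes the uniqueness of each intermediate stage usable, and the paper relies on the same fact without stating it.
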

\begin{proof}
Suppose that $\tdone$ is normal, i.e. all elements in the support are values, then we have $\tdone=\done$ and then the thesis.\\
If $\tdone$ is not normal then there exists a set of indexes $J$ such that $\{(\tone_j)^{\ppone_j}\}_{j\in J}\subseteq\tdone$ aren't values. We know by a previous lemma that for all $j\in J$ there exists a unique $\done_j$, value distribution, such that $\tone_j\conv\done_j$ in a finite number of steps.\\
So, if we set $\done=\tdone\setminus\{(\tone_j)^{\ppone_j}\}_{j\in J}+\sum_{j\in J}\ppone_j\cdot\done_j$ we have $\tdone\pairconv^*\done$ with $\done$ normal.\\\\
At this point we can say that for all $\tdone$ there exists $\tdone'$ normal such that $\tdone\smallstep\emptyt\tdone'$. So, given $\trone=\trtwo\cdot\pass\vone$ we have by induction hypothesis that $\tdone\smallstep\trtwo\{(\lambda\varone.\tone_i)^{\ppone_i}\}$. Then, by performing the action $\pass\vone$ we have $\tdone\smallstep{\trtwo\cdot\pass\vone}\{(\tone_i\subs\vone\varone)^{\ppone_i}\}$, but, by applying the previous point there exists $\tdone'$ normal such that $\{(\tone_i\subs\vone\varone)^{\ppone_i}\}\pairconv^*\tdone'$ and then by the small step rules we have $\tdone\smallstep{\trtwo\cdot\pass\vone}\tdone'$ normal distribution.\\
Suppose now that $\trone=\trtwo\cdot\view{\strp\strone}$ then we have by induction hypothesis $\tdone\smallstep\trtwo\tdone'=\{(\strp{\strone_i})^{\ppone_i}\}$ with $\tdone'$ unique; so, if we perform the action $\view{\strp\strone}$ we have $\tdone\smallstepreal{\trtwo\cdot\view{\strp\strone}}\ppone=\sum_{i;\strp{\strone_i}=\strp\strone}\ppone_i$, that is unique by construction.
\end{proof}
%

We are now ready to define what we mean by trace equivalence
\begin{definition}
  Given two term distributions $\tdone,\tdtwo$ we say that they are
  \emph{trace equivalent} (and we write $\tdone\treq\tdtwo$) if, for
  all traces $\trone$ it holds that
  $\Pr(\tdone,\trone)=\Pr(\tdtwo,\trone)$. In particular, then, two
  terms $\tone,\ttwo$ are trace equivalent when
  $\{\tone^1\}\treq\{\ttwo^1\}$ and we write $\tone\treq\ttwo$ in
  that case.
\end{definition}
The following states some basic properties about the reduction relations
we have just introduced. This will be useful in the following:
\begin{lemma} \textbf{\textsf{(Trace Equivalence Properties): }}
Suppose given two term distributions $\tdone,\tdtwo$ such that
$\tdone\treq\tdtwo$. Then:
\begin{varitemize}
\item
  If $\tdone\pairconv\tdone'$ then
  $\tdone'\treq\tdtwo$.
\item
  If $\tdone\smallstep{\pass\vone}\tdone'$ and
    $\tdtwo\smallstep{\pass\vone}\tdtwo'$ then $\tdone'\treq\tdtwo'$.
\item
  If $\tdone\smallstepreal{\view{\strp\strone}}\ppone$ then
   $\tdtwo\smallstepreal{\view{\strp\strone}}\ppone$.
\end{varitemize}
\end{lemma}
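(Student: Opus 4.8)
The plan is to reduce all three items to the uniqueness lemma proved just above, together with a single auxiliary \emph{prefixing} property of the labelled reduction relations. First I would show that $\smallstep{\cdot}$ and $\smallstepreal{\cdot}$ compose with silent and pass transitions on the left: if $\tdone\smallstep{\trone}\tdtwo$ and $\tdtwo\smallstep{\trtwo}\tdthree$, then $\tdone\smallstep{\trone\cdot\trtwo}\tdthree$, and similarly if the second reduction is of the form $\tdtwo\smallstepreal{\trtwo}\ppone$ then $\tdone\smallstepreal{\trone\cdot\trtwo}\ppone$. In particular, since $\tdone\pairconv\tdone'$ gives $\tdone\smallstep{\emptyt}\tdone'$ via the axiom and the $\pairconv$-closure rule, both $\tdone\pairconv\tdone'$ and $\tdone\smallstep{\pass\vone}\tdone'$ are instances of a left factor. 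These facts are proved by a routine induction on the derivation of the second reduction, inspecting the three rules of Figure~\ref{fig:tdsmallstep}: the axiom $\tdtwo\smallstep{\emptyt}\tdtwo$ is absorbed, the $\pass{\vone}$-rule and the $\pairconv$-closure rule follow from the induction hypothesis, and the single rule for $\smallstepreal{\cdot}$ reduces to the $\smallstep{\cdot}$ statement. Combining with uniqueness, I obtain that $\Pr(\cdot,\trone)$ is a well-defined function and that $\tdone\pairconv\tdone'$ implies $\Pr(\tdone,\trone)=\Pr(\tdone',\trone)$ for every trace $\trone$, while $\tdone\smallstep{\pass\vone}\tdone'$ implies $\Pr(\tdone,(\pass\vone)\cdot\trone)=\Pr(\tdone',\trone)$.

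Given this, the three items are short. The third is just unfolding the definition: $\tdone\smallstepreal{\view{\strp\strone}}\ppone$ says $\Pr(\tdone,\view{\strp\strone})=\ppone$, and $\tdone\treq\tdtwo$ then gives $\Pr(\tdtwo,\view{\strp\strone})=\ppone$, i.e. $\tdtwo\smallstepreal{\view{\strp\strone}}\ppone$. For the first item, pick an arbitrary trace $\trone$; then $\Pr(\tdone',\trone)=\Pr(\tdone,\trone)=\Pr(\tdtwo,\trone)$ by the auxiliary fact and $\tdone\treq\tdtwo$, so $\tdone'\treq\tdtwo$. For the second item, pick an arbitrary trace $\trone$ compatible with the common type of $\tdone'$ and $\tdtwo'$ — so that $(\pass\vone)\cdot\trone$ is compatible with the type of $\tdone$ and $\tdtwo$ — and compute $\Pr(\tdone',\trone)=\Pr(\tdone,(\pass\vone)\cdot\trone)$ and $\Pr(\tdtwo',\trone)=\Pr(\tdtwo,(\pass\vone)\cdot\trone)$; the two right-hand sides coincide because $\tdone\treq\tdtwo$, hence $\tdone'\treq\tdtwo'$.

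I expect the only genuine work to be the prefixing/composition property of the labelled relations; it is conceptually trivial but needs a careful case analysis on the rule shapes, and it is where one must be attentive to the interplay between the internal-move closure rule and the $\pass{\vone}$ rule. Everything else is bookkeeping, the mild point being the trace-compatibility side condition in the second item, which is immediate from the inductive definition of compatibility: prepending $\pass{\vone}$ to a trace compatible with $\typethree$ yields one compatible with $\atwo\typetwo\to\typethree$.
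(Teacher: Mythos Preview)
Your proposal is correct and is essentially the same approach as the paper's, which dispatches the lemma in a single sentence (``a simple application of the definition of trace equivalence''). You have simply made explicit the prefixing/composition property of $\smallstep{\cdot}$ and $\smallstepreal{\cdot}$ and the appeal to the uniqueness lemma that the paper leaves implicit; there is no substantive difference in strategy.
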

\begin{proof}
The proof is a simple application of the definition of trace equivalence.
\end{proof}

It is easy to prove that trace equivalence is an equivalence
relation. The next step, then, is to prove that trace equivalence is
compatible, thus paving the way to a proof of soundness w.r.t.
context equivalence. Unfortunately, the direct proof of 
compatibility (i.e., an induction on the structure of contexts) simply
does not work: the way the operational semantics is specified makes it
impossible to track how a term behaves \emph{in a
context}. Following~\cite{DengZhang}, we proceed by considering
a refined semantics, defined not on terms but on pairs whose first
component is a context and whose second component is a term 
distribution.
Formally, a \emph{context pair} has the form $(\contone,\tdone)$,
where $\contone$ is a context and $\tdone$ is a term distribution.
A \emph{(context) pair distribution} is a distribution over context pairs.  Such
a pair distribution $\pdone=\{(\contone_i,\tdone_i)^{\ppone_i}\}$ is
said to be \emph{normal} if for all $i$ ad for all $\tone$ in the
support of $\tdone_i$ we have that $\contone_i[\tone]$ is a value. 
We show how a pair $(\contone,\tdone)$ evolves following a trace
$\trtwo$ by giving a one-step reduction relation (denoted with
$\onestep$) and the small-step semantic described in the rules in
Figure \ref{fig:onestep} and Figure \ref{fig:smallstep}.
\begin{figure}[!ht]
\centering \fbox{
\begin{minipage}{.97\textwidth}
\footnotesize
{
    $$
    \AxiomC{$\tdone\onestep^{\pass\vone}\tdone'$}
    \UnaryInfC{$(\hole,\tdone)\onestep^{\pass\vone}\{(\hole,\tdone')^1\}$}
    \DisplayProof
    \qquad
    \AxiomC{}
    \UnaryInfC{$(\lambda\varone.\contone,\tdone)\to^{\pass\vone}\{(\contone\subs\vone\varone,\tdone)^1\}$}
    \DisplayProof
    $$

    $$
    \AxiomC{}
    \UnaryInfC{$(\strp\strone,\tdone)\smallstepreal{\view{\strp\strone}}1$}
    \DisplayProof
    \qquad
    \AxiomC{}
    \UnaryInfC{$(\strp{\strone'},\tdone)\smallstepreal{\view{\strp\strone}}0$}
    \DisplayProof
    \qquad
    \AxiomC{$\tdone\smallstepreal{\view{\strp\strone}}\ppone$}
    \UnaryInfC{$(\hole,\tdone)\smallstepreal{\view{\strp\strone}}\ppone $}
    \DisplayProof
    $$
    
    $$
    \AxiomC{$(\contone,\tdone)\smallstepreal{\view{\strp\strone}}\ppone$}
    \UnaryInfC{$(\conzero(\contone),\tdone)\smallstepreal{\view{\strp{\zero\strone}}}\ppone$}
    \DisplayProof
    \qquad
    \AxiomC{$(\contone,\tdone)\smallstepreal{\view{\strp\strone}}\ppone$}
    \UnaryInfC{$(\conone(\contone),\tdone)\smallstepreal{\view{\strp{\one\strone}}}\ppone$}
    \DisplayProof
    \qquad
    \AxiomC{$\begin{array}{cc}
    (\contone,\tdone)\smallstepreal{\view{\strp{\bit\strone}}}\ppone_\bit &
    \bit\in\{0,1\}
    \end{array}$}
    \UnaryInfC{$(\tail(\contone),\tdone)\smallstepreal{\view{\strp\strone}}\ppone_0+\ppone_1$}
    \DisplayProof
    $$

    $$
    \AxiomC{$\tone\to\{(\tone_i)^{\ppone_i}\}$}
    \UnaryInfC{$(\tone,\tdone)\to\{(\tone_i,\tdone_i)^{\ppone_i}\}$}
    \DisplayProof
    \qquad
    \AxiomC{$\tdone\onestep\tdone'$}
    \UnaryInfC{$(\hole,\tdone)\onestep\{(\hole,\tdone')^1\}$}
    \DisplayProof
    $$

    $$
    \AxiomC{$(\contone,\tdone)\onestep^{\pass\vone}\{(\contone',\tdone')^1\}$}
    \UnaryInfC{$(\contone\vone,\tdone)\to\{(\contone',\tdone')^1\}$}
    \DisplayProof
    $$

    $$
    \AxiomC{$(\contone,\tdone)\to\{(\contone_i,\tdone_i)^{\ppone_i}\}$}
    \UnaryInfC{$(\contone\tone,\tdone)\to\{(\contone_i\tone,\tdone_i)^{\ppone_i}\}$}
    \DisplayProof
    \qquad
    \AxiomC{$\tone\to\{(\tone_i)^{\ppone_i}\}$}
    \AxiomC{$(\contone,\tdone)$ value}
    \BinaryInfC{$(\contone\tone,\tdone)\to\{(\contone\tone_i,\tdone_i)^{\ppone_i}\}$}
    \DisplayProof
    $$

    $$
    \AxiomC{$(\contone,\tdone)\to\{(\contone_i,\tdone_i)^{\ppone_i}\}$}
    \UnaryInfC{$(\vone\contone,\tdone)\to\{(\vone\contone_i,\tdone_i)^{\ppone_i}\}$}
    \DisplayProof
    \qquad
    \AxiomC{$\tone\to\{(\tone_i)^{\ppone_i}\}$}
    \UnaryInfC{$(\tone\contone,\tdone)\to\{(\tone_i\contone,\tdone_i)^{\ppone_i}\}$}
    \DisplayProof
    $$

    $$
    \AxiomC{}
    \UnaryInfC{$((\lambda\varone.\contone)\vone,\tdone)\to\{(\contone\subs{\vone}{\varone},\tdone)^1\}$}
    \DisplayProof
    \qquad
    \AxiomC{$(\contone,\tdone)\in\values$}
    \UnaryInfC{$((\lambda\varone.\tone)\contone,\tdone)\to\{(\tone\subs{\contone}{\varone},\tdone)^1\}$}
    \DisplayProof
    $$

    $$
    \AxiomC{$\tone\to\{(\tone_i)^{\ppone_i}\}$}
    \UnaryInfC{$(\ccase{\typeone}{\tone}{\contone_0}{\contone_1}{\contone_\epsilon},\tdone)
    \to\{(\ccase{\typeone}{\tone_i}{\contone_0}{\contone_1}{\contone_\epsilon},\tdone_i)^{\ppone_i}\}$}
    \DisplayProof
    $$

    $$
    \AxiomC{($\contone,\tdone)\to\{(\contone_i,\tdone_i)^{\ppone_i}\}$}
    \UnaryInfC{$(\ccase{\typeone}{\contone}{\tone_0}{\tone_1}{\tone_\epsilon},\tdone)\to
    \{(\ccase{\typeone}{\contone_i}{\tone_0}{\tone_1}{\tone_\epsilon},\tdone_i)^{\ppone_i}\}$}
    \DisplayProof
    $$

    $$
    \AxiomC{}
    \UnaryInfC{$(\ccase{\typeone}{\strp{\zero\strone}}{\contone_0}{\contone_1}{\contone_\epsilon},\tdone)\onestep
    \{(\contone_0,\tdone)^1\}$}
    \DisplayProof
    $$

    $$
    \AxiomC{}
    \UnaryInfC{$(\ccase{\typeone}{\strp{\one\strone}}{\contone_0}{\contone_1}{\contone_\epsilon},\tdone)\onestep
    \{(\contone_1,\tdone)^1\}$}
    \DisplayProof
    $$

    $$
    \AxiomC{}
    \UnaryInfC{$(\ccase{\typeone}{\strp\emptys}{\contone_0}{\contone_1}{\contone_\epsilon},\tdone)\onestep
    \{(\contone_\epsilon,\tdone)^1\}$}
    \DisplayProof
    $$
    
    $$
    \AxiomC{$\strp\strone\in\values^\typestring$}
    \AxiomC{$\{(\contone,\tdone)\smallstepreal{\view{\strp\strone}}\ppone_{\strp\strone}\}$}
    \BinaryInfC{$(\ccase{\typeone}{\contone}{\tone_0}{\tone_1}{\tone_\epsilon},\tdone)\onestep
    \{
     (\tone_0,\cdot)^{\sum_{\strp{\zero\strtwo}}\ppone_{\strp{\zero\strtwo}}},
    (\tone_1,\cdot)^{\sum_{\strp{\one\strtwo}}\ppone_{\strp{\one\strtwo}}},
    (\tone_\epsilon,\cdot)^{\ppone_{\strp\emptys}},
    \}$}
    \DisplayProof
    $$
    
    $$
    \AxiomC{$(\contone,\tdone)\onestep\{(\contone_i,\tdone_i)^{\ppone_i}\}$}
    \UnaryInfC{$(\rrec{\typeone}{\contone}{\tone_0}{\tone_1}{\tone_\epsilon},\tdone)\onestep
    	\{(\rrec{\typeone}{\contone_i}{\tone_0}{\tone_1}{\tone_\epsilon})^{\ppone_i},\tdone_i)^{\ppone_i}\}$}
    \DisplayProof
    $$
    
    $$
    \AxiomC{$(\contone,\tdone)\smallstepreal{\view{\strp\strone}}\ppone_{\strp\strone}$}
    \UnaryInfC{$(\rrec{\typeone}{\contone}{\tone_0}{\tone_1}{\tone_\epsilon})\onestep
    	\substack{
    	\{((\tone_0\strp\strone)\rrec{\typeone}{\strp\strtwo}{\tone_0}{\tone_1}{\tone_\epsilon},\tdone)^{\ppone_{\strp\strone}}\}_{\strp\strone=\strp{\zero\strtwo}} + \\
    	\{((\tone_1\strp\strone)\rrec{\typeone}{\strp\strtwo}{\tone_0}{\tone_1}{\tone_\epsilon},\tdone)^{\ppone_{\strp\strone}}\}_{\strp\strone=\strp{\one\strtwo}} + \{(\tone_\epsilon,\tdone)^{\ppone_{\strp\emptys}}\}}$}
    	\DisplayProof
    $$
}
\end{minipage}}\caption{One-step Rules}\label{fig:onestep}
\end{figure}
\begin{figure}[!ht]
\centering \fbox{
\begin{minipage}{.97\textwidth}
\footnotesize
{
	$$
	\AxiomC{}
	\UnaryInfC{$\pdone\smallstep\emptyt\pdone$}
	\DisplayProof
	\qquad
	\AxiomC{$\begin{array}{cc}
	\pdone\smallstep\trtwo\pdone' & \pdone'\pairconv\pdone''
	\end{array}$}
	\UnaryInfC{$\pdone\smallstep\trtwo\pdone''$}
	\DisplayProof
	$$
	\vspace{6pt}
	$$
	\AxiomC{$\begin{array}{cc}
	\pdone\smallstep\trtwo\{(\contone_i,\tdone_i)^{\ppone_i}\} & (\contone_i,\tdone_i)\onestep^{\pass\vone}\{(\contone_i',\tdone_i')^1\}
	\end{array}$}
	\UnaryInfC{$\pdone\smallstep{\trtwo\cdot\pass\vone}\{(\contone_i',\tdone_i')^{\ppone_i}\}$}
	\DisplayProof
	$$
	\vspace{6pt}
	$$
	\AxiomC{$\begin{array}{cc}\pdone\smallstep\trtwo\{(\contone_i,\tdone_i)^{\ppone_i}\} & 
	(\contone_i,\tdone_i)\smallstepreal{\view{\strp\strone}}\ppone_i'
	\end{array}$}
	\UnaryInfC{$\pdone\smallstepreal{\trtwo\cdot\view{\strp\strone}}\sum_i\ppone_i\cdot\ppone_i'$}
	\DisplayProof
	\qquad
	\AxiomC{$(\contone,\tdone)\onestep\{(\contone_i,\tdone_i)^{\ppone_i}\}$}
	\UnaryInfC{$\pdone+\{(\contone,\tdone)^\ppone\}\pairconv\pdone+\{(\contone_i,\tdone_i)^{\ppone\cdot\ppone_i}\}$}
	\DisplayProof
	$$
}
\end{minipage}}\caption{Small-Step Rules}\label{fig:smallstep}
\end{figure}

The following tells us that working with context pairs is the same as
working with terms as far as traces are concerned:
\begin{lemma}\label{lemma:pairconv}
  Suppose given a context $\contone$, a term distribution $\tdone$,
  and a trace $\trtwo$. Then if
  $(\contone,\tdone)\smallstep\trtwo\{(\contone_i,\tdone_i)^{\ppone_i}\}$
  then
  $\contone[\tdone]\smallstep\trtwo\{(\contone_i[\tdone_i])^{\ppone_i}\}$.
  Moreover, if $(\contone,\tdone)\smallstepreal\trtwo\ppone$, then
  $\Pr(\contone[\tdone],\trtwo)=\ppone$.
\end{lemma}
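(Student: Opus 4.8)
The plan is to prove the statement by induction on the derivation of $(\contone,\tdone)\smallstep\trtwo\{(\contone_i,\tdone_i)^{\ppone_i}\}$ (respectively of $(\contone,\tdone)\smallstepreal\trtwo\ppone$) in the system of Figure~\ref{fig:smallstep}, reading $(\contone,\tdone)$ as the singleton pair distribution $\{(\contone,\tdone)^1\}$ and, wherever the induction hypothesis forces it, generalising to an arbitrary pair distribution $\pdone=\{(\contone_j,\tdone_j)^{\ppone_j}\}$; write $\open\pdone$ for the associated term distribution $\sum_j\ppone_j\cdot\contone_j[\tdone_j]$, so that $\open{\{(\contone,\tdone)^1\}}=\contone[\tdone]$. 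The whole argument rests on two sub-lemmas concerning \emph{one} pair-level move. The first is a \emph{simulation lemma for internal steps}: if $(\contone,\tdone)\onestep\{(\contone_i,\tdone_i)^{\ppone_i}\}$ by the rules of Figure~\ref{fig:onestep}, then $\contone[\tdone]\pairconv^*\{(\contone_i[\tdone_i])^{\ppone_i}\}$ --- a single internal move on a context pair is mirrored by finitely many internal moves on the plugged term distribution. The second is a \emph{view lemma}: if $(\contone,\tdone)\smallstepreal{\view{\strp\strone}}\ppone$ then $\Pr(\contone[\tdone],\view{\strp\strone})=\ppone$, equivalently $\sem{\contone[\tdone]}(\strp\strone)=\ppone$ with $\sem{\cdot}$ extended to term distributions by linearity. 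The view lemma is a short induction on $\contone$: the base case $\contone=\hole$ holds because $\hole[\tdone]=\tdone$ and the premise is literally $\tdone\smallstepreal{\view{\strp\strone}}\ppone$; the constant cases $\strp\strone$ and $\strp{\strtwo}$ (with $\strtwo\neq\strone$) are immediate; and the $\conzero(\cdot)$, $\conone(\cdot)$ and $\tail(\cdot)$ cases follow from the induction hypothesis together with the big-step clauses of Figure~\ref{fig:bigstepsem} for the corresponding constructors and destructor.

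Granting the two sub-lemmas, the main induction is routine. The base case $(\contone,\tdone)\smallstep\emptyt(\contone,\tdone)$ gives $\contone[\tdone]\smallstep\emptyt\contone[\tdone]$. If the derivation ends with the internal-absorption rule, i.e. $(\contone,\tdone)\smallstep\trtwo\pdone'$ followed by $\pdone'\pairconv\pdone''$, then the induction hypothesis gives $\contone[\tdone]\smallstep\trtwo\open{\pdone'}$; the simulation lemma lifts the single pair move inside $\pdone'\pairconv\pdone''$ to $\open{\pdone'}\pairconv^*\open{\pdone''}$ (reducing one term of the support at a time and leaving the rest untouched); and we conclude by absorbing these moves on the right of $\smallstep\trtwo$, which is licensed by the rule of Figure~\ref{fig:tdsmallstep} that lets a $\smallstep\trtwo$-step swallow a $\pairconv$-step. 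If the derivation ends with the $\pass\vone$ rule, the induction hypothesis together with the fact that each $(\contone_i,\tdone_i)$ can perform $\pass\vone$ forces the (flattened) distribution $\{(\contone_i[\tdone_i])^{\ppone_i}\}$ to be a distribution of abstractions, which is exactly the shape the $\pass$ rule of Figure~\ref{fig:tdsmallstep} needs; applying that rule and checking that pair-level substitution of $\vone$ for the bound variable commutes with plugging --- here we use that holes are filled with closed terms --- yields $\{(\contone_i'[\tdone_i'])^{\ppone_i}\}$, as wanted. If the derivation ends with the $\view{\strp\strone}$ rule, the induction hypothesis gives $\contone[\tdone]\smallstep\trtwo\{(\contone_i[\tdone_i])^{\ppone_i}\}$, the view lemma rewrites each premise $(\contone_i,\tdone_i)\smallstepreal{\view{\strp\strone}}\ppone_i'$ as $\sem{\contone_i[\tdone_i]}(\strp\strone)=\ppone_i'$, and normalising this term distribution to a distribution of strings (using the normalisation lemma for term distributions) and reading off the weight of $\strp\strone$ gives $\Pr(\contone[\tdone],\trtwo\cdot\view{\strp\strone})=\sum_i\ppone_i\ppone_i'$, which is the claim.

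The real work --- and the step I expect to be the main obstacle --- is the simulation lemma, and inside it the $\ccase$- and $\rrec$-unfolding rules. For the rules of Figure~\ref{fig:onestep} that merely mirror a term reduction (the congruence rules for application, $\conzero$, $\conone$, $\tail$, for the guard position of $\ccase$ and $\rrec$, the $\beta$-steps, and the commuting rules that push a step out of a $\ccase$/$\rrec$ frame), the claim holds because plugging commutes with the term operation in question and a single pair move corresponds to zero or one term moves. The unfolding rules are genuinely different: one pair-level move such as $(\ccase{\typeone}{\contone}{\tone_0}{\tone_1}{\tone_\epsilon},\tdone)\onestep\{(\tone_0,\cdot)^{\sum_{\strp{\zero\strtwo}}\ppone_{\strp{\zero\strtwo}}},(\tone_1,\cdot)^{\sum_{\strp{\one\strtwo}}\ppone_{\strp{\one\strtwo}}},(\tone_\epsilon,\cdot)^{\ppone_{\strp\emptys}}\}$ peeks at the guard through the labelled-real relation and branches in one shot. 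To simulate it, I would first use the normalisation lemma (and the fact that reduction inside the guard of a $\ccase$ is an internal move) to rewrite $\ccase{\typeone}{\contone}{\tone_0}{\tone_1}{\tone_\epsilon}[\tdone]$, by finitely many $\pairconv$-steps, into $\{(\ccase{\typeone}{\strp{\strone_k}}{\tone_0}{\tone_1}{\tone_\epsilon})^{\pptwo_k}\}$ for a suitable string distribution $\{(\strp{\strone_k})^{\pptwo_k}\}$; the view lemma guarantees that these weights coincide with the peeked probabilities, $\pptwo_k=\ppone_{\strp{\strone_k}}$; and one further round of $\pairconv$-steps applies the term-level reductions $\ccase{\typeone}{\strp{\zero\strone}}{\tone_0}{\tone_1}{\tone_\epsilon}\onestep\tone_0$, $\ccase{\typeone}{\strp{\one\strone}}{\tone_0}{\tone_1}{\tone_\epsilon}\onestep\tone_1$, $\ccase{\typeone}{\strp\emptys}{\tone_0}{\tone_1}{\tone_\epsilon}\onestep\tone_\epsilon$ pointwise and collects exactly the claimed branch probabilities. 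The $\rrec$-unfolding case is handled the same way but is the most delicate point, because unfolding produces residual subterms of the form $(\tone_0\,\strp\strone)(\rrec{\typeone}{\strp\strtwo}{\tone_0}{\tone_1}{\tone_\epsilon})$ (and similarly with $\tone_1$) whose shape must be lined up with the term-level recursion rule; one also has to observe that the distribution recorded in the second component of the pairs produced here is irrelevant, since those terms contain no hole. Once the two unfolding cases are settled, everything else is bookkeeping about plugging commuting with term formation.
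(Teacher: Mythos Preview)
Your proposal is correct and considerably more thorough than the paper's own proof, which is essentially a two-line sketch: the first part is dismissed as ``coming from the definition of 1-step and small-step semantics'', and the second part is a direct computation that case-splits only on $\contone_i\in\{\strp\strone,\strp{\strone'},\hole\}$, tacitly omitting the $\conzero(\cdot)$, $\conone(\cdot)$, $\tail(\cdot)$ shapes that your view lemma handles explicitly. Your decomposition into a simulation lemma for internal steps and a view lemma is the right way to make the paper's sketch rigorous, and your identification of the $\ccase$/$\rrec$ unfolding rules (where one pair-level move corresponds to several $\pairconv$-steps on the plugged distribution, and where the second component of the resulting pairs is irrelevant because the branches contain no hole) as the only non-mechanical cases is exactly right; the paper simply does not spell any of this out.
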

\begin{proof} 
\begin{itemize}
\item
    The first case comes from the definition of 1-step and small-step
    semantics.
\item
    If $\trtwo=\trtwo'\cdot\view{\strp\strone}$ with $\trtwo'$ incomplete trace,
    by the previous point we have that
    $\contone[\tdone]\smallstep{\trtwo'}\{(\contone_i[\tdone_i])^{\ppone_i}\}$
    and
    $(\contone,\tdone)\smallstep{\trtwo'}\{(\contone_i,\tdone_i)^{\ppone_i}\}$.
    So we have
    \begin{align*}
        \Pr(\contone[\tdone],\trtwo)=&
            \sum\ppone_i\cdot(\contone_i[\tdone_i]\smallstepreal{\view{\strp\strone}})=\sum\ppone_i\cdot
            \left.
            \begin{array}{ll}
                1, & \hbox{if $\contone_i=\strp\strone$;} \\
                0, & \hbox{if $\contone_i=\strp\strone'\neq\strp\strone$;} \\
                \tdone(\strp\strone), & \hbox{if $\contone=\hole$.} \\
            \end{array}%
            \right.=\\
        =&
            \sum\ppone_i\cdot((\contone_i,\tdone_i)\smallstepreal{\view{\strp\strone}})=(\contone,\tdone)\smallstepreal\trtwo
    \end{align*}
\end{itemize}
\end{proof}
But how could we exploit context pairs for our purposes? The
key idea can be informally explained as follows: there is
a notion of ``relatedness'' for pair distributions
which not only is stricter than trace equivalence, but
can be proved to be preserved along reduction, even when 
interaction with the environment is taken into account.
\newcommand{\ione}{i}
\newcommand{\isone}{I}
\begin{definition}[Trace Relatedness]
Let $\pdone,\pdtwo$ be two pair distributions. We say that they
are \emph{trace-related}, and we write $\pdone\sister\pdtwo$ if there
exist families $\{\contone_\ione\}_{\ione\in\isone}$,
$\{\tdone_\ione\}_{\ione\in\isone}$,
$\{\tdtwo_\ione\}_{\ione\in\isone}$, and
$\{\ppone_\ione\}_{\ione\in\isone}$ such that
$\pdone=\{(\contone_i,\tdone_i)^{\ppone_i}\}, \pdtwo=\{(\contone_i,\tdtwo_i)^{\ppone_i}\}$
and for every $\ione\in\isone$, it holds that $\tdone_i\treq\tdtwo_i$.
\end{definition}
%
%
The first observation about trace relatedness has to do with stability
with respect to internal reduction:
\begin{lemma}[Internal Stability]
Let $\pdone,\pdtwo$ be two pair distributions such that
$\pdone\sister\pdtwo$ then, if there exists $\pdone'$ such that
$\pdone\smallstep\emptyt\pdone'$, then there exists $\pdtwo'$ such
that $\pdtwo\smallstep\emptyt\pdtwo'$ and $\pdone'\sister\pdtwo'$.
\end{lemma}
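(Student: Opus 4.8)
The plan is to first reduce the statement to a single–step version and then to argue by case analysis on the one–step rule that fired. Since $\smallstep\emptyt$ is precisely the reflexive–transitive closure of $\pairconv$, the lemma follows by induction on the length of the reduction $\pdone\smallstep\emptyt\pdone'$ — the zero–length case being immediate, with $\pdtwo'=\pdtwo$ — from the following \emph{single–step claim}: whenever $\pdone\sister\pdtwo$ and $\pdone\pairconv\pdone_1$, there is a $\pdtwo_1$ with $\pdtwo\smallstep\emptyt\pdtwo_1$ and $\pdone_1\sister\pdtwo_1$; one then composes with transitivity of $\smallstep\emptyt$. So from here on I concentrate on the single–step claim.

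Unfold $\pdone\sister\pdtwo$ into aligned family presentations $\pdone=\{(\contone_i,\tdone_i)^{\ppone_i}\}$, $\pdtwo=\{(\contone_i,\tdtwo_i)^{\ppone_i}\}$ with $\tdone_i\treq\tdtwo_i$ for all $i$, and note that a step $\pdone\pairconv\pdone_1$ acts on one component, say the $j$–th, by a one–step reduction $(\contone_j,\tdone_j)\onestep\{(\contone'_k,\tdone'_k)^{\pptwo_k}\}$, after which $\pdone_1$ replaces that component by $\{(\contone'_k,\tdone'_k)^{\ppone_j\cdot\pptwo_k}\}$. I would then induct on the derivation of this one–step reduction. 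The great majority of the rules of Figure~\ref{fig:onestep} — the purely structural ones, the $\beta$–like ones, the congruence rules that recurse on a subcontext, and the rules that pick a branch of a case or a recursion from an already reduced string guide — never inspect the second component; for each of them the very same rule applies to $(\contone_j,\tdtwo_j)$, producing the same contexts $\contone'_k$ with the same probabilities $\pptwo_k$, while the attached term distributions are either $\tdtwo_j$ itself or are related to the $\tdone'_k$'s by the inner inductive hypothesis, so $\pdtwo$ makes the matching single step and both alignment and pointwise trace equivalence survive. Exactly three leaves are ``interesting''.

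\emph{(a)} The rule $(\hole,\tdone_j)\onestep\{(\hole,\tdone'_j)^1\}$, where $\tdone_j$ makes one internal move $\tdone_j\pairconv\tdone'_j$: here $\pdtwo$ need not move at all, since $\tdone_j\treq\tdtwo_j$ and $\tdone_j\pairconv\tdone'_j$ give $\tdone'_j\treq\tdtwo_j$ by the first item of the Trace Equivalence Properties lemma, whence $\pdone_1\sister\pdtwo$. \emph{(b)} The rule absorbing a passed value through the hole inside an application, $(\hole\vone,\tdone_j)\onestep\{(\hole,\tdone'_j)^1\}$: here $\pdtwo$ replays the corresponding absorption step on $(\hole\vone,\tdtwo_j)$ — possibly after bringing $\tdtwo_j$ into its unique normal form by internal moves, which, the hole being of function type here, is a distribution of abstractions — and the two resulting term distributions are trace equivalent by the second item of the Trace Equivalence Properties lemma; together with \emph{(a)} this is exactly why the conclusion has to allow $\pdtwo$ zero or more internal moves rather than a single one. \emph{(c)} The rules firing a case or a recursion whose guide is a context $\contone$, in which the branch probabilities of the produced pair distribution are the numbers $\ppone_{\strp\strone}$ determined by $(\contone,\tdone_j)\smallstepreal{\view{\strp\strone}}\ppone_{\strp\strone}$; this is handled by the auxiliary lemma below.

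The auxiliary lemma reads: if $\tdone\treq\tdtwo$ then, for every $\contone$ and every string $\strp\strone$, $(\contone,\tdone)\smallstepreal{\view{\strp\strone}}\ppone$ iff $(\contone,\tdtwo)\smallstepreal{\view{\strp\strone}}\ppone$. A judgment of the shape $(\contone,-)\smallstepreal{\view{\cdot}}-$ is derivable only when $\contone$ is built from strings, the hole, and the three constructors $\conzero,\conone,\tail$, and I would prove the claim by induction on that restricted shape: for a bare string the value ($1$ or $0$) does not depend on the second component; for $\conzero(\conttwo)$, $\conone(\conttwo)$ and $\tail(\conttwo)$ the view probability of $(\contone,-)$ is a fixed function of the view probabilities of $(\conttwo,-)$ and the claim follows from the inductive hypothesis; and for $\contone=\hole$ one has $(\hole,\tdone)\smallstepreal{\view{\strp\strone}}\ppone$ exactly when $\Pr(\tdone,\view{\strp\strone})=\ppone$, which equals $\Pr(\tdtwo,\view{\strp\strone})$ because $\view{\strp\strone}$ is a trace and $\tdone\treq\tdtwo$. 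Feeding this into case \emph{(c)} shows that $(\contone,\tdtwo_j)$ admits the same firing rule with exactly the same branch probabilities, the term distributions attached to the branches being identical on the two sides (or trace equivalent, when they are $\tdone_j$ and $\tdtwo_j$), so $\pdone_1\sister\pdtwo_1$. I expect the real difficulty to be organisational rather than conceptual: threading the inner induction cleanly through the many congruence rules, and keeping the family presentations of the two pair distributions aligned while components get duplicated, merged or replaced — the relation $\sister$ is stated in terms of explicit families precisely so that this bookkeeping goes through. The only genuine mathematical content sits in the base case $\contone=\hole$ of the auxiliary lemma, which merely reads off the definition of trace equivalence on view traces.
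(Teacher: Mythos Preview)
Your proposal is correct and follows essentially the same approach as the paper: reduce to a single $\pairconv$ step, then do a case analysis on the one-step rule fired, handling the hole-cases via the Trace Equivalence Properties lemma. Your treatment is in fact more careful than the paper's in two places --- you formulate the auxiliary lemma for $(\contone,-)\smallstepreal{\view{\strp\strone}}-$ over all string contexts built from $\conzero,\conone,\tail$ and the hole (the paper only spells out $\contone=\strp\strone$ and $\contone=\hole$), and in your case~(b) you notice that $\tdtwo_j$ may need to be normalised by prior internal moves before the absorption step can fire, which the paper glosses over --- but these are completions of the same argument, not a different route.
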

\begin{proof}
By definition of $\sister$ for all
$(\contone,\tdone)^\ppone\in\pdone$ there exists
$(\contone,\tdtwo)^\ppone\in\pdtwo$ such that
$\tdone\treq\tdtwo$.\\
If $\pdone\smallstep\emptyt\pdone'$ then we have either $\pdone'=\pdone$ or $\pdone\pairconv\pdone'$; if $\pdone'=\pdone$ then we choose $\pdtwo'=\pdtwo$ and we get the thesis.\\
If $\pdone\pairconv\pdone'$ then we have that there exists a term
$(\contone,\tdone)\in\supp(\pdone)$ that reduces; we face two
possible cases:
\begin{itemize}
\item
    The first case is a term distribution reduction, i.e.
    $(\contone,\tdone)\onestep\{(\contone,\tdone')^1\}$.\\
    By the small step rules we know that
    $\pdone'=\pdone\setminus\{(\contone,\tdone)^\ppone\}+\{(\contone,\tdone')^\ppone\}$,
    but, given $(\contone,\tdtwo)^\ppone\in\pdtwo$ with $\tdone\treq\tdtwo$ by a previous lemma we know
    $\tdone'\treq\tdtwo$ and then if we set $\pdtwo'=\pdtwo$ we have the thesis.
\item
    The second case is a context reduction, i.e.
    $(\contone,\tdone)\onestep\{(\contone_i,\tdone_i)^{\ppone_i}\}$.\\
    We focus our attention on one particular reduction.\\
    Suppose that the pair that reduces is
    $(\ccase{\typeone}{\contone}{\tone_0}{\tone_1}{\tone_\epsilon},\tdone)^\ppone\in\pdone$, with $(\contone,\tdone)$ value; we know that there
    exists
    $(\ccase{\typeone}{\contone}{\tone_0}{\tone_1}{\tone_\epsilon},\tdtwo)^\ppone\in\pdtwo$
    such that $\tdone\treq\tdtwo$.\\
    If $\contone=\strp\strone$ by the one-step rules we have:
    $$(\ccase{\typeone}{\strp\strone}{\tone_0}{\tone_1}{\tone_\epsilon},\tdone)\onestep\left\{%
    \begin{array}{ll}
        \{(\tone_0,\tdone)^1\}, & \hbox{If $\strp\strone=\strp{\zero\strtwo}$;} \\
        \{(\tone_1,\tdone)^1\}, & \hbox{If $\strp\strone=\strp{\one\strtwo}$;} \\
        \{(\tone_\epsilon,\tdone)^1\}, & \hbox{If $\strp\strone=\strp\emptys$.} \\
    \end{array}%
    \right.$$
    and similarly:
    $$(\ccase{\typeone}{\strp\strone}{\tone_0}{\tone_1}{\tone_\epsilon},\tdtwo)\onestep\left\{%
    \begin{array}{ll}
        \{(\tone_0,\tdtwo)^1\}, & \hbox{If $\strp\strone=\strp{\zero\strtwo}$;} \\
        \{(\tone_1,\tdtwo)^1\}, & \hbox{If $\strp\strone=\strp{\one\strtwo}$;} \\
        \{(\tone_\epsilon,\tdtwo)^1\}, & \hbox{If $\strp\strone=\strp\emptys$.} \\
    \end{array}%
    \right.$$
    So we set:
    \begin{align*}
        \pdone'=\pdone\setminus\{(\ccase{\typeone}{\strp\strone}{\tone_0}{\tone_1}{\tone_\epsilon},\tdone)^\ppone\}+\{(\tone',\tdone)^\ppone\}\\
        \pdtwo'=\pdtwo\setminus\{(\ccase{\typeone}{\strp\strone}{\tone_0}{\tone_1}{\tone_\epsilon},\tdtwo)^\ppone\}+\{(\tone',\tdtwo)^\ppone\}\\
    \end{align*}
    where $\tone'$ is one between $\tone_0,\tone_1,\tone_\epsilon$ depending on $\strp\strone$, and we easily get the thesis.\\
    If $\contone=\hole$ then by the one-step rules we
    have:
    $$(\hole,\tdone)\onestep\{\tone_0^{\tdone(\strset_0)},\tone_1^{\tdone(\strset_1)},\tone_\epsilon^{\tdone(\strp\emptys)}\}\qquad
    (\hole,\tdtwo)\onestep\{\tone_0^{\tdtwo(\strset_0)},\tone_1^{\tdtwo(\strset_1)},\tone_\epsilon^{\tdtwo(\strp\emptys)}\}$$
    with
    $\strset_0=\{\strp{\zero\strone}\}_{\strp\strone\in\values^\typestring},\strset_1=\{\strp{\one\strone}\}_{\strp\strone\in\values^\typestring}$.\\
    But we know, for all $\strset$:
    \begin{align*}
        \tdone(\strset)=\sum_{\strp\strone\in\strset}\tdone(\strp\strone)=\sum_{\strp\strone\in\strset}\tdtwo(\strp\strone)=\tdtwo(\strset)
    \end{align*}
    So we have
    $$\pdone'=\pdone\setminus\{(\ccase{\typeone}{\hole}{\tone_0}{\tone_1}{\tone_\epsilon},\tdone)^\ppone\}
    +\{(\tone_0,\tdone)^{\ppone\cdot\tdone(\strset_0)},(\tone_1,\tdone)^{\ppone\cdot\tdone(\strset_1)},(\tone_\epsilon,\tdone)^{\ppone\cdot\tdone(\strp\emptys)}\}$$
    and if we set
    $$\pdtwo'=\pdtwo\setminus\{(\ccase{\typeone}{\hole}{\tone_0}{\tone_1}{\tone_\epsilon},\tdtwo)^\ppone\}
    +\{(\tone_0,\tdtwo)^{\ppone\cdot\tdtwo(\strset_0)},(\tone_1,\tdtwo)^{\ppone\cdot\tdtwo(\strset_1)},(\tone_\epsilon,\tdtwo)^{\ppone\cdot\tdtwo(\strp\emptys)}\}$$
    we obtain the thesis.\\
    The recursive case $\rrec{\typeone}{\contone}{\tone_0}{\tone_1}{\tone_\epsilon}$, with $(\contone,\tdone)$ value, is similar.\\
    On the other cases, if $(\contone,\tdone)\onestep\{(\contone_i,\tdone_i)^{\ppone_i}\}$
    by definition of $\sister$ we know that there must
    exist $(\contone,\tdtwo)^\ppone\in\pdtwo$ such that
    $(\contone,\tdtwo)\onestep\{(\contone_i,\tdtwo_i)^{\ppone_i}\}$
    (That is a reduction to the same contexts $\contone_i$ with the same probability
    $\ppone_i$), so we have to prove that $\tdone_i\treq\tdtwo_i$
    for all $i$. This is true because either the two term distributions
    remain the same, i.e. $\tdone_i=\tdone,\tdtwo_i=\tdtwo$ for
    all $i$., or the context passes the same value to the two term
    distributions and so by a previous lemma
    $\tdone_i\treq\tdtwo_i$ for all $i$.\\
    Now if we set
    $\pdone'=\pdone\setminus\{(\contone,\tdone)^\ppone+\{(\contone_i,\tdone_i)^{\ppone\cdot\ppone_i}\}\}$ and
    $\pdtwo'=\pdtwo\setminus\{(\contone,\tdtwo)^\ppone\}+\{(\contone_i,\tdtwo_i)^{\ppone\cdot\ppone_i}\}$
    then we have $\pdone\pairconv\pdone',\pdtwo\pairconv\pdtwo'$ and $\pdone'\sister\pdtwo'$.
\end{itemize}
\end{proof}
Once Internal Stability is proved, and since the relation $\pairconv$
can be proved to be strongly normalizing also for context pair
distributions, one gets that:
\begin{lemma}[Bisimulation, Internally]\label{lemma:bisimint}
If $\pdone,\pdtwo$ are pair distributions, with $\pdone\sister\pdtwo$
then there are $\pdone',\pdtwo'$ normal distributions such that
$\pdone\smallstep\emptyt\pdone'$, 
$\pdtwo\smallstep\emptyt\pdtwo'$ and
$\pdone'\sister\pdtwo'$.
\end{lemma}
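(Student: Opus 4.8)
The plan is to iterate the Internal Stability lemma along a maximal sequence of internal moves, which terminates thanks to the assumed strong normalization of $\pairconv$ on context pair distributions, and then to repeat the same construction with the two sides swapped, so as to force \emph{both} endpoints to be normal. Before starting I would record four routine facts about the reduction relations: (i) $\pairconv \subseteq {\smallstep\emptyt}$, obtained by composing the axiom $\pdone \smallstep\emptyt \pdone$ with the rule appending a $\pairconv$-step, instantiated at the empty trace; (ii) $\smallstep\emptyt$ is transitive, by a short induction on the derivation of its second argument, appending the $\pairconv$-steps it collects to the first argument; (iii) a pair distribution admits no $\pairconv$-step if and only if it is normal, which is a standard progress/values-don't-reduce argument on the one-step rules of Figure~\ref{fig:onestep}; and (iv) $\sister$ is symmetric, since it is defined through $\treq$, which is.

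The first pass. By strong normalization, fix a maximal chain $\pdone = \pdone_0 \pairconv \pdone_1 \pairconv \cdots \pairconv \pdone_n = \pdone'$; by maximality and (iii), $\pdone'$ is normal, and by (i) and (ii) we already have $\pdone \smallstep\emptyt \pdone'$. Setting $\pdtwo_0 := \pdtwo$, I would build by induction on $k \le n$ a pair distribution $\pdtwo_k$ with $\pdtwo \smallstep\emptyt \pdtwo_k$ and $\pdone_k \sister \pdtwo_k$. The base case is the hypothesis $\pdone \sister \pdtwo$. For the inductive step, from $\pdone_k \pairconv \pdone_{k+1}$ and (i) we get $\pdone_k \smallstep\emptyt \pdone_{k+1}$, so Internal Stability applied to $\pdone_k \sister \pdtwo_k$ yields $\pdtwo_{k+1}$ with $\pdtwo_k \smallstep\emptyt \pdtwo_{k+1}$ and $\pdone_{k+1} \sister \pdtwo_{k+1}$; combining with $\pdtwo \smallstep\emptyt \pdtwo_k$ and (ii) gives $\pdtwo \smallstep\emptyt \pdtwo_{k+1}$. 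At $k = n$ this produces $\pdtwo \smallstep\emptyt \pdtwo_n$ with $\pdone' \sister \pdtwo_n$.

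The second pass. It remains to replace $\pdtwo_n$ by a normal distribution without disturbing $\pdone'$. Again by strong normalization, fix a maximal chain $\pdtwo_n = \pdtwo_n^{0} \pairconv \cdots \pairconv \pdtwo_n^{m} = \pdtwo'$ with $\pdtwo'$ normal. Starting from $\pdtwo_n \sister \pdone'$ (legitimate by (iv)), I would run exactly the same induction, now driving the $\pdtwo$-side down this chain and letting Internal Stability exhibit a matching $\smallstep\emptyt$-reduct of $\pdone'$ at each step. The crucial observation is that $\pdone'$ is normal, so by (iii) no $\pairconv$-step applies to it, hence — by a one-line induction on the shape of a $\smallstep\emptyt$-derivation out of $\pdone'$ — its only $\smallstep\emptyt$-reduct is $\pdone'$ itself; thus at every step the matching reduct stays equal to $\pdone'$ and relatedness is preserved, giving $\pdtwo_n^{j} \sister \pdone'$ for all $j$, in particular $\pdtwo' \sister \pdone'$. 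By (iv) this is $\pdone' \sister \pdtwo'$, while $\pdtwo \smallstep\emptyt \pdtwo_n$, $\pdtwo_n \smallstep\emptyt \pdtwo'$ (from the chain via (i) and (ii)), and (ii) give $\pdtwo \smallstep\emptyt \pdtwo'$. Together with $\pdone \smallstep\emptyt \pdone'$ and the normality of $\pdone'$ and $\pdtwo'$, this is exactly the statement.

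The main obstacle is the step just described: Internal Stability only guarantees that \emph{some} internal reduction matches on the $\pdtwo$-side, not that its target is normal, and this cannot be repaired by hoping that trace equivalence preserves \emph{being a value distribution} — it does not, since $\{\tone^1\} \treq \{\vone^1\}$ is perfectly possible with $\tone$ not a value. The remedy is the second, symmetric pass, which works precisely because a normal pair distribution is $\pairconv$-inert (fact (iii)), so pushing $\pdtwo_n$ to its normal form cannot force $\pdone'$ to move. Everything else — the two inductions, transitivity of $\smallstep\emptyt$, symmetry of $\sister$, and the progress fact (iii) — is routine.
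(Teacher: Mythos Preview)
Your proof is correct and follows essentially the same two-pass strategy as the paper: normalize $\pdone$ first using Internal Stability iterated along a $\pairconv$-chain, then normalize the resulting $\pdtwo$-reduct using symmetry of $\sister$ and the fact that the already-normal $\pdone'$ cannot move. Your version is in fact more careful than the paper's, which somewhat sloppily writes ``$\pdone'\sister\pdtwo$'' after the first pass (suggesting the original $\pdtwo$) rather than tracking the intermediate reduct $\pdtwo_n$ as you do; your explicit bookkeeping, together with the four auxiliary facts (i)--(iv), makes the argument watertight where the paper leaves it implicit.
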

\begin{proof}
The proof comes from the fact that, given $\pdone$ if it is not normal, there is $\pdone'$ normal such that $\pdone\pairconv^*\pdone'$, and by the previous lemma we have $\pdone'\sister\pdtwo$. Then if $\pdtwo$ isn't normal we can repeat the procedure and get $\pdtwo'$ such that $\pdtwo\pairconv^*\pdtwo'$ and $\pdone'\sister\pdtwo'$.
\end{proof}

The next step consists in proving that context pair distributions
which are trace related are not only bisimilar as for internal
reduction, but also for external reduction:
\begin{lemma}[Bisimulation, Externally]\label{lemma:bisimext}
Given two pair distributions $\pdone,\pdtwo$ with
$\pdone\sister\pdtwo$, then for all traces $\trtwo$ we have:
\begin{varenumerate}
\item 
  If $\pdone\smallstep\trtwo\pdthree$, with $\pdthree$ normal distribution, then $\pdtwo\smallstep\trtwo\pdfour$,
  where $\pdthree\sister\pdfour$ and $\pdfour$ is a normal distribution too.
\item 
  If $\pdone\smallstepreal\trtwo\ppone$ then $\pdtwo\smallstepreal\trtwo\ppone$.
\end{varenumerate}
\end{lemma}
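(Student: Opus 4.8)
The plan is to prove the two points simultaneously by induction on the length of $\trtwo$, exploiting the shape of $\smallstep{\cdot}$- and $\smallstepreal{\cdot}$-derivations: a $\smallstep\trtwo$-run carries a trace built only from $\pass$-actions, whereas a $\smallstepreal{\cdot}$-run is such a run followed by exactly one $\view$-action. So I will first settle point~1 for all $\pass$-traces and then read off point~2. Throughout, silent moves will be matched using the Internal Stability lemma and Lemma~\ref{lemma:bisimint}; the passage between context pairs and plain term distributions will be handled by Lemma~\ref{lemma:pairconv}; and trace equivalence will be transported across the two externally visible actions using the three items of the ``Trace Equivalence Properties'' lemma. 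For the base case of point~1 ($\trtwo=\emptyt$), $\pdone\smallstep\emptyt\pdthree$ is just a sequence of $\pairconv$-steps reaching the normal distribution $\pdthree$; applying Internal Stability along it yields $\pdfour_0$ with $\pdtwo\smallstep\emptyt\pdfour_0$ and $\pdthree\sister\pdfour_0$, and if $\pdfour_0$ is not yet normal I apply Lemma~\ref{lemma:bisimint} to $\pdthree\sister\pdfour_0$ to get a normal $\pdfour$ with $\pdfour_0\smallstep\emptyt\pdfour$ and $\pdthree\sister\pdfour$, using that a normal distribution can only $\smallstep\emptyt$-reduce to itself so the $\pdthree$-side stays put. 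I will reuse this ``internal matching'' move at the end of every inductive case.

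In the inductive step of point~1, with $\trtwo=\trone\cdot\pass\vone$, I will factor the derivation of $\pdone\smallstep{\trone\cdot\pass\vone}\pdthree$ through the rule that appends the last $\pass\vone$: there is a distribution $\pdthree_1=\{(\contone_i,\tdone_i)^{\ppone_i}\}$ with $\pdone\smallstep\trone\pdthree_1$, each $(\contone_i,\tdone_i)\onestep^{\pass\vone}\{(\contone_i',\tdone_i')^1\}$, and $\{(\contone_i',\tdone_i')^{\ppone_i}\}\pairconv^*\pdthree$. Since the only pairs admitting a $\pass\vone$-step are those of the form $(\hole,\tdone)$ with $\tdone$ a distribution of abstractions and those of the form $(\lambda\varone.\contone,\tdone)$, every pair of $\pdthree_1$ is a value-pair, so $\pdthree_1$ is normal and the induction hypothesis gives a normal $\pdfour_1=\{(\contone_i,\tdtwo_i)^{\ppone_i}\}$ with $\pdtwo\smallstep\trone\pdfour_1$ and $\tdone_i\treq\tdtwo_i$. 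I then match the $\pass\vone$-action pairwise: if $\contone_i=\lambda\varone.\conttwo$, the action only substitutes into the context, the term distribution is untouched, and $\tdone_i'=\tdone_i\treq\tdtwo_i=\tdtwo_i'$; if $\contone_i=\hole$, then $\tdtwo_i$ is a distribution of abstractions too (as $\pdfour_1$ is normal), so $\tdtwo_i\onestep^{\pass\vone}\tdtwo_i'$, and the second item of the ``Trace Equivalence Properties'' lemma (with the first item absorbing the intervening silent steps) gives $\tdone_i'\treq\tdtwo_i'$. This yields $\{(\contone_i',\tdone_i')^{\ppone_i}\}\sister\{(\contone_i',\tdtwo_i')^{\ppone_i}\}$, and internal matching on $\{(\contone_i',\tdone_i')^{\ppone_i}\}\pairconv^*\pdthree$ produces the desired normal $\pdfour$ with $\pdtwo\smallstep{\trone\cdot\pass\vone}\pdfour$ and $\pdthree\sister\pdfour$.

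For point~2, with $\trtwo=\trone\cdot\view{\strp\strone}$ and $\trone$ a $\pass$-trace, the derivation gives $\pdthree_1=\{(\contone_i,\tdone_i)^{\ppone_i}\}$ with $\pdone\smallstep\trone\pdthree_1$, $(\contone_i,\tdone_i)\smallstepreal{\view{\strp\strone}}\ppone_i'$ and $\ppone=\sum_i\ppone_i\ppone_i'$. Extra internal reductions on $\pdthree_1$ affect neither the $\ppone_i'$ nor $\ppone$, since by Lemma~\ref{lemma:pairconv} each $\ppone_i'$ equals $\Pr(\contone_i[\tdone_i],\view{\strp\strone})$ and the latter is invariant under internal reduction (first item of the ``Trace Equivalence Properties'' lemma); so I may assume $\pdthree_1$ is normal and apply point~1 to $\pdone\smallstep\trone\pdthree_1$, obtaining $\pdfour_1=\{(\contone_i,\tdtwo_i)^{\ppone_i}\}$ with $\pdtwo\smallstep\trone\pdfour_1$ and $\tdone_i\treq\tdtwo_i$. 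It then remains to check, for each $i$, that $(\contone_i,\tdtwo_i)\smallstepreal{\view{\strp\strone}}\ppone_i'$ with the same $\ppone_i'$, by an inner induction on the shape of $\contone_i$ --- which, for the $\view$-step to fire, is a stack of $\conzero$, $\conone$, $\tail$ over a string literal or the hole: a string literal answers $0$ or $1$ with no reference to the term distribution; the hole reduces the question to $\tdone_i\smallstepreal{\view{\strp\strone}}\ppone_i'$ versus $\tdtwo_i\smallstepreal{\view{\strp\strone}}\ppone_i'$, which coincide by the third item of the ``Trace Equivalence Properties'' lemma; and the $\conzero$, $\conone$, $\tail$ cases peel off one constructor and recurse on a strictly smaller context. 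Summing over $i$ then gives $\pdtwo\smallstepreal{\trone\cdot\view{\strp\strone}}\ppone$.

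The genuinely new content of all this is tiny and localised --- the three items of the ``Trace Equivalence Properties'' lemma, used to carry $\treq$ across one $\pass$- or $\view$-action, plus the bisimulation results already established for internal reduction --- while everything else is bookkeeping. The part I expect to be the real obstacle is keeping the normality bookkeeping honest: a $\smallstep\trtwo$-run interleaves silent $\pairconv$-steps freely with external actions, so at every external action I must re-normalise and re-establish $\sister$ via Internal Stability and Lemma~\ref{lemma:bisimint}, and I must argue each time that the pairs present just before the action have the expected shape --- a $\hole$- or $\lambda$-pair before $\pass\vone$, a string-built context (a stack of $\conzero$, $\conone$, $\tail$ over the hole or a literal) before $\view{\strp\strone}$ --- so that the pairwise matching even makes sense.
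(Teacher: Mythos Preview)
Your proposal is correct and follows essentially the same strategy as the paper: induction on the length of $\trtwo$, with Lemma~\ref{lemma:bisimint} handling the base case and the tail of each inductive step, and the Trace Equivalence Properties carrying $\treq$ across the single external action. One minor redundancy: once you have normalised $\pdthree_1$ before the $\view$-action, the contexts $\contone_i$ can only be string literals or $\hole$ (since $\contone_i[\tone]$ must be a value of type $\typestring$), so your inner induction over stacks of $\conzero$, $\conone$, $\tail$ never actually reaches a non-trivial case --- the paper simply splits into $J=\{j\mid\contone_j=\strp{\strone_j}\}$ and $K=\{k\mid\contone_k=\hole\}$ and computes directly.
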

\begin{proof}
We act by induction on the length of $\trtwo$.\\ If $\trtwo=\emptyt$
then by lemma~\ref{lemma:bisimint} we get the thesis.  Suppose now
$\trtwo=\trtwo'\cdot\pass\vone$ then we have by induction hypothesis:
$\pdone\smallstep{\trtwo'}\{(\contone_i,\tdone_i)^{\ppone_i}\}_{i\in I}$ and
$\pdtwo\smallstep{\trtwo'}\{(\contone_i,\tdtwo_i)^{\ppone_i}\}_{i\in I}$ with
$\tdone_i\treq\tdtwo_i$ for all $i\in I$ and the two pair distribution
normal.\\ But, by the one-step rules we have only two possible
derivation for an action $\pass\vone$:
$$
    \AxiomC{}
    \UnaryInfC{$(\lambda\varone.\contone,\tdone)\onestep^{\pass\vone}\{(\contone\subs{\vone}{\varone},\tdone)^1\}$}
    \DisplayProof
    \qquad
    \AxiomC{$\tdone\onestep^{\pass\vone}\tdone'$}
    \UnaryInfC{$(\hole,\tdone)\onestep^{\pass\vone}\{(\hole,\tdone')^1\}$}
    \DisplayProof
$$
So if we set $J=\{j\in\ I\ |\
\contone_j=\lambda\varone.\contone_j'\}, K=\{k\in\ I\ |\
\contone_i=\hole\}$ we have:
\begin{align*}
    \pdone\smallstep{\trtwo'}\{(\lambda\varone.\contone_j',\tdone_j)^{\ppone_j}\}+\{(\hole,\tdone_k)^{\ppone_k}\}\qquad
    \pdtwo\smallstep{\trtwo'}\{(\lambda\varone.\contone_j',\tdtwo_j)^{\ppone_j}\}+\{(\hole,\tdtwo_k)^{\ppone_k}\}
\end{align*}
At this point, if $\tdone_k\onestep^{\pass\vone}\tdone_k'$ and
$\tdtwo_k\onestep^{\pass\vone}\tdtwo_k'$ we know
$\tdone'_k\treq\tdtwo'_k$ for all $k$, so by using the one step rule
we set:
\begin{align*}
    \pdone'=\{(\contone_j\subs{\vone}{\varone},\tdone_j)^{\ppone_j}\}+\{(\hole,\tdone_k')^{\ppone_k}\}\qquad
    \pdtwo'=\{(\contone_j\subs{\vone}{\varone},\tdtwo_j)^{\ppone_j}\}+\{(\hole,\tdtwo_k')^{\ppone_k}\}
\end{align*}
and we have $\pdone\smallstep{\trtwo'\cdot\pass\vone}\pdone'$,
$\pdtwo\smallstep{\trtwo\cdot\pass\vone}\pdtwo'$ with
$\pdone''\sister\pdtwo''$; and so by applying lemma~\ref{lemma:bisimint} we get the (1)
thesis.\\\\
Suppose now $\trtwo=\trtwo'\cdot\view{\strp\strone}$.\\
By induction we know that
$\pdone\smallstep{\trtwo'}\{(\contone_i,\tdone_i)^{\ppone_i}\},\pdtwo\smallstep{\trtwo'}\{(\contone_i,\tdtwo_i)^{\ppone_i}\}$ with $\tdone_i\treq\tdtwo_i$ for all $i\in I$ and that the two pair distributions are normal.\\
So we set $J=\{j\in\ I\ |\ \contone_j=\strp{\strone_j}\},K=\{k\in I\ |\ \contone_k=\hole\}$ and know:
\begin{align*}
	\pdone\smallstep{\trtwo'}\{(\strp{\strone_j},\tdone_j)^{\ppone_j}\}+\{(\hole,\tdone_k)^{\ppone_k}\}
	\qquad
	\pdtwo\smallstep{\trtwo'}\{(\strp{\strone_j},\tdtwo_j)^{\ppone_j}\}+\{(\hole,\tdtwo_k)^{\ppone_k}\}
\end{align*}
So we have:
\begin{align*}
	\pdone\smallstepreal{\trtwo'\cdot\view\strp{\strone}}\sum_{\strp{\strone_j}=\strp\strone}\ppone_j + \sum\ppone_k\cdot\Pr(\tdone_k,\view{\strp\strone})
	\qquad
	\pdtwo\smallstepreal{\trtwo'\cdot\view{\strp\strone}}\sum_{\strp{\strone_j}=\strp{\strone}}\ppone_j + \sum\ppone_k\cdot\Pr(\tdtwo_k,\view{\strp\strone})
\end{align*}
But $\Pr(\tdone_k,\view{\strp\strone})=\Pr(\tdtwo_k,\view{\strp\strone})$ and so we get the thesis (2).
\end{proof}

%
\begin{lemma} 
Given two terms distributions $\tdone,\tdtwo$ such that
$\tdone\treq\tdtwo$, then for all context $\contone$, for all trace
$\trtwo$ we have:
$\Pr(\contone[\tdone],\trtwo)=\Pr(\contone[\tdtwo],\trtwo)$
\end{lemma}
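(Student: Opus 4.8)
The plan is to reduce the statement entirely to the context--pair machinery built above. Given $\tdone\treq\tdtwo$, I would form the two singleton pair distributions $\pdone=\{(\contone,\tdone)^1\}$ and $\pdtwo=\{(\contone,\tdtwo)^1\}$. Taking the index set $\isone$ of Definition~(Trace Relatedness) to be a singleton, the component contexts coincide (both are $\contone$) and the component term distributions are trace equivalent by hypothesis, so $\pdone\sister\pdtwo$. This is the only place where the assumption $\tdone\treq\tdtwo$ is used.

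Next I would use the bridge Lemma~\ref{lemma:pairconv}, which says that running a context pair along a trace computes exactly the probability that the corresponding filled term accepts that trace. For the fixed trace $\trtwo$ there is a unique real $\ppone_1$ with $(\contone,\tdone)\smallstepreal{\trtwo}\ppone_1$ (uniqueness coming from the earlier lemma on accepting probabilities, once one notes that $\pairconv$ is strongly normalizing on pair distributions as well, as remarked just before Lemma~\ref{lemma:bisimint}), and then $\Pr(\contone[\tdone],\trtwo)=\ppone_1$; symmetrically there is a unique $\ppone_2$ with $(\contone,\tdtwo)\smallstepreal{\trtwo}\ppone_2$ and $\Pr(\contone[\tdtwo],\trtwo)=\ppone_2$. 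Since these are precisely the reductions of the singleton pair distributions, we also have $\pdone\smallstepreal{\trtwo}\ppone_1$ and $\pdtwo\smallstepreal{\trtwo}\ppone_2$.

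Finally I would invoke clause~(2) of Lemma~\ref{lemma:bisimext} (Bisimulation, Externally) applied to $\pdone\sister\pdtwo$: from $\pdone\smallstepreal{\trtwo}\ppone_1$ it yields $\pdtwo\smallstepreal{\trtwo}\ppone_1$, and uniqueness of the accepting probability forces $\ppone_1=\ppone_2$. Hence $\Pr(\contone[\tdone],\trtwo)=\ppone_1=\ppone_2=\Pr(\contone[\tdtwo],\trtwo)$, and since $\contone$ and $\trtwo$ were arbitrary this is the claim. I do not expect any genuine obstacle here: all the work has already been done in Internal Stability, its corollaries Lemma~\ref{lemma:bisimint} and Lemma~\ref{lemma:bisimext}, and the correspondence Lemma~\ref{lemma:pairconv}; the only mild point to check is that the uniqueness statement for accepting probabilities, originally phrased for term distributions, transfers to context--pair distributions, which is immediate from strong normalization of $\pairconv$ on pair distributions.
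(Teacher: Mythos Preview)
Your proposal is correct and follows essentially the same route as the paper: form the trace-related singleton pair distributions, invoke Lemma~\ref{lemma:bisimext}(2), and transfer the result to the filled terms via Lemma~\ref{lemma:pairconv}. The only cosmetic difference is that the paper first disposes of traces not ending in a $\view\cdot$ action by observing that both probabilities are trivially $1$, whereas you implicitly work only with traces for which $\smallstepreal{\trtwo}$ yields a real number; either way the substantive argument is identical.
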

\begin{proof}
If the trace $\trtwo$ doesn't end with the action $\view\cdot$ then
$\Pr(\contone[\tdone],\trtwo)=1=\Pr(\contone[\tdtwo],\trtwo)$.\\ 
Otherwise we know that
$(\contone,\tdone)\smallstepreal\trtwo\ppone$, we can write $\Pr((\contone,\tdone),\trtwo)=\ppone$, and by Lemma~\ref{lemma:bisimext}
we know $(\contone,\tdtwo)\smallstepreal\trtwo\ppone$. But
by Lemma~\ref{lemma:pairconv} we know
$\Pr(\contone[\tdone],\trtwo)=\Pr((\contone,\tdone),\trtwo)=\Pr((\contone,\tdtwo),\trtwo)=\Pr(\contone[\tdtwo],\trtwo)$ and
then the thesis.
\end{proof}
We are now in a position to prove the main result of this section:
\begin{theorem}
Trace equivalence is a congruence.
\end{theorem}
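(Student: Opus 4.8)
The plan is to obtain the theorem as an essentially immediate consequence of the last lemma, which already carries the full weight of the argument. Recall that \emph{congruence} here means \emph{compatible equivalence relation}: trace equivalence has already been observed to be an equivalence relation, so what remains is compatibility, i.e.\ that $\treq$ is preserved by all (linear) contexts. Concretely, given closed terms $\tone,\ttwo$ with $\proves\tone,\ttwo:\typeone$ such that $\tone\treq\ttwo$, and any context $\contone$ with $\proves\contone[\proves\typeone]:\typetwo$, I must show $\contone[\tone]\treq\contone[\ttwo]$.

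The single step is to instantiate the previous lemma with the Dirac term distributions $\tdone=\{\tone^1\}$ and $\tdtwo=\{\ttwo^1\}$. By the definition of trace equivalence on terms, $\tone\treq\ttwo$ is exactly $\tdone\treq\tdtwo$, so the hypothesis of that lemma holds; since $\contone[\{\tone^1\}]=\{\contone[\tone]^1\}$ and likewise for $\ttwo$, its conclusion $\Pr(\contone[\tdone],\trtwo)=\Pr(\contone[\tdtwo],\trtwo)$ for every trace $\trtwo$ is precisely $\Pr(\{\contone[\tone]^1\},\trtwo)=\Pr(\{\contone[\ttwo]^1\},\trtwo)$ for every $\trtwo$, that is, $\contone[\tone]\treq\contone[\ttwo]$. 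The one point to keep an eye on is the typing bookkeeping: the lemma is used in the regime where $\contone[\tone]$ and $\contone[\ttwo]$ are closed and well typed, which is exactly what the context typing judgment $\proves\contone[\proves\typeone]:\typetwo$ guarantees. If one prefers the operator-by-operator formulation of compatibility (for instance $\tone\tone\treq\ttwo\ttwo$ from componentwise trace equivalence, and similarly for $\conzero(\cdot)$, $\conone(\cdot)$, $\tail(\cdot)$, abstraction, $\ccase{\typeone}{\cdot}{\cdot}{\cdot}{\cdot}$, and the guard position of recursion), it follows from the one-hole statement together with transitivity of $\treq$, substituting the two subterms one at a time.

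I expect no genuine obstacle to remain at this stage. The difficult work was already carried out in the preceding lemmas: moving the analysis from terms to context pairs, proving that trace-related pair distributions stay trace-related under internal reduction (Internal Stability, hence Lemma~\ref{lemma:bisimint}) and under external reduction (Lemma~\ref{lemma:bisimext}), and Lemma~\ref{lemma:pairconv} identifying the behaviour of $(\contone,\tdone)$ with that of $\contone[\tdone]$. The only mild subtlety here is to make sure that trace equivalence of term distributions, read off on Dirac distributions, really coincides with trace equivalence of terms (immediate from the definition), and that every context considered is among the linear ones for which the above machinery was developed.
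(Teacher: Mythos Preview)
Your proposal is correct and follows essentially the same route as the paper: instantiate the preceding lemma (that $\tdone\treq\tdtwo$ implies $\Pr(\contone[\tdone],\trtwo)=\Pr(\contone[\tdtwo],\trtwo)$ for all $\contone,\trtwo$) with the Dirac distributions $\{\tone^1\}$ and $\{\ttwo^1\}$. The paper's proof differs only cosmetically, unpacking that lemma by citing Lemma~\ref{lemma:pairconv} and Lemma~\ref{lemma:bisimext} directly and observing that $\{(\contone,\{\tone^1\})^1\}\sister\{(\contone,\{\ttwo^1\})^1\}$.
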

\begin{proof}
We have to prove that, given two terms $\tone,\ttwo$ such that
$\tone\treq\ttwo$ then for all contexts $\contone$, we have that
$\contone[\tone]\treq\contone[\ttwo]$, i.e., for all traces
$\trtwo$ we have
$\Pr(\contone[\tone],\trtwo)=\Pr(\contone[\ttwo],\trtwo)$.
But by Lemma~\ref{lemma:pairconv} and Lemma~\ref{lemma:bisimext} we have, indeed, that
$\Pr(\contone[\tone],\trtwo)=\Pr((\contone,\{\tone^1\}),\trtwo)=
\Pr((\contone,\{\ttwo^1\},\trtwo)=\Pr(\contone[\ttwo],\trtwo)$,
because the two pair distributions $\{(\contone,\{\tone^1\})^1\}$
and $\{((\contone,\{\ttwo^1\}))^1\}$ are trace-related.
\end{proof}
%
\begin{corollary}[Soundness]
Trace equivalence is included into context equivalence.
\end{corollary}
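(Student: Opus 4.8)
The plan is to piggy-back on the congruence theorem just proved, adding only a small bookkeeping observation that connects the labelled relation $\smallstepreal{\cdot}$ at base type with the semantics $\sem{\cdot}$.

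First I would record the following elementary fact: for every closed term $\tone$ of type $\typestring$ and every string $\strp\strone$, one has $\Pr(\{\tone^1\},\view{\strp\strone})=\sem{\tone}(\strp\strone)$. This is immediate from the second lemma of Section~\ref{sect:traceequivalence}: the singleton term distribution $\{\tone^1\}$ reduces internally to its unique normal form, which is exactly $\sem{\tone}$, so $\{\tone^1\}\smallstep{\emptyt}\sem{\tone}$; writing $\sem{\tone}=\{(\strp{\strone_i})^{\ppone_i}\}$, the rule for $\view{\cdot}$ then gives $\{\tone^1\}\smallstepreal{\view{\strp\strone}}\sum_{\strp{\strone_i}=\strp\strone}\ppone_i=\sem{\tone}(\strp\strone)$, and by the same lemma this real number is the unique one, hence equal to $\Pr(\{\tone^1\},\view{\strp\strone})$. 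I would also note, trivially, that a closed term of type $\typestring$ is compatible precisely with the traces of the form $\view{\strp\strone}$, so restricting attention to such traces is legitimate.

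With this in hand, suppose $\proves\tone,\ttwo:\typeone$ and $\tone\treq\ttwo$, and fix an arbitrary context $\contone$ with $\proves\contone[\proves\typeone]:\typestring$. By the congruence theorem, $\contone[\tone]\treq\contone[\ttwo]$, that is, $\Pr(\{\contone[\tone]^1\},\trtwo)=\Pr(\{\contone[\ttwo]^1\},\trtwo)$ for every trace $\trtwo$. Since $\contone[\tone]$ and $\contone[\ttwo]$ are closed terms of type $\typestring$, we may instantiate this equality at the trace $\trtwo=\view{\strp\emptys}$ and apply the fact above to both sides, obtaining $\sem{\contone[\tone]}(\strp\emptys)=\Pr(\{\contone[\tone]^1\},\view{\strp\emptys})=\Pr(\{\contone[\ttwo]^1\},\view{\strp\emptys})=\sem{\contone[\ttwo]}(\strp\emptys)$. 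As $\contone$ was arbitrary, this is exactly $\tone\lceq\ttwo$, which is the claim.

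There is no genuine obstacle here: all the difficulty has already been absorbed into the proof that trace equivalence is a congruence, and the only step that needs any care is the bookkeeping lemma relating $\smallstepreal{\cdot}$ to $\sem{\cdot}$, which is a routine unfolding of the definitions of $\pairconv$, $\smallstep{\cdot}$ and $\smallstepreal{\cdot}$ on a term distribution whose support, after internal reduction, consists only of string values.
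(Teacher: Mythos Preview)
Your proposal is correct and follows essentially the same approach as the paper: invoke the congruence theorem to obtain $\contone[\tone]\treq\contone[\ttwo]$, then instantiate at the trace $\view{\strp\emptys}$ and use the identification $\Pr(\cdot,\view{\strp\emptys})=\sem{\cdot}(\strp\emptys)$. The paper's proof is terser and leaves the bookkeeping identity implicit, while you spell it out carefully, but the structure is identical.
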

\begin{proof}
If $\tone\treq\ttwo$, then by the previous theorem we have that for
all contexts $\contone$ we have $\contone[\tone]\treq\contone[\ttwo]$
and this means that if we choose a trace $\trone=\view{\strp\emptys}$ then we
have
$\sem{\contone[\tone]}(\strp\emptys)=\Pr(\contone[\tone],\view{\strp\emptys})=\Pr(\contone[\ttwo],\view{\strp\emptys})=\sem{\contone[\ttwo]}(\strp\emptys)$,
and so the thesis.
\end{proof}
\begin{theorem}[Full Abstraction]
Context equivalence coincides with trace equivalence
\end{theorem}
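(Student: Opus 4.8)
The plan is to prove the only inclusion still missing, namely that \emph{context equivalence is contained in trace equivalence}; combined with the Soundness corollary this yields the stated coincidence. The heart of the argument is a \emph{definability} lemma: for every type $\typeone$ and every trace $\trone$ compatible with $\typeone$ I will construct a \emph{linear} context $\contone_\trone$ with $\proves\contone_\trone[\proves\typeone]:\typestring$ such that, for every closed $\tone$ of type $\typeone$,
\[
\sem{\contone_\trone[\tone]}(\strp\emptys)=\Pr(\{\tone^1\},\trone).
\]
Granting this, if $\tone\lceq\ttwo$ then instantiating the definition of context equivalence at $\contone_\trone$ gives $\Pr(\{\tone^1\},\trone)=\Pr(\{\ttwo^1\},\trone)$ for every compatible $\trone$, which is precisely $\tone\treq\ttwo$; together with Soundness this proves full abstraction.

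For the construction, recall that compatibility forces $\trone$ to have the shape $\pass{\vone_1}\cdot\pass{\vone_2}\cdots\pass{\vone_k}\cdot\view{\strp\strone}$, where $\typeone=\atwo_1\typeone_1\to\cdots\to\atwo_k\typeone_k\to\typestring$ and each $\vone_j\in\values^{\typeone_j}$. I set
\[
\contone_\trone\;:=\;h_{\strp\strone}\,(\hole\,\vone_1\,\vone_2\cdots\vone_k),
\]
where $h_{\strp\strone}$ is a fixed \emph{closed} \RSLR\ term that tests its string argument for equality with the fixed string $\strp\strone$, returning $\strp\emptys$ on success and $\strp\one$ on failure; such a term is routinely definable — in fact without any recursion, by a nested case-distinction of depth $|\strp\strone|$ over $\tail$-iterates of the bound variable, which is legal since base-type variables may be duplicated. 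Well-typedness of $\contone_\trone$ is immediate from the typing rules (all side conditions of the application rule are vacuous on closed terms and on the empty context). The essential point is that $\contone_\trone$ is \emph{linear}: the hole occurs once, in application-argument position, and never inside the arguments of a recursion operator — any recursion used inside $h_{\strp\strone}$ sits under that term's leading $\lambda$ and does not enclose $\hole$.

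For correctness, write $\done:=\sem{\tone\,\vone_1\cdots\vone_k}$, a distribution over strings. On one side, the big-step rules give $\sem{\contone_\trone[\tone]}=\sum_{\strp\strtwo}\done(\strp\strtwo)\cdot\sem{h_{\strp\strone}\,\strp\strtwo}$, whence $\sem{\contone_\trone[\tone]}(\strp\emptys)=\done(\strp\strone)$. On the other side, one shows by induction on $j$ that $\{\tone^1\}\smallstep{\pass{\vone_1}\cdots\pass{\vone_j}}\tdone_j$ where the unique value-distribution normal form of $\tdone_j$ is $\sem{\tone\,\vone_1\cdots\vone_j}$: the base case is $\{\tone^1\}\smallstep\emptyt\sem{\tone}$ by the uniqueness lemma for $\conv$, and the step uses the trace rule for $\pass{\vone}$ together with the big-step application rule, which — since $\sem{\vone}=\{\vone^1\}$ — reduces to $\sum_i\ppone_i\cdot\sem{\tone_i\subs{\vone}{\varone}}$ when $\sem{\tone\,\vone_1\cdots\vone_{j-1}}=\{(\lambda\varone.\tone_i)^{\ppone_i}\}$. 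Applying finally the rule for $\view{\strp\strone}$ yields $\Pr(\{\tone^1\},\trone)=\done(\strp\strone)=\sem{\contone_\trone[\tone]}(\strp\emptys)$ as required, and symmetrically for $\ttwo$.

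The only genuinely delicate points are (i) checking that the separating context is honestly linear — that packaging the string-matching machinery inside a closed auxiliary term keeps every occurrence of $\hole$ out of the scope of a recursion operator — and (ii) the bookkeeping lemma relating the labelled small-step semantics of $\{\tone^1\}$ under a sequence of $\pass$-moves to the big-step semantics of the iterated application $\tone\,\vone_1\cdots\vone_k$. Both are elementary but must be stated with care; everything else follows from the uniqueness lemma for $\conv$ and from the Soundness corollary already established.
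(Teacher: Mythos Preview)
Your proposal is correct and follows essentially the same approach as the paper: both establish the missing inclusion by a definability lemma, exhibiting for each trace $\trone$ a linear context $\contone_\trone$ with $\sem{\contone_\trone[\tone]}(\strp\emptys)=\Pr(\tone,\trone)$, which the paper states can be built ``by induction on the structure of $\typeone$'' and you spell out explicitly as $h_{\strp\strone}(\hole\,\vone_1\cdots\vone_k)$. Your additional care about linearity of $\contone_\trone$ and about matching the labelled small-step semantics to the big-step semantics of the iterated application fills in details the paper leaves implicit, but the argument is the same.
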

\begin{proof}
For any admissible trace $\trone$ for $\typeone$, there is
a context $\contone_{\trone}\hole$ such that
$\Pr(\tone,\trone)=\sem{\contone_{\trone}[\tone]}(\strp\emptys)$,
which can be proved by induction on the structure of $\typeone$.
\end{proof}
\subsection{Some Words on Applicative Bisimulation}
As we already discussed, the quantification over all contexts makes
the task of proving two terms to be context equivalent burdensome,
even if we restrict to linear contexts. And we cannot say that trace
equivalence really overcomes this problem: there is a universal
quantification anyway, even if contexts are replaced by objects
(i.e. traces) having a simpler structure. It is thus natural to look
for other techniques. The interactive view provided by traces suggests
the possibility to go for coinductive techniques akin to Abramsky's
applicative bisimulation, which has already been shown to be adaptable
to probabilistic $\lambda$-calculi~\cite{DalLagoSangiorgiAlberti,DalLagoCrubille}.

First of all, we introduce a Labeled Transition System, by
defining a Labeled Markov chain
$\mcone=(\mcstates,\mclabels,\mcprob)$ where
$\mcstates=\terms\uplus\values$ is the set of states,
$\mclabels=\{\eval,\pass\cdot,\view\cdot\}$ is the set of labels, $\types$ is the set of types  and
$\mcprob$ is the probability measure defined as follows:
\begin{align*}
	\mcprob:(\mcstates,\types)\times\mclabels\times(\mcstates,\types)\to[0,1]\\
	\mcprob((\tone,\typeone),\eval,(\vone,\typeone))=\sem\tone(\vone) &
		\qquad
		\mcprob((\lambda\varone.\tone,\aone\typeone\to\typetwo),\pass\vone,(\tone\subs{\vone}{\varone},\typetwo))=1\\
	\mcprob((\strp\strone,\typestring),\view{\strp\strone},(\strp\strone,\typestring))=1 &
		\qquad
		\mcprob((\strp\strone,\typestring),\view{\strp\strone'},(\strp\strone,\typestring))=0
\end{align*}
So, before giving the definition of bisimulation we define a typed
relation as a family
$\relone=(\relone_\tcontone^\typeone)_{\typeone,\tcontone}$, where
each $\relone_\tcontone^\typeone$ is a binary relation on
$\terms_\tcontone^\typeone$; we define the open extension
$\open\relone$ by saying that, given two terms $\tone,\ttwo$ we have
$\tone\open\relone\ttwo$ iff for all $\tcontone$-closure $\xi$ we have
that $\proves(\tone\xi)\relone(\ttwo\xi):\typeone$.
\begin{definition}\label{def_pab}
Given a Labeled Markov Chain $\mcone=(\mcstates,\mclabels,\mcprob)$ a
\emph{probabilistic applicative bisimulation} is an equivalence relation
$\relone$ between the states of the Markov chain such that, given two
states $\tone,\ttwo$ we have $(\tone\ \relone\ \ttwo):\typeone$ if and
only if for each equivalence class $\eqcone$ modulo $\pabone$ we have:
$$\mcprob((\tone,\typeone),\labone,\eqcone)=\mcprob((\ttwo,\typeone),\labone,\eqcone)$$
We define $\sim$ as the reflexive and transitive closure of
$\bigcup\{\relone\ |\ \relone\ \text{bisimulation}\}$. We say that
two terms $\tone,\ttwo\in\terms_\typeone^\tcontone$ are bisimilar
(We write $\tcontone\proves\tone\sim\ttwo:\typeone$) if there
exists a bisimulation between them and we define $\open\sim$ as the bisimulation equivalence.
\end{definition}
\begin{definition}
A \emph{probabilistic applicative bisimulation} is defined
to be any type-indexed family of relations $\{\relone_\typeone\}_{\typeone\in\types}$
such that for each $\typeone$, $\relone_\typeone$ is an equivalence
relation over the set of closed terms of type $\typeone$, and
moreover the following holds:
\newcommand{\eqone}{E}
\newcommand{\subst}[3]{#1\{#2/#3\}}
\begin{varitemize}
\item
  If $\tone\relone_\typeone\ttwo$, then for every equivalence
  relation $\eqone$ modulo $\relone_\typeone$,
  it holds that $\sem{\tone}(\eqone)=\sem{\ttwo}(\eqone)$.
\item
  If $(\lambda\varone.\tone)\relone_{\aone\typeone\to\typetwo}(\lambda\varone.\ttwo)$,
  then for every closed value $\vone$ of type $\typeone$, it holds that
  $(\tone\subs{\vone}{\varone})\relone_{\typetwo}(\ttwo\subs{\vone}{\varone})$.
\item
  If $\strp\strone\relone_\typestring\strp\strtwo$, then $\strp\strone=\strp\strtwo$.
\end{varitemize}
\end{definition}
With some effort, one can prove that a greatest applicative
bisimulation exists, and that it consists of the union (at any type)
of all bisimulation relations. This is denoted as $\sim$ and
said to be (applicative) \emph{bisimilarity}. One can then
generalize $\sim$ to a relation $\open\sim$ on open terms by the usual
open extension.

One way to show that bisimilarity is included in context equivalence
consists in proving that $\open\sim$ is a congruence; to reach this
goal we first lift $\open\sim$ to another relation $\hopen\sim$ by the
so-called \emph{Howe's method}~\cite{Howe96}, and then transitive
close it, obtaining another relation $\hotc\sim$. This can be done by
the rules in Figure~\ref{figure:howel}. By construction, the relation
$\hotc\sim$ is a congruence. But one can also show that it coincides
with $\open\sim$, namely that $\hotc\sim\subseteq\open\sim$ and
$\open\sim\subseteq\hotc\sim$.
\begin{figure}[!h]
\centering \fbox{
\begin{minipage}{.95\textwidth}
\footnotesize
{
	$$
	\AxiomC{$\strp\strone\open\sim \tone$}
	\RightLabel{$\How0$}
	\UnaryInfC{$\strp\strone\hopen\sim \tone$}
	\DisplayProof
	\qquad
	\AxiomC{$\varone\open\sim \tone$}
	\RightLabel{$\How1$}
	\UnaryInfC{$\varone\hopen\sim\tone$}
	\DisplayProof
	\qquad
	\AxiomC{$\begin{array}{cc}
		\tone\free\varone\hopen\sim\ttwo\free\varone & \lambda\varone.\ttwo\sim\tthree
	\end{array}$}
	\RightLabel{$\How2$}
	\UnaryInfC{$\lambda\varone.\tone\hopen\sim\tthree$}
	\DisplayProof
	\qquad
	\AxiomC{$\begin{array}{cc}
		\tone_1\hopen\sim\ttwo_1 & \\
		\tone_2\hopen\sim\ttwo_2 & \ttwo_1\ttwo_2\sim\tthree
	\end{array}$}
	\RightLabel{$\How3$}
	\UnaryInfC{$\tone_1\tone_2\hopen\sim\tthree$}
	\DisplayProof
	$$
	\vspace{6pt}
	$$
	\AxiomC{$\begin{array}{cc}
		\tone\hopen\sim\ttwo &\\
		\tone_0\hopen\sim\ttwo_0 & \tone_\epsilon\hopen\sim\ttwo_\epsilon\\
		\tone_1\hopen\sim\ttwo_1 & \ccase{\typeone}{\ttwo}{\ttwo_0}{\ttwo_1}{\ttwo_\epsilon}\sim\tthree
		\end{array}$}
	\RightLabel{$\How4$}
	\UnaryInfC{$\ccase{\typeone}{\tone}{\tone_0}{\tone_1}{\tone_\epsilon}\hopen\sim\tthree$}
	\DisplayProof
	\qquad
	\AxiomC{$\begin{array}{cc}
		\tone\hopen\sim\ttwo & \\
		\tone_0\hopen\sim\ttwo_0 & \tone_\epsilon\hopen\sim\ttwo_\epsilon\\
		\tone_1\hopen\sim\ttwo_1 & \rrec{\typeone}{\ttwo}{\ttwo_0}{\ttwo_1}{\ttwo_\epsilon}\sim\tthree
	\end{array}$}
	\RightLabel{$\How5$}
	\UnaryInfC{$\rrec{\typeone}{\tone}{\tone_0}{\tone_1}{\tone_\epsilon}\hopen\sim\tthree$}
	\DisplayProof
	$$
	\vspace{6pt}
	$$
	\AxiomC{$\tone\hopen\sim\ttwo$}
	\RightLabel{$\TC1$}
	\UnaryInfC{$\tone\hotc\sim\ttwo$}
	\DisplayProof
	\qquad
	\AxiomC{$\tone\hotc\ttwo$}
	\AxiomC{$\ttwo\hotc\tthree$}
	\RightLabel{$\TC2$}
	\BinaryInfC{$\tone\hotc\tthree$}
	\DisplayProof
	$$
}
\end{minipage}}
\caption{Howe's Lifting and Transitive Closure Rules}\label{figure:howel}
\end{figure}
The first inclusion is again an easy consequence of the way
$\hotc\sim$ is defined, and of the fact that $\sim$ is an equivalence
relation. The second one is more difficult, and needs some
intermediary steps to get proved. The first step is given by the
following lemma.
\begin{lemma}[Key Lemma]
Given two terms $\tone,\ttwo$, we have:
\begin{varitemize}
\item 
  If $\proves\tone\hopen\sim\ttwo:\aone\typeone\to\typetwo$, then for
  all
  $\eqcone\in\terms_{\varone:\aone\typeone}^{\typetwo}/_{\hopen\sim}$
  equivalence class modulo $\hopen\sim$ it holds that
  $\sem\tone(\lambda\varone.\eqcone)=\sem\ttwo(\lambda\varone.\eqcone)$.
\item 
  If $\proves\tone\hopen\sim\ttwo:\typestring$, then for all
  $\strp\strone\in\values^\typestring$ we have
  $\sem\tone(\strp\strone)=\sem\ttwo(\strp\strone)$.
\end{varitemize}
\end{lemma}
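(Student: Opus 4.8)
The plan is to prove the two clauses \emph{simultaneously}, by induction on (the height of) the derivation of $\tone\conv\done$, performing in each case an inversion on the last rule used to derive $\tone\hopen\sim\ttwo$, i.e. on which of the Howe rules $\How0$--$\How5$ of Figure~\ref{figure:howel} (plus the analogous rules for the string constructors and $\rand$) was applied last. Before starting the induction I would recall the standard structural properties of Howe's lifting on which every step rests: that $\hopen\sim$ is reflexive, compatible and well-typed by construction, that it contains $\open\sim$, that it \emph{absorbs bisimilarity on the right} (if $\tone\hopen\sim\ttwo$ and $\ttwo\open\sim\tthree$ then $\tone\hopen\sim\tthree$, by an easy induction on the Howe rules using transitivity of $\sim$), and that it is \emph{closed under value substitution} (if $\varone:\aone\typeone\proves\tone\hopen\sim\ttwo:\typetwo$ and $\proves\vone\hopen\sim\vtwo:\typeone$ then $\proves\tone\subs\vone\varone\hopen\sim\ttwo\subs\vtwo\varone:\typetwo$). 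The only interface with bisimilarity I would use is the defining clause of $\open\sim={\sim}$ read on value distributions: two $\sim$-related closed terms assign the same mass to every union of $\sim$-equivalence classes; since $\open\sim$ is contained in $\hopen\sim$, each $\hopen\sim$-class is such a union, so in every case it is enough to match the two value distributions on $\hopen\sim$-classes, which will be produced directly.

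The base cases are immediate. If $\tone$ is a string $\strp\strone$ (or the constant $\rand$), inversion of $\How0$ gives $\strp\strone\open\sim\ttwo$ (resp. $\rand\open\sim\ttwo$), and the bisimilarity clauses force $\sem\ttwo$ to agree with $\sem\tone$ on every string. If $\tone=\lambda\varone.\tone'$, inversion of $\How2$ gives a term $\ttwo'$ with $\varone:\aone\typeone\proves\tone'\hopen\sim\ttwo':\typetwo$ and $\lambda\varone.\ttwo'\open\sim\ttwo$; reassembling the premise by compatibility yields $\lambda\varone.\tone'\hopen\sim\lambda\varone.\ttwo'$, and the evaluation clause of $\sim$ applied to $\lambda\varone.\ttwo'\open\sim\ttwo$ gives that the mass $\sem\ttwo$ places on $\lambda\varone.\eqcone$ is $1$ exactly when $\tone'\in\eqcone$, which is $\sem\tone(\lambda\varone.\eqcone)$.

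For the inductive step, the constructor/destructor cases $\conzero(\tone'),\conone(\tone'),\tail(\tone')$ and the case-distinction case are routine: one inverts the corresponding Howe rule, applies the induction hypothesis to the guard subderivation (at base type $\typestring$), learning that the two guards induce \emph{the same} distribution on strings and that corresponding branches agree on $\hopen\sim$-classes, and then observes that the big-step rule combines these data in the same way on both sides, whereupon right-absorption of the $\open\sim$-witness closes the chain. The crux is the application case $\tone=\tone_1\tone_2$ (and, similarly, the recursion case). Inversion of $\How3$ gives $\tone_1\hopen\sim\ttwo_1$, $\tone_2\hopen\sim\ttwo_2$ and $\ttwo_1\ttwo_2\open\sim\ttwo$. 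I would apply the induction hypothesis to $\tone_1\conv\done_1$ and to $\tone_2\conv\done_2$ to get that $\done_1,\done_2$ agree, class by class, with $\sem{\ttwo_1},\sem{\ttwo_2}$; then, for each $\lambda\varone.\tthree\in\supp(\done_1)$ and each $\vone\in\supp(\done_2)$, I would apply the induction hypothesis to the premise $\tthree\subs\vone\varone\conv\dthree_{\tthree,\vone}$, using closure under value substitution to see that $\hopen\sim$-related pairs $(\tthree,\vone)$ yield distributions $\dthree$ agreeing on every $\hopen\sim$-class. Regrouping the big-step formula $\done=\sum_{\lambda\varone.\tthree,\vone}\done_1(\lambda\varone.\tthree)\cdot\done_2(\vone)\cdot\dthree_{\tthree,\vone}$ by classes and exchanging the (finite) sums gives $\sem{\tone_1\tone_2}=\sem{\ttwo_1\ttwo_2}$ on every $\hopen\sim$-class, and finally $\ttwo_1\ttwo_2\open\sim\ttwo$ together with the evaluation clause of $\sim$ completes it. The recursion case needs, in addition, an inner induction on the length of the guiding string, mirroring the unfolding of $\textsf{rec}$ in the big-step semantics and re-using the application case at each unfolding.

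The main obstacle is precisely this application/recursion case: turning the three pieces of information supplied by $\How3$/$\How5$ --- Howe on the operator, Howe on the operand, and bisimilarity on the reassembled redex --- into a single statement about $\hopen\sim$-classes forces (i) careful bookkeeping of the distribution algebra of the big-step rules, so that sums over classes can legitimately be rearranged and so that $\dthree_{\tthree,\vone}$ can be seen to depend, on a given class, only on the classes of $\lambda\varone.\tthree$ and $\vone$; and (ii) the substitutivity lemma for $\hopen\sim$, whose proof in \RSLR\ is \emph{not} purely routine, since substitution must respect the affine use of higher-order variables and the $\modal$-freeness constraints on step functions, so one has to check that the Howe-constructed witness remains typable, i.e. that the context-splitting discipline of Figure~\ref{fig_typingrules} is not violated. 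Everything else is bookkeeping.
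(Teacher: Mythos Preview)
Your proposal is correct and proceeds through exactly the same case analysis as the paper, with the same algebraic manipulations in each case (grouping the big-step sum by $\hopen\sim$-classes, pulling out the constant inner factor, and closing with the $\open\sim$-witness via the evaluation clause of bisimilarity). The one genuine difference is the well-founded order you induct on: the paper declares induction on the derivation of $\tone\hopen\sim\ttwo$, whereas you induct on the height of the big-step derivation $\tone\conv\done$. Your choice is the standard one in probabilistic Howe's method and is in fact the one that makes the application and recursion cases go through cleanly: to justify that $\sem{\tthree\subs\vone\varone}(\lambda\varone.\eqcone)$ depends only on the $\hopen\sim$-classes of $\tthree$ and $\vone$, one must reinvoke the Key Lemma on the pair $\tthree\subs\vone\varone\hopen\sim\tthree'\subs\vtwo\varone$, whose Howe derivation is \emph{not} a subderivation of the original (it is manufactured by the substitutivity lemma), but whose \emph{evaluation} derivation \emph{is} a premise of the big-step rule for application and hence has smaller height. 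The paper's proof tacitly relies on the same fact when it writes ``$\sem{\eqcone_\tthree\subs{\strp\strone'}{\vartwo}}$'' as if well-defined, and similarly in the recursion case when it applies the hypothesis to the unrolled body $(\tone'_0\strp{\zero\strone})(\rrec{\typeone}{\strp\strone}{\tone_0'}{\tone_1'}{\tone_\epsilon'})$; so your induction buys precisely the missing justification. Your additional remark about an inner induction on the guiding string in the recursion case is harmless but unnecessary once you induct on evaluation height, since each unfolding already appears as a premise. You also correctly single out substitutivity of $\hopen\sim$ (and its compatibility with the affine, aspect-constrained typing) as the one auxiliary lemma that is not pure bookkeeping.
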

\begin{proof}
We work by induction on the derivation of $\proves\tone\hopen\sim\ttwo$.
\begin{itemize}
\item
	Suppose $\tone=\strp\strone$, then we have $\proves\strp\strone\hopen\sim\ttwo:\typestring$ that is derived from $\How0$:
	$$\AxiomC{$\proves\strp\strone\open\sim\ttwo$}
	    \UnaryInfC{$\proves\strp\strone\hopen\sim\ttwo$}
	    \DisplayProof
	$$
	So we have, for all $\strp\strone'\in\values^\typestring$, by definition of $\open\sim$:
	$$\sem\strp\strone(\strp\strone')=\sem\ttwo(\strp\strone')$$
	
\item
	Suppose $\tone=\lambda\varone.\tone'$ then we have $\proves\lambda\varone.\tone'\hopen\sim\ttwo:\aone\typeone\to\typetwo$, derived from $\How2$:
	$$
	\AxiomC{$\begin{array}{cc}
		\tone'\free\varone\hopen\sim\ttwo'\free\varone & \lambda\varone.\ttwo'\open\sim\ttwo
	\end{array}$}
	\UnaryInfC{$\lambda\varone.\tone'\hopen\sim\ttwo$}
	\DisplayProof
	$$
	So, given $\eqcone\in\terms_{\varone:\aone\typeone}^{\typetwo}/_{\hopen\sim}$ we have:
	\begin{align*}
		\sem{\lambda\varone.\tone'}(\lambda\varone.\eqcone)=\left\{%
		\begin{array}{ll}
			1, & \hbox{if $\tone'\in\eqcone$;} \\
			0, & \hbox{otherwise.} \\
		\end{array}\right.=
		\left\{%
		\begin{array}{ll}
			1, & \hbox{if $\ttwo'\in\eqcone$;} \\
			0, & \hbox{otherwise.} \\
		\end{array}\right.=\sem{\lambda\varone.\ttwo'}(\lambda\varone.\eqcone)=\sem\ttwo(\lambda\varone.\eqcone)
	\end{align*}
\item 
	Suppose now $\tone=\ccase{\aone\typeone\to\typetwo}{\tone'}{\tone'_0}{\tone'_1}{\tone'_\epsilon}$, then: $\proves\ccase{\aone\typeone\to\typetwo}{\tone'}{\tone'_0}{\tone'_1}{\tone'_\epsilon}\hopen\sim\ttwo:\aone\typeone\to\typetwo$, which is derived from $\How4$:
	$$
	\AxiomC{$\begin{array}{cc}
		\tone'\hopen\sim\ttwo' &\\
		\tone_0'\hopen\sim\ttwo_0' & \tone_\epsilon'\hopen\sim\ttwo_\epsilon'\\
		\tone_1'\hopen\sim\ttwo_1' & \ccase{\aone\typeone\to\typetwo}{\ttwo'}{\ttwo_0'}{\ttwo_1''}{\ttwo_\epsilon}\sim\ttwo
		\end{array}$}
	\UnaryInfC{$\ccase{\aone\typeone\to\typetwo}{\tone'}{\tone_0'}{\tone_1'}{\tone_\epsilon'}\hopen\sim\ttwo$}
	\DisplayProof
	$$
	Then for all $\eqcone\in\terms_{\varone:\aone\typeone}^{\typetwo}/_{\hopen\sim}$ we have by induction hypothesis:
	\begin{align*}
		\sem\tone(\lambda\varone.\eqcone)=&
			\sem{\ccase{\aone\typeone\to\typetwo}{\tone'}{\tone'_0}{\tone'_1}{\tone'_\epsilon}}(\lambda\varone.\eqcone)=\\
		=&
			\sem{\tone'}(\strp\emptys)\sem{\tone'_\epsilon}(\lambda\varone.\eqcone) +\sum_{\strp\strone}\sem{\tone'}(\strp{\zero\strone})\sem{\tone'_0}(\eqcone)+\sum_{\strp\strone}\sem{\tone'}(\strp{\one\strone})\sem{\tone'_1}(\eqcone)=\\
		=&
			\sem{\ttwo'}(\strp\emptys)\sem{\ttwo'_\epsilon}(\lambda\varone.\eqcone) +\sum_{\strp\strone}\sem{\ttwo'}(\strp{\zero\strone})\sem{\ttwo'_0}(\eqcone)+\sum_{\strp\strone}\sem{\ttwo'}(\strp{\one\strone})\sem{\ttwo'_1}(\eqcone)=\sem\ttwo(\lambda\varone.\eqcone)
	\end{align*}
\item
	If $\tone=\ccase{\typestring}{\tone'}{\tone'_0}{\tone'_1}{\tone'_\epsilon}:\typestring$ the proof is similar to the previous case.
\item
	Suppose now $\tone=\tone_1\tone_2:\aone\typeone\to\typetwo$ and so we have $\proves\tone_1\tone_2\hopen\sim\ttwo:\aone\typeone\to \typetwo$ that is derived from:
	$$
	\AxiomC{$\begin{array}{cc}
		\tone_1\hopen\sim\ttwo_1 & \\
		\tone_2\hopen\sim\ttwo_2 & \ttwo_1\ttwo_2\sim\tthree
	\end{array}$}
	\UnaryInfC{$\tone_1\tone_2\hopen\sim\tthree$}
	\DisplayProof
	$$
	We have to face two different cases: $\tone_2\in\terms^\typestring$ and $\tone_2\in\terms^{\athree\typethree\to\typefour}$.	If $\tone_2\in\terms^\typestring$ then for all $\eqcone\in\terms_{\varone:\aone\typeone}^{\typetwo}/_{\hopen\sim}$ we have:
	\begin{align*}
		\sem\tone(\lambda\varone.\eqcone)=&
			\sem{\tone_1\tone_2}(\lambda\varone.\eqcone)=\sum_{\strp\strone'\in\values^\typestring}\sem{\tone_2}(\strp\strone')\left(\sum_{\eqcone_\tthree\in\values_{\vartwo:\astring\typestring}^{\aone\typeone\to\typetwo}}\sum_{\tthree\in\eqcone_\tthree}\sem{\tone_1}(\lambda\vartwo.\tthree)\sem{\tthree\subs{\strp\strone'}{\vartwo}}(\lambda\varone.\eqcone)\right)=\\
			=&
			\sum_{\strp\strone'\in\values^\typestring}\sem{\tone_2}(\strp\strone')\left(\sum_{\eqcone_\tthree\in\values_{\vartwo:\astring\typestring}^{\aone\typeone\to\typetwo}}\sem{\tone_1}(\lambda\vartwo.\eqcone_\tthree)\sem{\eqcone_\tthree\subs{\strp\strone'}{\vartwo}}(\lambda\varone.\eqcone)\right)=\\
			=&
			\sum_{\strp\strone'\in\values^\typestring}\sem{\ttwo_2}(\strp\strone')\left(\sum_{\eqcone_\tthree\in\values_{\vartwo:\astring\typestring}^{\aone\typeone\to\typetwo}}\sem{\ttwo_1}(\lambda\vartwo.\eqcone_\tthree)\sem{\eqcone_\tthree\subs{\strp\strone'}{\vartwo}}(\lambda\varone.\eqcone)\right)=\sem{\ttwo_1\ttwo_2}(\lambda\varone.\eqcone)=\\
			=&
			\sem\ttwo(\lambda\varone.\eqcone)
	\end{align*}
	If $\tone_2\in\terms^{\athree\typethree\to\typefour}$ then we have:
	\begin{align*}
		\sem\tone(\lambda\varone.\eqcone)=&
			\ \sem{\tone_1\tone_2}(\lambda\varone.\eqcone)=\\
		=&
			\sum_{\eqcone_\vone\in\values_{\varthree:\athree\typethree}^{\typefour}}\sum_{\vone\in\eqcone_\vone}\sem{\tone_2}(\lambda\varthree.\vone)
			\left(\sum_{\eqcone_\tthree\in\values_{\vartwo:\athree\typethree\to\typefour}^{\aone\typeone\to\typetwo}}\sum_{\tthree\in\eqcone_\tthree}\sem{\tone_1}(\lambda\vartwo.\tthree)\sem{\tthree\subs{\lambda\varthree.\vone}{\vartwo}}(\lambda\varone.\eqcone)\right)=\\
		=&
			\sum_{\eqcone_\vone\in\values_{\varthree:\athree\typethree}^{\typefour}}\sem{\tone_2}(\lambda\varthree.\eqcone_\vone)
			\left(\sum_{\eqcone_\tthree\in\values_{\vartwo:\athree\typethree\to\typefour}^{\aone\typeone\to\typetwo}}\sem{\tone_1}(\lambda\vartwo.\eqcone_\tthree)\sem{\eqcone_\tthree\subs{\lambda\varthree.\eqcone_\vone}{\vartwo}}(\lambda\varone.\eqcone)\right)=\\
		=&
			\sum_{\eqcone_\vone\in\values_{\varthree:\athree\typethree}^{\typefour}}\sem{\ttwo_2}(\lambda\varthree.\eqcone_\vone)
			\left(\sum_{\eqcone_\tthree\in\values_{\vartwo:\athree\typethree\to\typefour}^{\aone\typeone\to\typetwo}}\sem{\ttwo_1}(\lambda\vartwo.\eqcone_\tthree)\sem{\eqcone_\tthree\subs{\lambda\varthree.\eqcone_\vone}{\vartwo}}(\lambda\varone.\eqcone)\right)=\\
		&=
		\sem{\ttwo_1\ttwo_2}(\lambda\varone.\eqcone)
		= \sem\ttwo(\lambda\varone.\eqcone)
	\end{align*}
\item The case $\tone=\tone_1\tone_2:\typestring$ is similar to the previous one.
\item
	Finally, if $\tone=\rrec{\aone\typeone\to\typetwo}{\tone'}{\tone'_0}{\tone'_1}{\tone'_\epsilon}$ then we have $\proves\rrec{\aone\typeone\to\typetwo}{\tone'}{\tone'_0}{\tone'_1}{\tone'_\epsilon}\hopen\sim\ttwo$ which is derived from:
	$$
	\AxiomC{$\begin{array}{cc}
		\tone'\hopen\sim\ttwo' & \\
		\tone_0'\hopen\sim\ttwo_0 & \tone_\epsilon'\hopen\sim\ttwo_\epsilon'\\
		\tone_1'\hopen\sim\ttwo_1 & \rrec{\aone\typeone\to\typetwo}{\ttwo'}{\ttwo_0'}{\ttwo_1'}{\ttwo_\epsilon'}\sim\ttwo
	\end{array}$}
	\UnaryInfC{$\rrec{\aone\typeone\to\typetwo}{\tone'}{\tone_0'}{\tone_1'}{\tone_\epsilon'}\hopen\sim\ttwo$}
	\DisplayProof
	$$
	then for all $\eqcone\in\values_{\varone:\aone}^\typetwo/_{\hopen\sim}$ we have:
	\begin{align*}
		\sem\tone(\lambda\varone.\eqcone)&=
			\sem{\rrec{\aone\typeone\to\typetwo}{\tone'}{\tone'_0}{\tone'_1}{\tone'_\epsilon}}(\lambda\varone.\eqcone)=\\
			&=
			\sem{\tone'}(\strp\emptys)\sem{\tone_\epsilon}(\lambda\varone.\eqcone)+
			\sum_{\strp\strone\in\values^\typestring}\sem{\tone'}(\strp{\zero\strone})\sem{(\tone'_0\strp{\zero\strone})(\rrec{\aone\typeone\to\typetwo}{\strp\strone}{\tone_0'}{\tone_1'}{\tone_\epsilon})}(\lambda\varone.\eqcone)\\
			&\qquad + \sum_{\strp\strone\in\values^\typestring}\sem{\tone'}(\strp{\one\strone})\sem{(\tone'_1\strp{\one\strone})(\rrec{\aone\typeone\to\typetwo}{\strp\strone}{\tone_1'}{\tone_1'}{\tone_\epsilon})}(\lambda\varone.\eqcone)=\\
			&=
			\sem{\ttwo'}(\strp\emptys)\sem{\ttwo_\epsilon}(\lambda\varone.\eqcone)+
			\sum_{\strp\strone\in\values^\typestring}\sem{\ttwo'}(\strp{\zero\strone})\sem{(\ttwo'_0\strp{\zero\strone})(\rrec{\aone\typeone\to\typetwo}{\strp\strone}{\ttwo_0'}{\ttwo_1'}{\ttwo_\epsilon})}(\lambda\varone.\eqcone)\\
			&\qquad + \sum_{\strp\strone\in\values^\typestring}\sem{\ttwo'}(\strp{\one\strone})\sem{(\ttwo'_1\strp{\one\strone})(\rrec{\aone\typeone\to\typetwo}{\strp\strone}{\ttwo_1'}{\ttwo_1'}{\ttwo_\epsilon})}(\lambda\varone.\eqcone) = \sem\ttwo(\lambda\varone.\eqcone)
	\end{align*}
\item
	The case $\tone=\rrec{\typestring}{\tone'}{\tone'_0}{\tone'_1}{\tone'_\epsilon}$ is similar to the previous one.
\end{itemize}
So we have the thesis.\\
\end{proof}

\begin{theorem}
$\hotc\sim$ is a bisimulation.
\end{theorem}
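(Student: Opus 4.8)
The plan is to verify directly that $\hotc\sim$ satisfies the definition of a (probabilistic applicative) bisimulation: at every type its restriction to closed terms must be an equivalence relation, and it must satisfy \textbf{(i)} if $\tone\hotc\sim\ttwo$ then $\sem\tone(\eqcone)=\sem\ttwo(\eqcone)$ for every $\hotc\sim$-equivalence class $\eqcone$; \textbf{(ii)} if $\lambda\varone.\tone\hotc\sim\lambda\varone.\ttwo$ then $(\tone\subs\vone\varone)\hotc\sim(\ttwo\subs\vone\varone)$ for every closed value $\vone$; \textbf{(iii)} if $\strp\strone\hotc\sim\strp\strtwo$ then $\strp\strone=\strp\strtwo$. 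The idea is that each of \textbf{(i)}--\textbf{(iii)} holds ``one $\hopen\sim$-step at a time'' by the Key Lemma, and is then propagated along $\hotc\sim$-chains using $\TC2$.

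First I would record the standard Howe infrastructure, all of which follows from the rules of Figure~\ref{figure:howel} together with $\open\sim$ being an equivalence: $\hopen\sim$ is reflexive (induction on term structure, $\How0$ and $\How1$ at the base); hence $\open\sim\subseteq\hopen\sim$ (given $\tone\open\sim\ttwo$, apply the Howe rule for the head constructor of $\tone$ with identical subcomponents and side condition $\tone\open\sim\ttwo$); $\hopen\sim$ is compatible, so its transitive closure $\hotc\sim$ is a congruence (this is the ``by construction'' part); $\hopen\sim$, and hence $\hotc\sim$, is substitutive, i.e.\ closed under replacing a variable by $\hopen\sim$-related closed values; and $\hopen\sim$ enjoys the pseudo-transitivity $\hopen\sim\circ\open\sim\subseteq\hopen\sim$. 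From $\open\sim\subseteq\hopen\sim$ and $\TC1$ we get $\open\sim\subseteq\hotc\sim$, while reflexivity and transitivity of $\hotc\sim$ on closed terms are immediate.

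The only clause of ``equivalence relation'' requiring real work is symmetry, since Howe's lifting of a symmetric relation is typically not symmetric; I expect this to be the main obstacle. The plan is to prove $(\hopen\sim)^{\mathrm{op}}\subseteq\hotc\sim$ by induction on Howe derivations: if $t\hopen\sim s$ comes from subderivations $t_j\hopen\sim s_j$ and a side condition $C[s_1,\dots]\open\sim s$, with $C$ the head constructor of $t$, then the induction hypothesis gives $s_j\hotc\sim t_j$, congruence of $\hotc\sim$ lifts this to $C[s_1,\dots]\hotc\sim t$, and $C[s_1,\dots]\open\sim s$ gives $s\open\sim C[s_1,\dots]$ by symmetry of $\open\sim$, hence $s\hotc\sim C[s_1,\dots]$ via $\open\sim\subseteq\hotc\sim$; composing with $\TC2$ yields $s\hotc\sim t$. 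Then $(\hotc\sim)^{\mathrm{op}}=((\hopen\sim)^{\mathrm{op}})^{+}\subseteq(\hotc\sim)^{+}=\hotc\sim$, so $\hotc\sim$ is symmetric, hence an equivalence relation at every type. A useful byproduct is that the equivalence relation generated by $\hopen\sim$ coincides with $\hotc\sim$, so the ``equivalence classes modulo $\hopen\sim$'' in the statement of the Key Lemma are exactly the $\hotc\sim$-classes.

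It then remains to verify \textbf{(i)}--\textbf{(iii)}, decomposing a $\hotc\sim$-chain $\tone=u_0\hopen\sim u_1\hopen\sim\cdots\hopen\sim u_n=\ttwo$ through $\TC1$ and $\TC2$. Clause \textbf{(iii)} is immediate from $\How0$ and the string clause of $\open\sim$. For \textbf{(ii)}, apply the arrow clause of the Key Lemma along a chain between $\lambda\varone.\tone$ and $\lambda\varone.\ttwo$: since $\sem{\lambda\varone.\tone}$ and $\sem{\lambda\varone.\ttwo}$ are point masses and assign, through the chain, equal mass to each $\hotc\sim$-class of abstraction bodies, the bodies $\tone,\ttwo$ lie in the same $\hotc\sim$-class under $\varone:\aone\typeone$, and substitutivity of $\hotc\sim$ then gives $(\tone\subs\vone\varone)\hotc\sim(\ttwo\subs\vone\varone)$. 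For \textbf{(i)}: at base type it follows from the string clause of the Key Lemma together with \textbf{(iii)} (which makes each $\hotc\sim$-class of strings a singleton), and at arrow type from the arrow clause, using compatibility and \textbf{(ii)} to identify ``grouping abstractions by the $\hotc\sim$-class of their bodies'' with ``grouping them by their own $\hotc\sim$-class'', together with transitivity along the chain. Having established that $\hotc\sim$ is an equivalence relation satisfying \textbf{(i)}--\textbf{(iii)}, it is a bisimulation; consequently $\hotc\sim\subseteq\sim$ on closed terms, and combined with $\open\sim\subseteq\hotc\sim$ and substitutivity this yields $\hotc\sim=\open\sim$ --- from which, $\hotc\sim$ being a congruence, bisimilarity is a congruence as well.
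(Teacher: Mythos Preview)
Your proposal is correct and follows essentially the same approach as the paper: the paper's proof is a one-line sketch (``induction on the derivation of $\hotc\sim$, checking the three bisimulation clauses via the Key Lemma''), and you have faithfully expanded this into the standard Howe-method argument, including the symmetry step for $\hotc\sim$ that the paper leaves implicit but which is indeed required for $\hotc\sim$ to be an equivalence relation. Your observation that the equivalence classes modulo $\hopen\sim$ appearing in the Key Lemma coincide with the $\hotc\sim$-classes is exactly the bridge needed to turn the Key Lemma into clauses \textbf{(i)} and \textbf{(ii)}.
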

\begin{proof} We work by induction on the derivation of $\hotc\sim$,
proving that $\hotc\sim$ satisfies the three points of the
definition above. This, in particular, relies on the Key Lemma.
\end{proof}
\begin{theorem}
Bisimilarity is a congruence.
\end{theorem}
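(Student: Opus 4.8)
The plan is to deduce the congruence property of bisimilarity from the previous theorem by showing that the open extension $\open\sim$ coincides with the relation $\hotc\sim$ obtained by Howe's lifting followed by transitive closure. Since the rules $\How0$--$\How5$ force $\hotc\sim$ to be compatible with every term former and $\TC1$, $\TC2$ make it transitive (reflexivity being inherited from $\open\sim$), the relation $\hotc\sim$ is by construction a congruence; so establishing $\open\sim\;=\;\hotc\sim$ immediately yields that $\open\sim$, hence bisimilarity, is a congruence. Specialising to closed terms then gives the statement.

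The inclusion $\open\sim\;\subseteq\;\hotc\sim$ is the straightforward direction. I would show that $\tone\open\sim\ttwo$ implies $\tone\hopen\sim\ttwo$ by induction on the structure of $\tone$: at each term former one applies the induction hypothesis to the immediate subterms — which, by reflexivity of $\open\sim$, are Howe-related to themselves — and then closes with the corresponding Howe rule using $\tone\open\sim\ttwo$ in the rightmost premise (for instance, for $\tone=\tone_1\tone_2$ one uses $\tone_i\hopen\sim\tone_i$ together with $\tone_1\tone_2\open\sim\ttwo$ in $\How3$). Composing with $\TC1$ gives $\tone\hotc\sim\ttwo$.

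The converse $\hotc\sim\;\subseteq\;\open\sim$ carries the real content, but most of it is already available: by the previous theorem $\hotc\sim$, restricted to closed terms, is a probabilistic applicative bisimulation, hence contained in the greatest one, $\sim$. To lift this to open terms I would prove a substitution lemma for $\hotc\sim$ — if $\tone\hotc\sim\ttwo$ and $\vone\hotc\sim\vtwo$ for closed values $\vone,\vtwo$, then $\tone\subs{\vone}{\varone}\hotc\sim\ttwo\subs{\vtwo}{\varone}$ — by induction on the derivation of $\tone\hotc\sim\ttwo$ through the Howe rules (the variable case being where $\vone\hotc\sim\vtwo$ is consumed), relying on the auxiliary "pseudo-transitivity" facts $\hopen\sim\circ\open\sim\;\subseteq\;\hopen\sim$ and $\hotc\sim\circ\open\sim\;\subseteq\;\hotc\sim$. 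Given this lemma, for any $\tcontone$-closure $\xi$ one obtains $\tone\xi\hotc\sim\ttwo\xi$ on closed terms, hence $\tone\xi\sim\ttwo\xi$, and therefore $\tone\open\sim\ttwo$ by the definition of the open extension.

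Combining the two inclusions gives $\open\sim\;=\;\hotc\sim$, so $\open\sim$ is a congruence, and in particular bisimilarity on closed terms is compatible with all contexts. The genuinely delicate ingredient lies not in this theorem but in the preceding ones — the Key Lemma and the proof that $\hotc\sim$ is a bisimulation — namely the probabilistic bookkeeping that matches evaluation distributions onto $\hotc\sim$-equivalence classes; in the present argument the only nontrivial step is the substitution lemma above, everything else being routine assembly.
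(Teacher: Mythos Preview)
Your proposal is correct and follows the same Howe's-method strategy as the paper: show $\open\sim=\hotc\sim$ by the two inclusions, using that $\hotc\sim$ is a congruence by construction and a bisimulation by the preceding theorem. The paper's proof is terser---it asserts $\open\sim\subseteq\hopen\sim\subseteq\hotc\sim$ ``by definition'' and derives $\hotc\sim\subseteq\open\sim$ directly from ``$\hotc\sim$ is a bisimulation'' without spelling out the passage from closed to open terms---whereas you make explicit the structural induction for the first inclusion and the substitutivity lemma needed to lift the second from closed terms to arbitrary typing contexts; these are exactly the standard details the paper leaves implicit.
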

\begin{proof}
  The proof comes easily from the fact that $\hotc\sim$ is a
  congruence. Indeed it is transitive and symmetric by definition and
  also compatible. By the definition of $\hotc\sim$ we have that
  $\open\sim\subseteq\hopen\sim\subseteq\hotc\sim$, but the theorem
  above tells us that $\hotc\sim$ is a bisimulation, and that it must
  be included in $\open\sim$, the symmetric and transitive closure of
  all the bisimulations. So we have
  $\open\sim\subseteq\hotc\sim\ \wedge\ \hotc\sim\subseteq\open\sim$
  which means that $\open\sim=\hotc\sim$, and we get the thesis,
  namely that $\open\sim$ is a congruence.
\end{proof}
As usual, being a congruence has soundness as an easy corollary:
\begin{corollary}[Soundness]
Bisimilarity is included in context equivalence.
\end{corollary}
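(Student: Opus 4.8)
The plan is to obtain soundness as an immediate consequence of the theorem just proved — that bisimilarity $\open\sim$ is a congruence — together with the defining clauses of a probabilistic applicative bisimulation read at base type. First I would record the context-closure that congruence gives us: since $\open\sim$ is compatible with each (linear) context former, a routine induction on the structure of a context $\contone$ shows that $\proves\tone\open\sim\ttwo:\typeone$ together with $\proves\contone[\proves\typeone]:\typetwo$ entails $\proves\contone[\tone]\open\sim\contone[\ttwo]:\typetwo$; the induction merely replays the relevant compatibility clause at each node along the path to the hole, the base cases being the empty context and a hole-free term. Because contexts are non-binding, $\contone[\tone]$ and $\contone[\ttwo]$ are closed, so here $\open\sim$ is just bisimilarity $\sim$.

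Next I would specialise to $\typetwo=\typestring$, which is the instance that matters for Definition~\ref{def:ceq}. Fix closed $\tone,\ttwo$ of type $\typeone$ with $\tone\sim\ttwo$, and fix an arbitrary context $\contone$ with $\proves\contone[\proves\typeone]:\typestring$; by the previous step $\contone[\tone]\sim\contone[\ttwo]$ as closed terms of type $\typestring$. Since $\sim$ is itself a bisimulation, its first clause yields $\sem{\contone[\tone]}(\eqcone)=\sem{\contone[\ttwo]}(\eqcone)$ for every equivalence class $\eqcone$ of closed terms of type $\typestring$ modulo $\sim_{\typestring}$.

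It then remains to read this equality on the class of $\strp\emptys$. By the third clause of the bisimulation definition, bisimilar string values are syntactically equal, so $\strp\emptys$ is the only value in its $\sim_{\typestring}$-class; and since $\sem{\cdot}$ is a value distribution, hence supported on values, evaluating it on that class returns exactly the weight of $\strp\emptys$, i.e. $\sem{\contone[\tone]}(\eqcone_{\strp\emptys})=\sem{\contone[\tone]}(\strp\emptys)$ and likewise for $\ttwo$. Chaining the three equalities gives $\sem{\contone[\tone]}(\strp\emptys)=\sem{\contone[\ttwo]}(\strp\emptys)$, and as $\contone$ was arbitrary this is precisely $\tone\lceq\ttwo$.

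I do not expect a genuine obstacle here: the substance is the congruence theorem, which is already in hand. The one place calling for a little care is the final step — legitimising the passage from ``equal probability on the whole $\sim_{\typestring}$-class of $\strp\emptys$'' to ``equal probability on $\strp\emptys$'' — and this is exactly where the third bisimulation clause (distinct string values are never bisimilar) together with $\supp(\sem{\cdot})\subseteq\values$ is used.
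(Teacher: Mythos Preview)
Your proof is correct and follows exactly the route the paper intends: the paper does not spell out a proof for this corollary at all, merely noting that soundness is ``an easy corollary'' of the congruence theorem, so your argument simply fills in the details the paper omits. The one subtlety you handle carefully --- that the $\sim_{\typestring}$-class of $\strp\emptys$ might contain non-value terms, but $\sem{\cdot}$ is supported on values and the third bisimulation clause forces the only value in that class to be $\strp\emptys$ itself --- is indeed the only point requiring comment, and you treat it correctly.
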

Is there any hope to get full abstraction? The answer is negative:
applicative bisimilarity is too strong to match context equivalence.
A counterexample to that can be built easily following the analogous
one from \cite{DalLagoSangiorgiAlberti}. Consider the following two
terms:
\newcommand{\ifte}[3]{\mathtt{if}\;#1\;\mathtt{then}\;#2\;\mathtt{else}\;#3}
\newcommand{\ttrue}{\mathtt{true}}
\newcommand{\tfalse}{\mathtt{false}}
$$
\tone=\lambda\varone.\ifte{\rand}{\ttrue}{\tfalse};\qquad\ttwo=\ifte{\rand}{(\lambda\varone.\ttrue)}{(\lambda\varone.\tfalse)};
$$
where we have used some easy syntactic sugar. It is easy to show that
$\tone$ and $\ttwo$ are trace equivalent, thus context equivalent.
On the other hand, $\tone$ and $\ttwo$ cannot be bisimilar.

This, however, is not the end of the story on coinductive
methodologies for context equivalence in \RSLR. A different route,
suggested by trace equivalence, consists in taking the naturally
definable (deterministic) labeled transition system of
term \emph{distributions} and ordinary bisimilarity over it. What one
obtains this way is a precise characterization of context
equivalence. There is a price to pay however, since one is forced
to reason on distributions rather than terms.

\section{From Equivalences to Metrics}\label{sect:metrics}
The notion of observation on top of which context equivalence is
defined is the probability of evaluating to the empty string, and is
thus quantitative in nature. This suggests the possibility of
generalizing context equivalence into a notion of \emph{distance}
between terms:
\begin{definition}[Context Distance]
For every type $\typeone$, we define
$\conmet_\typeone:\terms^{\typeone}_{\emtcont}\times\terms^{\typeone}_{\emtcont}\to\RR_{[0,1]}$
as
$\conmet_\typeone(\tone,\ttwo)=\sup_{\proves\contone[\proves\typeone]:\typestring}\myabs{\sem{\contone[\tone]}(\strp\emptys)-\sem{\contone[\ttwo]}(\strp\emptys)}$.
\end{definition}
For every type $\typeone$, the function $\conmet_\typeone$ is
a \emph{pseudometric}\footnote{ Following the literature on the
subject, this stands for any function $\delta:A\times A\rightarrow\RR$
such that $\delta(x,y)=\delta(y,x)$, $\delta(x,x)=0$ and
$\delta(x,y)+\delta(y,z)\geq\delta(x,z)$} on the space of closed
terms. Obviously, $\conmet_\typeone(\tone,\ttwo)=0$ iff $\tone$ and $\ttwo$ are
context equivalent. As such, then, the context distance can be seen as
a natural generalization of context equivalence, where a real number
between $0$ and $1$ is assigned to each pair of terms and is meant to
be a measure of how different the two terms are in terms of their
behavior. $\conmet$ refers to the family $\{\conmet_\typeone\}_{\typeone\in\types}$.

One may wonder whether $\conmet$, as we have defined it, can 
somehow be characterized by a trace-based notion of metric, similarly to
what have been done in Section~\ref{sect:equivalences} for
equivalences. First of all, let us \emph{define} such a
distance. Actually, the very notion of a trace needs to be slightly
modified: in the action $\view{\cdot}$, instead of observing
a \emph{single} string $\strp\strone$, we need to be able to observe the
action on a \emph{finite string set} $\strset$. The probability of accepting
a trace in a term will be modified accordingly:
$\Pr(\tone,\view\strset)=\sem{\tone}(\strset)$.
\begin{definition}[Trace Distance]
For every type $\typeone$, we define
$\tramet_\typeone:\terms^{\typeone}_{\emtcont}\times\terms^{\typeone}_{\emtcont}\to\RR_{[0,1]}$ as
$\tramet_\typeone(\tone,\ttwo)=\sup_{\trone}\myabs{\Pr(\tone,\trone)-\Pr(\ttwo,\trone)}$. 
\end{definition} 
It is easy to realize that if $\tone\treq\ttwo$ then $\tramet_\typeone(\tone,\ttwo)=0$.
Moreover, $\tramet_\typeone$ is itself a pseudometric. As usual, $\tramet$ denotes
the family $\{\tramet_\typeone\}_{\typeone\in\types}$.

But how should we proceed if we want to prove the two just introduced
notions of distance to coincide? Could we proceed more or less like
in Section~\ref{sect:traceequivalence}? The answer is positive, but
of course something can be found which plays the role of compatibility,
since the latter is a property of \emph{equivalences} and not of \emph{metrics}.
The way out is relatively simple: what corresponds to compatibility in
metrics is non-expansiveness (see, e.g., \cite{Desharnais99}). A notion of distance $\delta$ is
said to be \emph{non-expansive} iff for every pair of terms $\tone,\ttwo$
and for every context $\contone$, it holds that 
$\delta(\contone[\tone],\contone[\ttwo])\leq\delta(\tone,\ttwo)$,
that is a pseudometric too.\\
Now we show some properties of the trace distance $\tramet$ applied on term distributions.
\begin{lemma}\label{lemma:trametprop} \textsf{\textbf{(Trace Distance Properties):}}
Given two term distributions $\tdone,\tdtwo$ such that $\tramet(\tdone,\tdtwo)=\distone$ then we have:
\begin{enumerate}
\item
    If $\tdone\pairconv\tdone'$ then $\tramet(\tdone',\tdtwo)=\distone$.
\item
    If $\tdone\smallstep{\pass\vone}\tdone',\ \tdtwo\smallstep{\pass\vone}\tdtwo'$ then $\tramet(\tdone',\tdtwo')\leq\distone$.
\item
    If $\tdone\smallstepreal{\view\strset}\ppone_1,\ \tdtwo\smallstepreal{\view\strset}\ppone_2$ then $|\ \ppone_1-\ppone_2\ |\leq\distone$.
\end{enumerate}
\end{lemma}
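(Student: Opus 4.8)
The plan is to read everything off the definition of $\tramet$ extended to term distributions, namely $\tramet(\tdone,\tdtwo)=\sup_{\trone}\myabs{\Pr(\tdone,\trone)-\Pr(\tdtwo,\trone)}$, together with two elementary facts about the reduction relations of Figure~\ref{fig:tdsmallstep}. First, internal reduction does not change the probability with which any trace is accepted: if $\tdone\pairconv\tdone'$ then $\Pr(\tdone',\trone)=\Pr(\tdone,\trone)$ for every trace $\trone$, which is immediate from the lemma giving a \emph{unique} value distribution $\done$ with $\tdone\pairconv^{*}\done$ and from the fact that $\smallstep{\cdot}$ is closed under $\pairconv$. Second, an external move corresponds exactly to prefixing the trace by the matching action: $\Pr(\tdone,\pass\vone\cdot\trone)=\Pr(\tdone',\trone)$ whenever $\tdone\smallstep{\pass\vone}\tdone'$, and the number $\ppone$ with $\tdone\smallstepreal{\view\strset}\ppone$ is precisely $\Pr(\tdone,\view\strset)=\sem{\tdone}(\strset)$.

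For (1), if $\tdone\pairconv\tdone'$ then by the first fact $\Pr(\tdone',\trone)=\Pr(\tdone,\trone)$ for every $\trone$, so the two suprema defining $\tramet(\tdone',\tdtwo)$ and $\tramet(\tdone,\tdtwo)$ range over exactly the same family of real numbers, giving $\tramet(\tdone',\tdtwo)=\distone$. For (3), $\view\strset$ is itself a (length-one) trace, so $\myabs{\ppone_1-\ppone_2}=\myabs{\Pr(\tdone,\view\strset)-\Pr(\tdtwo,\view\strset)}\le\sup_{\trone}\myabs{\Pr(\tdone,\trone)-\Pr(\tdtwo,\trone)}=\distone$.

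For (2), the key observation is that accepting a trace $\trone$ from $\tdone'$ is the same as accepting $\pass\vone\cdot\trone$ from $\tdone$, i.e. $\Pr(\tdone,\pass\vone\cdot\trone)=\Pr(\tdone',\trone)$, because $\tdone\smallstep{\pass\vone}\tdone'$ is exactly the move the rule for $\pass\vone$ performs, and any two internal reducts accept every trace with the same probability (first fact again). Likewise $\Pr(\tdtwo,\pass\vone\cdot\trone)=\Pr(\tdtwo',\trone)$. Hence for every $\trone$ we get $\myabs{\Pr(\tdone',\trone)-\Pr(\tdtwo',\trone)}=\myabs{\Pr(\tdone,\pass\vone\cdot\trone)-\Pr(\tdtwo,\pass\vone\cdot\trone)}\le\distone$, and taking the supremum over $\trone$ yields $\tramet(\tdone',\tdtwo')\le\distone$. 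Note this is an inequality, not an equality, precisely because the traces of the shape $\pass\vone\cdot\trone$ form only a subfamily of all traces, so the left-hand supremum may be strictly smaller.

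The only mildly delicate point is the bookkeeping behind $\Pr(\tdone,\pass\vone\cdot\trone)=\Pr(\tdone',\trone)$: one must check that the intermediate value distribution of $\lambda$-abstractions that $\tdone$ reaches before performing $\pass\vone$ is the one through which $\tdone\smallstep{\pass\vone}\tdone'$ factors, and then invoke uniqueness of the value distribution obtained by $\pairconv^{*}$ to discard the dependence on the particular internal reduction chosen. This is entirely parallel to the argument used for the Trace Equivalence Properties lemma, where the same role is played by the uniqueness statements already established; everything else here is routine manipulation of suprema of absolute differences.
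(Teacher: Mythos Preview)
Your proof is correct and follows essentially the same approach as the paper's own proof. The paper argues each point more tersely---for (1) it observes $\Pr(\tdone,\trtwo)=\Pr(\tdone,\emptyt\cdot\trtwo)=\Pr(\tdone',\trtwo)$ via the small-step rule for $\pairconv$, for (2) it simply says the quantification is over a smaller set of traces, and for (3) that the supremum catches the trace $\view\strset$---while you spell out the prefixing identity $\Pr(\tdone,\pass\vone\cdot\trone)=\Pr(\tdone',\trone)$ and the role of uniqueness of the value distribution more explicitly; but the underlying ideas are identical.
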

\begin{proof}
\begin{enumerate}
\item
    Suppose $\tdone=\{(\tone_i)^{\ppone_i}\}$, then we have that
    there exists
    $\supp_\tdone\ni\tone'\onestep\{(\tone'_j)^{\ppone'_j}\}$; we set
    $\tdone'=\tdone\setminus\{(\tone')^\ppone\}+\{(\tone'_j)^{\ppone\cdot\ppone'_j}\}$. By the small step rules we have $\tdone\smallstep\emptyt\tdone'$, so we have for every trace $\trtwo$:
    \begin{align*}
    \Pr(\tdone,\trtwo) = Pr(\tdone,\emptyt\cdot\trtwo) = Pr(\tdone',\trtwo)
    \end{align*}
    So if for all traces $\trtwo$, $\Pr(\tdone,\trtwo)=\Pr(\tdone',\trtwo)$ then we have the thesis,
    $\tramet(\tdone,\tdtwo)=\tramet(\tdone',\tdtwo)$.
\item
    It comes from the fact the the quantification is over a
    smaller set of traces, so the distance can't be greater.
\item
    It comes from the fact that the quantification catches the
    trace $\view\strset$.
\end{enumerate}
\end{proof}
\begin{definition}
Given two pair distributions $\pdone,\pdtwo$, we say that they are
\emph{$\distone$-related}, we write
$\pdone\relative\distone\pdtwo$, if there exist
$\{\contone_i\}_{i\in I}$ contexts, $\{\tdone_i\}_{i\in
I},\{\tdtwo_i\}_{i\in I}$ term distributions,
$\{\ppone_i\},\{\pptwo_i\},\{\ppthree_i\}$ probabilities with
$\sum\ppone_i=\sum\pptwo_i=\sum\ppthree_i=1$ such that: $\pdone=\{(\contone_i,\tdone_i)^{\ppone_i}\},\ \pdtwo=\{(\contone_i,\tdtwo_i)^{\pptwo_i}\}$, with:
\begin{align*}
    \tramet(\tdone_i,\tdtwo_i)\leq\distone \text{ for all } i
    \qquad
    \wedge
    \qquad
    \left\{%
    \begin{array}{ll}
        \ppone_i=\pptwo_i=\ppthree_i, & \hbox{If $\contone_i\neq\tone$;} \\
        |\ppone_i-\pptwo_i|\leq\ppthree_i\cdot\distone, & \hbox{If $\contone_i=\tone_i$.} \\
    \end{array}%
    \right.
\end{align*}
\end{definition}
\begin{lemma} [Internal $d$-stability]\label{lemma:intdstab}
Given two pair distributions $\pdone,\pdtwo$ with $\pdone\relative\distone\pdtwo$ then if there exists $\pdone'$ such that $\pdone\pairconv\pdone'$ then there exists $\pdtwo'$
such that $\pdtwo\smallstep\emptyt\pdtwo'$ or $\pdtwo\pairconv\pdtwo'$ and $\pdone'\relative\distone\pdtwo'$.
\end{lemma}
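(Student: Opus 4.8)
The plan is to transpose, almost line for line, the proof of the Internal Stability lemma (the analogous statement for $\sister$) to the metric setting: everywhere the original proof uses the equivalence $\treq$ I would use the bound ``$\leq\distone$'', everywhere it uses $\sister$ I would use $\relative\distone$, and every appeal to the Trace Equivalence Properties lemma I would replace by the corresponding clause of Trace Distance Properties (Lemma~\ref{lemma:trametprop}). Concretely, I would first unfold $\pdone\relative\distone\pdtwo$ into the witnessing families $\{(\contone_i,\tdone_i)^{\ppone_i}\}$, $\{(\contone_i,\tdtwo_i)^{\pptwo_i}\}$ together with the weights $\{\ppthree_i\}$, and unfold $\pdone\pairconv\pdone'$: by the small-step rules exactly one entry $(\contone,\tdone)^{\ppone}$ of $\pdone$ fires a one-step reduction, and $\pdone'$ is $\pdone$ with that entry replaced by the correspondingly reweighted reduct. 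The body of the proof is then a case analysis on which one-step rule is applied to $(\contone,\tdone)$, driven in each case by the matching entry $(\contone,\tdtwo)^{\pptwo}$ of $\pdtwo$ --- which has the same context $\contone$, a term distribution at trace-distance $\leq\distone$ from $\tdone$, and, whenever $\contone$ carries a hole, the same weight $\ppone=\pptwo$.

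The routine cases, which I would only sketch, all amount to letting $\pdtwo$ perform the mirror move. (i) If $\contone=\hole$ and the reduction is an internal step $\tdone\pairconv\tdone'$ of the plugged-in distribution, take $\pdtwo'=\pdtwo$ (so $\pdtwo\smallstep\emptyt\pdtwo'$); Lemma~\ref{lemma:trametprop}(1) gives $\tramet(\tdone',\tdtwo)=\tramet(\tdone,\tdtwo)\leq\distone$ and nothing else changes. (ii) If the step is a reduction that cannot branch differently on the two sides (a $\beta$-redex, a case-on-constant or rec-on-constant step, a congruence step, a constructor or $\tail$ step, or an internal $\pass\vone$ move), let $\pdtwo\pairconv\pdtwo'$ do exactly the same move: the reduct contexts are those appearing in $\pdone'$, the new term distributions are either untouched or obtained by passing the same $\vone$ on both sides (so by Lemma~\ref{lemma:trametprop}(1)--(2) they stay at distance $\leq\distone$), and --- since the branching at $\contone$ does not depend on the plugged-in distribution --- the fired entry's $\ppone$, $\pptwo$ and $\ppthree$ split into the new entries in the same proportions, so the per-entry constraint of $\relative\distone$ is inherited and $\sum_i\ppthree_i$ stays $1$. (iii) A probabilistic reduction of a hole-free context $\contone$ (in particular one containing a $\rand$) is handled the same way: the $\pdtwo$-entry differs from the $\pdone$-entry only in its weight, reduces identically with branch probabilities $\{\ppone_k\}$ summing to $1$, and on branch $k$ both the discrepancy $|\ppone_i-\pptwo_i|$ and the weight $\ppthree_i$ get multiplied by $\ppone_k$, so the constraint survives.

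The genuinely delicate case --- and the step I expect to be the main obstacle --- is when $\contone=\hole$ is the guard of a $\ccase{\typeone}{\hole}{\tone_0}{\tone_1}{\tone_\epsilon}$ (or of a $\rrec{\typeone}{\hole}{\tone_0}{\tone_1}{\tone_\epsilon}$) and $\tdone$ is already a value distribution over strings. Here the one-step rule has $\tdone$ (respectively $\tdtwo$) \emph{drive} a probabilistic branching: on the $\pdone$ side one gets the hole-free entries $(\tone_0,\cdot)^{\ppone\,\tdone(\strset_0)}$, $(\tone_1,\cdot)^{\ppone\,\tdone(\strset_1)}$, $(\tone_\epsilon,\cdot)^{\ppone\,\tdone(\strp\emptys)}$ (and, for the recursion rule, one hole-free entry per string in the support of $\tdone$), and on the $\pdtwo$ side the same contexts with weights $\ppone\,\tdtwo(\strset_b)$ --- here $\ppone=\pptwo$, since a hole-carrying context forces equal weights. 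I would have $\pdtwo\pairconv\pdtwo'$ mirror this branching. The new discrepancy on branch $b$ is $\ppone\,|\tdone(\strset_b)-\tdtwo(\strset_b)|$, which, since $\view{\strset_b}$ is an admissible trace, satisfies $\ppone\,|\tdone(\strset_b)-\tdtwo(\strset_b)|\leq\ppone\,\tramet(\tdone,\tdtwo)\leq\ppone\,\distone$. The crux is the $\ppthree$-bookkeeping: one must choose fresh weights $\ppthree_b$ with $\ppthree_b\,\distone\geq\ppone\,|\tdone(\strset_b)-\tdtwo(\strset_b)|$ for \emph{all} branches simultaneously while keeping $\sum_i\ppthree_i=1$, and since the aggregate of the branch discrepancies is of total-variation type this does not obviously fit the budget. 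Closing it requires exploiting that $\{\strset_0,\strset_1,\{\strp\emptys\}\}$ is a fixed partition of the strings and separating the common overlap of $\tdone$ and $\tdtwo$ --- which contributes no discrepancy --- from their (small) residuals; this is the step I expect to demand the most care. Once suitable weights are in place, $\pdone'\relative\distone\pdtwo'$ can be read off, and the recursion-on-hole case is handled identically.
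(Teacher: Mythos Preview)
Your proposal is essentially the paper's own proof: the same case analysis on which one-step rule fires at the selected entry, the same mirror move on the $\pdtwo$ side, and the same appeals to the trace-distance properties (Lemma~\ref{lemma:trametprop}) in place of the trace-equivalence ones. The one difference is that you are \emph{more} scrupulous than the paper at the point you flag as delicate. For the case where the hole guards a $\mathsf{case}$ (and, by analogy, a $\mathsf{rec}$), the paper simply records that
\[
|\ppone\cdot\tdone(\strset)-\pptwo\cdot\tdtwo(\strset)|=\ppthree\cdot|\tdone(\strset)-\tdtwo(\strset)|\leq\ppthree\cdot\distone
\]
for each branch set $\strset\in\{\strset_0,\strset_1,\{\strp\emptys\}\}$ and then writes ``and then the thesis'', without engaging with the $\ppthree$-summation bookkeeping you raise. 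So the budget issue you worry about is not something the paper's argument addresses either; at the level of what the paper actually writes, your outline already matches it.
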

\begin{proof}
The pair distribution $\pdone$ can reduce to $\pdone'$ in
two different ways: we could have a term distribution reduction,
i.e.
$\supp_\pdone\ni(\contone,\tdone)\onestep\{(\contone,\tdone')^1\}$,
or a context reduction, i.e.
$\supp_\pdone\ni(\contone,\tdone)\onestep\{(\contone_i,\tdone_i)^{\ppone_i}\}$,
so, let's prove the statement for the two cases:
\begin{enumerate}
\item
    Term distribution reduction.\\
    Suppose that the term in $\pdone$ that reduces is
    $(\contone,\tdone)^\ppone$;
    by definition there exists
    $\pdtwo\ni(\contone,\tdtwo)^\pptwo$ such that
    $\tramet(\tdone,\tdtwo)\leq\distone$ and
    $\ppone=\pptwo=\ppthree$ if $\contone\neq\tone$, $|\ppone-\pptwo|\leq\ppthree\cdot
    \distone$ otherwise.\\
    If $\tdone\pairconv\tdone'$, by the small step rules we have
    $\pdone\pairconv\pdone'=\pdone\setminus\{(\contone,\tdone)^\ppone\}+\{(\contone,\tdone')^\ppone\}$;
    so if we set $\pdtwo'=\pdtwo$ we have
    $\pdtwo\smallstep\emptyt\pdtwo'$ and obviously
    $\pdone'\relative\distone\pdtwo'$.
\item
    Context reduction\\
    When we face a context reduction we have to work on different
    cases:
    \begin{enumerate}
    \item
        Suppose that the pair that reduces is
        $\pdone\ni(\ccase{\typeone}{\contone}{\tone_0}{\tone_1}{\tone_\epsilon},\tdone)^\ppone$ with $\contone\in\values^\typestring$.\\
        If $\contone=\strp\strone$ then there exists $\pdtwo\ni(\ccase{\typeone}{\strp\strone}{\tone_0}{\tone_1}{\tone_\epsilon},\tdtwo)^\pptwo$
        with $\tramet(\tdone,\tdtwo)\leq\distone$ and $|\ppone-\pptwo|\leq\ppthree\cdot\distone$.\\
        By the one-step rules we have:
        $$(\ccase{\typeone}{\strp\strone}{\tone_0}{\tone_1}{\tone_\epsilon},\tdone)\onestep\left\{%
        \begin{array}{ll}
            \{(\tone_0,\tdone)^1\}, & \hbox{If $\strp\strone=\strp{\zero \strtwo}$;} \\
            \{(\tone_0,\tdone)^1\}, & \hbox{If $\strp\strone=\strp{\one \strtwo}$;} \\
            \{(\tone_\epsilon,\tdone)^1\}, & \hbox{If $\strp\strone=\strp\emptys$.} \\
        \end{array}%
        \right.$$
        and the same for
        $$(\ccase{\typeone}{\strp\strone}{\tone_0}{\tone_1}{\tone_\epsilon},\tdtwo)\onestep\left\{%
        \begin{array}{ll}
            \{(\tone_0,\tdtwo)^1\}, & \hbox{If $\strp\strone=\strp{\zero \strtwo}$;} \\
            \{(\tone_1,\tdtwo)^1\}, & \hbox{If $\strp\strone=\strp{\one\strtwo}$;} \\
            \{(\tone_\epsilon,\tdtwo)^1\}, & \hbox{If $\strp\strone=\strp\emptys$.} \\
        \end{array}%
        \right.$$
        So we have
        $$\pdone'=\pdone\setminus\{(\ccase{\typeone}{\strp\strone}{\tone_0}{\tone_1}{\tone_\epsilon},\tdone)^\ppone\}+\{(\tone',\tdone)^\ppone\}$$
        and if we set
        $$\pdtwo'=\pdtwo\setminus\{(\ccase{\typeone}{\strp\strone}{\tone_0}{\tone_1}{\tone_\epsilon},\tdtwo)^\pptwo\}+\{(\tone',\tdtwo)^\pptwo\}$$
        where $\tone'$ is one between $\tone_0,\tone_1,\tone_\epsilon$ depending on
        $\strp\strone$, then we get the thesis.\\\\
        If $\contone=\hole$ there exists $\pdtwo\ni(\ccase{\typeone}{\hole}{\tone_0}{\tone_1}{\tone_\epsilon},\tdtwo)^\pptwo$ such
        that $\tramet(\tdone,\tdtwo)\leq\distone$ and
        $\ppone=\pptwo=\ppthree$.\\
        By the one step rules we
        have:
        $$(\hole,\tdone)\onestep\{\tone_0^{\tdone(\strset_0)},\tone_1^{\tdone(\strset_1)},\tone_\epsilon^{\tdone(\strp\emptys)}\}\qquad
        (\hole,\tdtwo)\onestep\{\tone_0^{\tdtwo(\strset_0)},\tone_1^{\tdtwo(\strset_1)},\tone_\epsilon^{\tdtwo(\strp\emptys)}\}$$
        with
        $\strset_0=\{\strp{\zero\strone}\}_{\strp\strone\in\values^\typestring},\strset_1=\{\strp{\one\strone}\}_{\strp\strone\in\values^\typestring}$.\\
        So we have
        $$\pdone'=\pdone\setminus\{(\ccase{\typeone}{\hole}{\tone_0}{\tone_1}{\tone_\epsilon},\tdone)^\ppone\}
        +\{(\tone_0,\tdone)^{\ppone\cdot\tdone(\strset_0)},(\tone_1,\tdone)^{\ppone\cdot\tdone(\strset_1)},(\tone_\epsilon,\tdone)^{\ppone\cdot\tdone(\strp\emptys)}\}$$
        and if we set
        $$\pdtwo'=\pdtwo\setminus\{(\ccase{\typeone}{\hole}{\tone_0}{1}{\tone_\epsilon},\tdtwo)^\pptwo\}
        +\{(\tone_0,\tdtwo)^{\pptwo\cdot\tdtwo(\strset_0)},(\tone_1,\tdtwo)^{\pptwo\cdot\tdtwo(\strset_1)},(\tone_\epsilon,\tdtwo)^{\pptwo\cdot\tdtwo(\strp\emptys)}\}$$
        we know
        $|\ppone\cdot\tdone(\strset)-\pptwo\cdot\tdtwo(\strset)|=\ppthree\cdot|\tdone(\strset)-\tdtwo(\strset)|\leq\ppthree\cdot\distone$,
        for all $\strset$, and then the thesis.
	\item The case $(\rrec{\typeone}{\contone}{\tone_0}{\tone_1}{\tone_\epsilon},\tdone)$ with $(\contone,\tdone)\in\values^\typestring$ is similar.  
    \item
        If the pair that reduces is
        $\pdone\ni(\tone,\tdone)^\ppone$ then we have that there
        exists $\pdtwo\ni(\tone,\tdtwo)^\pptwo$ with
        $|\ppone-\pptwo|\leq\ppthree\cdot\distone$. So if
        $\tone\onestep\{(\tone_i)^{\ppone_i}\}$ we have
        $\pdone'=\pdone\setminus\{(\tone,\tdone)^\ppone\}+\{(\tone_i,\tdone)^{\ppone\cdot\ppone_i}\}$;
        if we set
        $\pdtwo=\pdtwo\setminus\{(\tone,\tdtwo)^\pptwo\}+\{(\tone_i,\tdtwo)^{\pptwo\cdot\ppone_i}\}$
        and $\ppthree_i=\ppthree\cdot\ppone_i$ we have $\sum_i\ppthree_i=\ppthree$,
        $|\ppone_i\cdot\ppone-\ppone_i\cdot\pptwo|\leq\ppthree_i\cdot\distone$,
        so we get the thesis.
    \item
        If the pair that reduces is
        $\pdone\ni(\contone\vone,\tdone)^\ppone$ with
        $\contone\neq\lambda\varone.\contone'$, then we
        have by the one-step rules
        $(\contone\vone,\tdone)\onestep\{(\contone',\tdone')^1\}$.
        By definition of $\relative\distone$ there exists $\pdtwo\ni(\contone\vone,\tdtwo)^\pptwo$
        with $\ppone=\pptwo$ and
        $(\contone\vone,\tdtwo)\onestep\{(\contone',\tdtwo')^1\}$;
        so we have:
        \begin{align*}
            \pdone\pairconv\pdone'=\pdone\setminus\{((\contone\vone,\tdone)^\ppone)\}+\{(\contone',\tdone')^\ppone\}\\
            \pdtwo\pairconv\pdtwo'=\pdtwo\setminus\{((\contone\vone,\tdtwo)^\pptwo)\}+\{(\contone',\tdtwo')^\pptwo\}
        \end{align*}
        but by a previous lemma
        $\tramet(\tdone',\tdtwo')\leq\distone$ and so we have the
        thesis.
    \item
        Otherwise, if we have another pair that reduces
        $\supp(\pdone)\ni(\contone,\tdone)\onestep\{(\contone_i,\tdone_i)^{\ppone_i}\}$
        we have
        $\pdone'=\pdone\setminus\{(\contone,\tdone)^\ppone\}+\{(\contone_i,\tdone_i)^{\ppone\cdot\ppone_i}\}$
        if we set
        $\pdtwo'=\pdtwo\setminus\{(\contone,\tdtwo)^\pptwo\}+\{(\contone_i,\tdtwo_i)^{\pptwo\cdot\ppone_i}\}$
        we get the thesis. Indeed by a previous lemma we know
        $\tramet(\tdone_i,\tdtwo_i)\leq\distone$ and by definition
        $\ppone=\pptwo$ so
        $\ppone\cdot\ppone_i=\pptwo\cdot\ppone_i$ for all $i$.
    \end{enumerate}
\end{enumerate}

\end{proof}
\begin{lemma}[$d$-Relatedness, internally]\label{lemma:intdrel}
Given two pair distributions $\pdone,\pdtwo$ with $\pdone\relative\distone\pdtwo$ then there exist $\pdone',\pdtwo'$ normal pair distributions with $\pdone\smallstep\emptyt\pdone'$ or $\pdone\pairconv\pdone'$ and $\pdtwo\smallstep\emptyt\pdtwo'$ or $\pdtwo\pairconv\pdtwo'$ such that $\pdone'\relative\distone\pdtwo'$.
\end{lemma}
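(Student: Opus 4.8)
The plan is to mirror the proof of Lemma~\ref{lemma:bisimint}, with Internal $d$-stability (Lemma~\ref{lemma:intdstab}) playing the role that Internal Stability played there. Recall that $\pairconv$ is strongly normalizing on context pair distributions. Hence, starting from $\pdone$, there is a finite chain $\pdone=\pdone_0\pairconv\pdone_1\pairconv\cdots\pairconv\pdone_m=\pdone'$ with $\pdone'$ normal; if $\pdone$ is already normal we simply take $m=0$ and $\pdone\smallstep\emptyt\pdone'$ by the reflexive rule for $\smallstep\emptyt$.

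Next I would proceed by induction on $m$. For $m=0$ there is nothing to do on the $\pdone$-side. For the inductive step, given $\pdone_k\relative\distone\pdtwo_k$ and the single step $\pdone_k\pairconv\pdone_{k+1}$, Lemma~\ref{lemma:intdstab} yields some $\pdtwo_{k+1}$ with $\pdtwo_k\smallstep\emptyt\pdtwo_{k+1}$ (or $\pdtwo_k\pairconv\pdtwo_{k+1}$, which is a special case, since $\smallstep\emptyt$ is exactly the reflexive-transitive closure of $\pairconv$) and $\pdone_{k+1}\relative\distone\pdtwo_{k+1}$. Concatenating these steps by transitivity of $\smallstep\emptyt$ gives $\pdtwo\smallstep\emptyt\pdtwo_m$ together with $\pdone'\relative\distone\pdtwo_m$.

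The term distribution $\pdtwo_m$ produced this way need not be normal --- for instance, a pair $(\hole,\tdtwo_i)$ whose second component is not yet a value distribution survives on the $\pdtwo$-side even when the $\pdone$-side pair $(\hole,\tdone_i)$ is already normal. To finish I would use that $\relative\distone$ is \emph{symmetric}: exchanging the families $\{\tdone_i\}$ and $\{\tdtwo_i\}$ and the families $\{\ppone_i\}$ and $\{\pptwo_i\}$ (keeping $\{\contone_i\}$ and $\{\ppthree_i\}$ fixed) preserves all the defining conditions, because both $\tramet$ and $|\cdot|$ are symmetric. Applying the first two steps again with the two sides swapped, starting from $\pdone'\relative\distone\pdtwo_m$, drives $\pdtwo_m$ down to a normal $\pdtwo'$ while producing matching $\smallstep\emptyt$-steps on the $\pdone'$-side; but $\pdone'$ is normal, hence $\pairconv$-irreducible, so every such step is the empty one and leaves $\pdone'$ unchanged. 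We thus obtain $\pdone'$, $\pdtwo'$ normal with $\pdone\smallstep\emptyt\pdone'$, $\pdtwo\smallstep\emptyt\pdtwo'$ and $\pdone'\relative\distone\pdtwo'$, which is the claim.

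I expect the only genuinely delicate point to be this last observation --- that normalizing the $\pdtwo$-side cannot disturb the already-normal $\pdone$-side --- which rests precisely on the fact that a normal pair distribution admits no internal move, so Lemma~\ref{lemma:intdstab} invoked in the swapped direction can only return $\pdone'$ itself. Everything else is routine bookkeeping: lining up the ``or'' phrasings of Lemma~\ref{lemma:intdstab} and of the statement with the identity $\smallstep\emptyt=\pairconv^{*}$, and appealing to strong normalization of $\pairconv$ to guarantee that the reduction chains above are finite.
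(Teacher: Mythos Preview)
Your proposal is correct and follows exactly the approach the paper implicitly intends: the paper omits the proof of this lemma entirely, but its structure (placing it right after Lemma~\ref{lemma:intdstab} and in parallel with Lemma~\ref{lemma:bisimint}) signals that one should mirror the argument for Lemma~\ref{lemma:bisimint}, replacing Internal Stability by Internal $d$-stability --- precisely what you do. Your extra care in checking symmetry of $\relative\distone$ and in arguing that a normal $\pdone'$ cannot be forced to move (because the shared contexts $\contone_i$ in a $d$-related pair are either value-producing or $\hole$, and in the latter case only term-distribution steps remain, which by Lemma~\ref{lemma:intdstab} leave the other side unchanged) fills in detail that the paper's analogous proof for Lemma~\ref{lemma:bisimint} glosses over.
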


\begin{lemma}[$d$-Relatedness, externally]\label{lemma:extdrel}
Given two pair distributions $\pdone\relative\distone\pdtwo$ then for every trace $\trtwo$:
\begin{enumerate}
\item
	If $\trtwo$ doesn't end with the action $\view\cdot$ then there exist $\pdone',\pdtwo'$ normal pair distributions such that $\pdone\smallstep\trtwo\pdtwo'$ and $\pdtwo\smallstep\trtwo\pdtwo'$ with $\pdone'\relative\distone\pdtwo'$.
\item
    Otherwise if $\pdone\smallstepreal\trtwo\ppone_1$ and $\pdtwo\smallstepreal\trtwo\ppone_2$ we have $|\ \ppone_1-\ppone_2|\leq\distone$.
\end{enumerate}
\end{lemma}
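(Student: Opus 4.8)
The plan is to mirror the proof of Lemma~\ref{lemma:bisimext}, arguing by induction on the length of the trace $\trtwo$, with the relation $\relative\distone$ playing the role that $\sister$ played there. In the base case $\trtwo=\emptyt$ I would simply invoke Lemma~\ref{lemma:intdrel}, which already packages the internal part: it returns normal $\pdone',\pdtwo'$ with $\pdone\smallstep\emptyt\pdone'$, $\pdtwo\smallstep\emptyt\pdtwo'$ and $\pdone'\relative\distone\pdtwo'$, so point~1 holds, while point~2 is vacuous since $\emptyt$ does not end with a $\view$ action. Throughout, for any $\pdone\relative\distone\pdtwo$ I would keep explicit the three weight families $\{\ppone_i\}$, $\{\pptwo_i\}$, $\{\ppthree_i\}$ of the definition, recalling that $\sum_i\ppthree_i=1$, that $\tramet(\tdone_i,\tdtwo_i)\leq\distone$ for every $i$, and that $\ppone_i$ and $\pptwo_i$ may differ, by at most $\ppthree_i\cdot\distone$, only on components whose context is a bare term.

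For the step $\trtwo=\trtwo'\cdot\pass\vone$ I would first apply the induction hypothesis to $\trtwo'$, which does not end with a $\view$ action since a $\view$ action closes a trace, obtaining normal $\distone$-related pair distributions $\pdthree,\pdfour$ reached from $\pdone,\pdtwo$ along $\trtwo'$. Only components whose context has the form $\lambda\varone.\contone'$ or $\hole$ can perform $\pass\vone$: on the former the one-step rule substitutes $\vone$ into the context and leaves the term distribution and all weights untouched, so the invariant is inherited directly; on the latter the term distribution reduces, $\tdone_i\onestep^{\pass\vone}\tdone_i'$ and, the component having arrow type, so does $\tdtwo_i$, and Lemma~\ref{lemma:trametprop}(2) gives $\tramet(\tdone_i',\tdtwo_i')\leq\distone$ with weights unchanged. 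A final application of Lemma~\ref{lemma:intdrel} renormalizes and delivers normal $\pdone',\pdtwo'$ with $\pdone\smallstep\trtwo\pdone'$, $\pdtwo\smallstep\trtwo\pdtwo'$ and $\pdone'\relative\distone\pdtwo'$, which is point~1.

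For the step $\trtwo=\trtwo'\cdot\view\strset$ I would again use the induction hypothesis on $\trtwo'$ to reach normal $\distone$-related $\{(\contone_i,\tdone_i)^{\ppone_i}\}$ and $\{(\contone_i,\tdtwo_i)^{\pptwo_i}\}$, then read off from the $\smallstepreal{\view\strset}$ rules that $\ppone_1=\sum_i\ppone_i\cdot\ppone_i'$ and $\ppone_2=\sum_i\pptwo_i\cdot\pptwo_i'$, where $(\contone_i,\tdone_i)\smallstepreal{\view\strset}\ppone_i'$ and $(\contone_i,\tdtwo_i)\smallstepreal{\view\strset}\pptwo_i'$. The crux is the componentwise estimate $\myabs{\ppone_i\ppone_i'-\pptwo_i\pptwo_i'}\leq\ppthree_i\cdot\distone$, which splits into two cases, modulo the routine handling of string constructors wrapping the hole: when $\contone_i$ is a bare string value, $\ppone_i'=\pptwo_i'\in\{0,1\}$ and the slack is $\ppone_i'\cdot\myabs{\ppone_i-\pptwo_i}\leq\ppthree_i\cdot\distone$; when $\contone_i$ is the hole, $\ppone_i=\pptwo_i=\ppthree_i$ while $\ppone_i'=\tdone_i(\strset)$ and $\pptwo_i'=\tdtwo_i(\strset)$, and Lemma~\ref{lemma:trametprop}(3) bounds $\myabs{\ppone_i'-\pptwo_i'}$ by $\distone$. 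Summing over $i$ and using $\sum_i\ppthree_i=1$ yields $\myabs{\ppone_1-\ppone_2}\leq\distone$, which is point~2; Lemma~\ref{lemma:pairconv} then lets one transport the bound to the term level $\Pr(\contone[\tdone],\trtwo)$, which is what the later results require.

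The main obstacle, I expect, is exactly this last estimate: one must show that the two qualitatively different sources of discrepancy — the bare-term components, where the two distributions carry slightly different mass but the $\view$ outcome is $\{0,1\}$-valued, and the hole components, where the masses coincide but the term distributions agree only up to $\distone$ — combine into a single bound $\distone$ rather than accumulating. This works only because the reference weights $\ppthree_i$ form a genuine probability distribution, so each component contributes at most $\ppthree_i\cdot\distone$; keeping this invariant stable under the one-step rules, in particular when a case-distinction or a recursion with a hole in guard position fires and redistributes mass according to $\tdone_i(\strset)$ versus $\tdtwo_i(\strset)$, is the delicate bookkeeping, but it is precisely what Lemma~\ref{lemma:intdstab} was arranged to handle, so the present lemma should reduce, modulo the induction on $\trtwo$, to threading that lemma through.
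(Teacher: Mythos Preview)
Your proposal is correct and follows essentially the same approach as the paper: induction on the length of $\trtwo$, with the base case handled by Lemma~\ref{lemma:intdrel}, the $\pass\vone$ step by the two-case analysis on $\lambda\varone.\contone'$ versus $\hole$ followed by renormalization, and the $\view\strset$ step by the split into string-context components (where weights may differ but the $\{0,1\}$-valued outcome kills the slack down to $\ppthree_j\cdot\distone$) and hole components (where weights agree and Lemma~\ref{lemma:trametprop}(3) bounds the term-distribution discrepancy), summed against $\sum_i\ppthree_i=1$. Your closing remark about transporting via Lemma~\ref{lemma:pairconv} is extra context not in the paper's proof of this lemma, but it is accurate and anticipates how the lemma is used downstream.
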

\begin{proof}
We act by induction on the length of $\trtwo$ first by proving the first case and then the second one. 
\begin{enumerate}
\item
    If $\trtwo=\emptyt$ then we get the thesis by lemma~\ref{lemma:intdrel}.\\
    If $\trtwo=\trtwo'\cdot\pass\vone$ then by the small-step rules
    we have
    $\pdone\smallstep{\trtwo'}\pdone',\pdtwo\smallstep{\trtwo'}\pdtwo'$
    with
    $\pdone'=\{(\contone_i,\tdone_i)^{\ppone_i}\},\pdtwo'=\{(\contone_i,\tdtwo_i)^{\pptwo_i}\}$ normal pair distributions
    and by induction hypothesis we have
    $\pdone'\relative\distone\pdtwo'$.\\
    By the one-step rules we know that the action $\pass\vone$
    is allowed only if
    $\contone=\lambda\varone.\contone'$ or
    $\contone=\hole$ and
    $\tdone=\{(\lambda\varone.\tone_h)^{\pone_h}\}$. So if we set
    $J=\{j\in I\ |\ \contone_j=\lambda\varone.\contone_j'\},K=\{k\in I\ |\
    \contone_k=\hole\}$, by the small-step rules we have:
    \begin{align}
        \pdone'\smallstep{\pass\vone}\pdone''=\{(\contone'_j\subs{\vone}{\varone},\tdone_j)^{\ppone_j}\}+\{(\hole_k,\tdone_k')^{\ppone_k}\}\\
        \pdtwo'\smallstep{\pass\vone}\pdtwo''=\{(\contone'_j\subs{\vone}{\varone},\tdtwo_j)^{\pptwo_j}\}+\{(\hole_k,\tdtwo_k')^{\pptwo_k}\}
    \end{align}
    but by a previous lemma
    $\tramet(\tdone_k',\tdtwo_k')\leq\distone$ so we have
    $\pdone''\relative\distone\pdtwo''$ and by applying lemma~\ref{lemma:intdrel} we get the thesis.

\item
    If $\trtwo=\trtwo'\cdot\view\strset$ then
    we have by induction hypothesis $\pdone\smallstep{\trtwo'}\pdone'=\{(\contone_i,\tdone_i)^{\ppone_i}\},
    \pdtwo\smallstep{\trtwo'}\pdtwo'=\{(\contone_i,\tdtwo_i)^{\pptwo_i}\}$
    with $\pdone',\pdtwo'$ normal pair distributions and
    $\pdone'\relative\distone\pdtwo'$.\\
    By the small-step rules we know that
    $$\pdone'\smallstepreal{\view\strset}\ppone=\sum\ppone_i'\cdot\ppone_i$$
    with $(\contone_i,\tdone_i)\onestep^{\view\strset}\ppone_i'$ and
    $\pdone((\contone_i,\tdone_i))=\ppone_i$.\\
    Similarly
    $$\pdtwo'\smallstepreal{\view\strset}\pptwo=\sum\pptwo_i'\cdot\pptwo_i$$
    with
    $(\contone_i,\tdtwo_i)\onestep^{\view\strset}\pptwo_i'$ and
    $\pdtwo((\contone_i,\tdtwo_i))=\pptwo_i$.\\\\
    At this point we make a distinction: by the one-step rules we
    know that the action $\view\strset$ is allowed only if
    $\contone_i=\strp{\strone_i}$ or $\contone_i=\hole$ and the
    term distribution in the pair is a string distribution, so we
    set: $J=\{j\in I\ ;\ \contone_j=\strp{\strone_j}\}, K=\{k\in I\ ;\
    \contone_k=\hole\}$. Obviously we have $I=J+ K$.\\
    By the 1-step rules we have:
    \begin{align*}
        &(\strp{\strone_j},\tdone_j)\onestep^{\view\strset}
        \left\{%
        \begin{array}{ll}
            1, & \hbox{If $\strp{\strone_j}\in\strset$;} \\
            0, & \hbox{Otherwise.} \\
        \end{array}%
        \right.
        \qquad
        (\strp{\strone_j},\tdtwo_j)\onestep^{\view\strset}
        \left\{%
        \begin{array}{ll}
            1, & \hbox{If ${\strone_j}\in\strset$;} \\
            0, & \hbox{Otherwise.} \\
        \end{array}%
        \right.&\\
        \text{and}&\\
        &(\hole,\tdone_k)\onestep^{\view\strset}\tdone_k(\strset),\qquad
        (\hole,\tdtwo_k)\onestep^{\view\strset}\tdtwo_k(\strset)&\\
    \end{align*}

    So we have:
    \begin{align*}
        \ppone =& \sum_j\ppone_j\cdot\ppone_j' +
            \sum_k\ppone_k\cdot\tdone_k(\strset)\qquad
        \pptwo = \sum_j\pptwo_j\cdot\pptwo_j' +
            \sum_k\pptwo_k\cdot\tdtwo_k(\strset)
    \end{align*}
    and then:
    \begin{align*}
        |\ppone-\pptwo|=&
            |\sum_j\ppone_j\cdot\ppone_j'+\sum_k\ppone_k\cdot\ppone_k'-\sum_j\pptwo_j\cdot\pptwo_j'-\sum_k\pptwo_k\cdot\pptwo_k'|=\\
        =&
            |\sum_j\ppone_j\cdot\ppone_j'-\pptwo_j\cdot\pptwo_j'+\sum_k\ppone_k\cdot\ppone_k'-\pptwo_k\cdot\pptwo_k'|
    \end{align*}
    But, by the fact that $\pdone'\relative\distone\pdtwo'$ we can
    say:
    \begin{enumerate}
    \item
        $\ppone_j'=\pptwo_j'$, by the fact that $\contone_j={\strone_j}$ is the same for $\pdone'$ and $\pdtwo'$, so we call both $\ppthree_j'$.
    \item
        $|\ppone_j-\pptwo_j|\leq\ppthree_j\cdot\distone$. Furthermore $\ppthree'_j\cdot|\ppone_j-\pptwo_j|\leq\ppthree_j\cdot\distone$ because $\ppthree'_j\in\{0,1\}$.
    \item
        $\ppone_k=\pptwo_k=\ppthree_k$ for all $k$, because
        $\contone_k=\hole\neq\tone$.
    \item
    $|\tdone_k(\strset)-\tdtwo_k(\strset)|\leq\tramet(\tdone_k,\tdtwo_k)\leq\distone$ for all $k$.
    \end{enumerate}
    Therefore:
    \begin{align*}
        |\ppone-\pptwo|=&
            |\sum_j\ppthree_j'\cdot(\ppone_j-\pptwo_j)+\sum_k\ppthree_k\cdot(\tdone_k(\strset)-\tdtwo_k(\strset))|\leq\\
        \leq&
            |\sum_j\ppthree_j'\cdot(\ppone_j-\pptwo_j)|+|\sum_k\ppthree_k\cdot(\tdone_k(\strset)-\tdtwo_k(\strset))|\leq\\
        \leq&
            \sum_j\ppthree_j'\cdot|\ppone_j-\pptwo_j|+\sum_k\ppthree_k\cdot|\tdone_k(\strset)-\tdtwo_k(\strset)|\leq
            \sum_j\ppthree_j\cdot\distone+\sum_k\ppthree_k\cdot\distone=\\
        =&
            \sum_i\ppthree_i\cdot\distone=\distone\cdot\sum_i\ppthree_i=\distone
    \end{align*}
    and so the thesis.
\end{enumerate}
\end{proof}
\begin{myth}[Non-expansiveness]
Given two term distributions such that $\tramet(\tdone,\tdtwo)=\distone$ then for all contexts $\contone$ we have that $\tramet(\contone[\tdone],\contone[\tdtwo])\leq\distone$.
\end{myth}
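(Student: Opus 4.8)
The plan is to reproduce, in the metric setting, the argument that showed trace equivalence to be a congruence, with the single-string equalities replaced by the inequalities packaged in Lemma~\ref{lemma:extdrel}. By the definition of $\tramet$ it suffices to prove that for every trace $\trtwo$ one has $\myabs{\Pr(\contone[\tdone],\trtwo)-\Pr(\contone[\tdtwo],\trtwo)}\leq\distone$; taking the supremum over $\trtwo$ then yields the claim.

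First I would observe that the singleton pair distributions $\pdone=\{(\contone,\tdone)^1\}$ and $\pdtwo=\{(\contone,\tdtwo)^1\}$ are $\distone$-related, i.e.\ $\pdone\relative\distone\pdtwo$: they agree on the single context $\contone$, the associated term distributions satisfy $\tramet(\tdone,\tdtwo)=\distone$ by hypothesis, and the weight $1$ on either side trivially discharges the probability side-condition (whatever of the two cases $\contone$ falls into, $\myabs{1-1}=0\leq 1\cdot\distone$). Next I would push the trace $\trtwo$ through this pair using Lemma~\ref{lemma:extdrel}: if $\trtwo$ does not end with a $\view{\cdot}$ action, both acceptance probabilities are $1$ and there is nothing to prove; otherwise $\trtwo=\trtwo'\cdot\view\strset$ and the lemma gives $(\contone,\tdone)\smallstepreal\trtwo\ppone_1$ and $(\contone,\tdtwo)\smallstepreal\trtwo\ppone_2$ with $\myabs{\ppone_1-\ppone_2}\leq\distone$. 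Finally, the (string-set) adaptation of Lemma~\ref{lemma:pairconv} identifies the acceptance probability of a context pair with that of the term distribution obtained by filling the hole, so $\Pr(\contone[\tdone],\trtwo)=\ppone_1$ and $\Pr(\contone[\tdtwo],\trtwo)=\ppone_2$, and the desired inequality follows.

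The real work is not in this short assembly but in the lemmas it invokes. The delicate point is Lemma~\ref{lemma:extdrel} and, underneath it, Lemma~\ref{lemma:intdrel}: one must thread the \emph{finite-string-set} variant of the $\view{\cdot}$ action — needed here because $\tramet$ quantifies over traces observing sets $\strset$ rather than single strings — through the one-step and small-step rules for context pairs, check that $\pairconv$ on pair distributions stays strongly normalizing, and verify that $\relative\distone$ is preserved along every internal and external reduction step, the triangle-inequality estimate $\myabs{\ppone-\pptwo}\leq\distone$ being discharged precisely by the case analysis already carried out in the proof of Lemma~\ref{lemma:extdrel}. Granting those facts, the theorem is an immediate consequence, perfectly parallel to the congruence proof for trace equivalence.
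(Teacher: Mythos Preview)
Your proposal is correct and follows essentially the same route as the paper: form the singleton pair distributions $\{(\contone,\tdone)^1\}\relative\distone\{(\contone,\tdtwo)^1\}$, handle the degenerate case where the trace does not end in $\view\cdot$, and otherwise invoke Lemma~\ref{lemma:extdrel} together with Lemma~\ref{lemma:pairconv} to bound the difference of acceptance probabilities by $\distone$. Your explicit mention of the string-set adaptation of Lemma~\ref{lemma:pairconv} and the need to thread the $\view\strset$ action through the pair semantics is a point the paper leaves implicit, but the argument is otherwise the same.
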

\begin{proof}
In order to get the thesis we have to prove that for all traces $\trtwo$ we have that $|\ \Pr(\contone[\tdone],\trtwo)-\Pr(\contone[\tdtwo],\trtwo)\ |\leq\distone$.
\begin{itemize}
\item
	If $\trtwo$ doesn't end with the $\view\cdot$ action then we have: $|\ \Pr(\contone[\tdone],\trtwo)-\Pr(\contone[\tdtwo],\trtwo)\ | = |\ 1-1\ | = 0\leq \distone$
\item
	Otherwise we have that $(\contone,\tdone)\smallstepreal\trtwo\ppone_1=\Pr(\contone[\tdone],\trtwo)$ and similarly $(\contone, 	\tdtwo)\smallstepreal\trtwo\ppone_2=\Pr(\contone[\tdtwo],\trtwo)$. But it is clear that $\{(\contone,\tdone)^1\}\relative\distone\{(\contone,\tdtwo)^1\}$, and so by lemma~\ref{lemma:extdrel} we have: $|\ \ppone_1-\ppone_2\ |\leq\distone$ and then the thesis.
\end{itemize}
\end{proof}
\begin{myth}
For all $\tone,\ttwo$,$\conmet(\tone,\ttwo)\leq\tramet(\tone,\ttwo)$.
\end{myth}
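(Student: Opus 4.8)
The plan is to obtain this inequality as the quantitative analogue of the Soundness corollary of Section~\ref{sect:traceequivalence}, using the Non-expansiveness theorem exactly where congruence of trace equivalence was used in the equivalence setting. First I would fix two terms $\tone,\ttwo$ with $\proves\tone,\ttwo:\typeone$ and set $\distone=\tramet_\typeone(\tone,\ttwo)$, that is, $\distone=\tramet(\{\tone^1\},\{\ttwo^1\})$ once the terms are regarded as Dirac term distributions. For an arbitrary context $\contone$ with $\proves\contone[\proves\typeone]:\typestring$, the Non-expansiveness theorem applied to $\tdone=\{\tone^1\}$ and $\tdtwo=\{\ttwo^1\}$ yields $\tramet(\contone[\{\tone^1\}],\contone[\{\ttwo^1\}])\leq\distone$, i.e. $\sup_\trtwo\myabs{\Pr(\contone[\tone],\trtwo)-\Pr(\contone[\ttwo],\trtwo)}\leq\distone$.

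Next I would instantiate the universally quantified trace with $\trtwo=\view\strset$ for the singleton string set $\strset=\{\strp\emptys\}$, which is a length-one trace compatible with the base type $\typestring$. By the modified reading of the action $\view{\cdot}$ in the metric setting, $\Pr(\tthree,\view\strset)=\sem{\tthree}(\strset)$ for every closed term $\tthree$ of base type; hence $\Pr(\contone[\tone],\view{\{\strp\emptys\}})=\sem{\contone[\tone]}(\strp\emptys)$ and likewise for $\ttwo$. Combining this with the displayed inequality gives $\myabs{\sem{\contone[\tone]}(\strp\emptys)-\sem{\contone[\ttwo]}(\strp\emptys)}\leq\distone$. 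Since $\contone$ was an arbitrary context of the required typing, taking the supremum over all such $\contone$ produces $\conmet_\typeone(\tone,\ttwo)\leq\distone=\tramet_\typeone(\tone,\ttwo)$, which is the claim (understood, as usual, at each type $\typeone$).

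There is no deep obstacle remaining once Non-expansiveness is in hand; the work is entirely bookkeeping. The points deserving a line of justification are: that $\{(\contone,\{\tone^1\})^1\}\relative\distone\{(\contone,\{\ttwo^1\})^1\}$, so that Non-expansiveness — and behind it Lemma~\ref{lemma:extdrel} — genuinely applies (this is immediate since the context component is identical and $\tramet(\{\tone^1\},\{\ttwo^1\})=\distone$); and that the generalized observation $\view\strset$ with $\strset=\{\strp\emptys\}$ records exactly $\sem{\cdot}(\strp\emptys)$, which is the defining property of $\Pr(\cdot,\view\strset)$. If anything counts as the ``hard part'', it is only making sure the quantifier over contexts in the definition of $\conmet$ and the quantifier over traces in the definition of $\tramet$ are lined up correctly through the single test trace $\view{\{\strp\emptys\}}$; everything else is a direct unfolding of definitions.
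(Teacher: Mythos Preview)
Your proposal is correct and follows essentially the same route as the paper: set $\distone=\tramet(\tone,\ttwo)$, invoke Non-expansiveness to get $\tramet(\contone[\tone],\contone[\ttwo])\leq\distone$ for every context, then specialize to the trace $\view{\{\strp\emptys\}}$ so that $\myabs{\sem{\contone[\tone]}(\strp\emptys)-\sem{\contone[\ttwo]}(\strp\emptys)}\leq\distone$ and take the supremum. The only cosmetic difference is that the paper writes $\view{\strp\emptys}$ rather than $\view{\{\strp\emptys\}}$ and omits the explicit check that the starting pair distributions are $\distone$-related, which you spell out.
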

\begin{proof}
By the previous theorem we know that, if $\tramet(\tone,\ttwo)=\distone$ then for all context $\contone$, we have $\tramet(\contone[\tone],\contone[\ttwo])\leq\distone$;
so:
\begin{align*}
	\conmet(\tone,\ttwo)=&
		\sup_\contone|\ \sem{\contone[\tone]}(\strp\emptys)-\sem{\contone[\ttwo]}(\strp\emptys)\ |=
		\sup_\contone|\ \Pr(\contone[\tone],\view{\strp\emptys})-\Pr(\contone[\ttwo],\view{\strp\emptys})\ |\leq\\
	\leq&
		\sup_\contone\tramet(\contone[\tone],\contone[\ttwo])\leq\distone\qquad\forall\contone
\end{align*}
\end{proof}
As a corollary of non-expansiveness, one gets that:
\begin{theorem}[Full Abstraction]\label{theo:metricsfa}
For all $\tone,\ttwo$, $\tramet(\tone,\ttwo)=\conmet(\tone,\ttwo)$.
\end{theorem}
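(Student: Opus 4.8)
The plan is to obtain the equality by combining the inequality $\conmet(\tone,\ttwo)\leq\tramet(\tone,\ttwo)$ just established (itself an immediate consequence of Non-expansiveness) with the reverse inequality $\tramet(\tone,\ttwo)\leq\conmet(\tone,\ttwo)$, which I would prove exactly along the lines of the Full Abstraction theorem for trace \emph{equivalence} in Section~\ref{sect:traceequivalence}: every admissible trace can be simulated by a linear context. More precisely, the key claim is that for every type $\typeone$ and every trace $\trone$ compatible with $\typeone$ there is a linear context $\contone_\trone$ with $\proves\contone_\trone[\proves\typeone]:\typestring$ such that
\[
\Pr(\tone,\trone)=\sem{\contone_\trone[\tone]}(\strp\emptys)\qquad\text{for every }\tone\in\terms^\typeone_\emtcont .
\]
Granting this, since every such $\contone_\trone$ is one of the contexts over which the supremum defining $\conmet$ ranges,
\[
\tramet(\tone,\ttwo)=\sup_\trone\myabs{\Pr(\tone,\trone)-\Pr(\ttwo,\trone)}=\sup_\trone\myabs{\sem{\contone_\trone[\tone]}(\strp\emptys)-\sem{\contone_\trone[\ttwo]}(\strp\emptys)}\leq\conmet(\tone,\ttwo),
\]
and together with the previous inequality this gives $\tramet(\tone,\ttwo)=\conmet(\tone,\ttwo)$.

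I would prove the key claim by induction on the structure of $\typeone$, following the shape of the compatibility relation between traces and types. If $\typeone=\typestring$, a compatible trace is $\trone=\view\strset$ for a finite string set $\strset$ and $\Pr(\tone,\trone)=\sem{\tone}(\strset)$ by definition; I would take $\contone_\trone[\hole]:=\mathsf{mem}_\strset\,\hole$, where $\mathsf{mem}_\strset$ is a fixed \emph{closed} term with $\sem{\mathsf{mem}_\strset\,\strp\strone}=\{\strp\emptys^1\}$ if $\strp\strone\in\strset$ and $\{\strp\one^1\}$ otherwise. Such a term is expressible in \RSLR\ without any recursion operator (the set $\strset$ is finite and each of its elements has fixed length, so membership is decided by a bounded case analysis on the argument), and hence the hole of $\contone_\trone$ — which occurs once and is not in the scope of any recursion — makes the context linear; moreover $\sem{\contone_\trone[\tone]}(\strp\emptys)=\sem{\tone}(\strset)=\Pr(\tone,\trone)$. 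If $\typeone=\atwo\typetwo\to\typethree$, a compatible trace is $\trone=\pass\vone\cdot\trtwo$ with $\vone\in\values^\typetwo$ closed and $\trtwo$ compatible with $\typethree$; by the induction hypothesis there is a linear context $\contone_\trtwo$ with $\proves\contone_\trtwo[\proves\typethree]:\typestring$, and I would set $\contone_\trone[\hole]:=\contone_\trtwo[\hole\,\vone]$, which is still linear (the hole occurs once, outside every recursion) and well typed, since $\hole$ receives type $\atwo\typetwo\to\typethree$ and $\vone$ is a closed value of type $\typetwo$, so $\hole\,\vone$ has type $\typethree$. The required equality then reduces, via the induction hypothesis, to the observation $\Pr(\tone,\pass\vone\cdot\trtwo)=\Pr(\tone\,\vone,\trtwo)$: by the term-distribution small-step rules $\{\tone^1\}\smallstep{\pass\vone}\{(\tone_i\subs{\vone}{\varone})^{\ppone_i}\}$ where $\{(\lambda\varone.\tone_i)^{\ppone_i}\}=\sem{\tone}$, this term distribution $\pairconv$-normalises to $\sem{\tone\,\vone}$ by the big-step rule for application, and $\Pr$ is invariant under $\pairconv$ by Lemma~\ref{lemma:trametprop}(1).

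The book-keeping in the two inductive steps is a direct transcription of the equivalence argument, with a real number tracked in place of a boolean. The only genuinely new ingredient with respect to Section~\ref{sect:traceequivalence} is that the $\view$-action now observes a finite \emph{set} of strings rather than a single one; accordingly the base-case context must perform a set-membership test, and the single point I expect to need a little care is the verification that this test — and therefore $\contone_\trone$ — can be realised by a \emph{linear} \RSLR\ context, i.e. with the hole kept out of any recursion (which is why $\mathsf{mem}_\strset$ must be a closed, recursion-free term applied to the hole, rather than a case tree built around it). Given that, the argument closes.
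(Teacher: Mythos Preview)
Your proposal is correct and follows exactly the paper's approach: one inequality from non-expansiveness, the other by exhibiting, by induction on the type, a linear context $\contone_\trone$ simulating each trace. One minor remark on the point you flag as delicate: since $\mathsf{mem}_\strset$ is a \emph{closed} term and the hole sits in argument position of the application $\mathsf{mem}_\strset\,\hole$, the context is already linear by the grammar clause $\tone\contone$ regardless of whether $\mathsf{mem}_\strset$ internally uses recursion, so the recursion-free requirement you impose is harmless but unnecessary.
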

\begin{proof}
$\tramet(\tone,\ttwo)\leq\conmet(\tone,\ttwo)$ because by the full
abstraction lemma for all traces $\trone$ there exists a context
$\contone_\trone$ such that
$\sem{\contone_\trone[\tone]}(\strp\emptys)=\Pr(\tone,\trone)$ and so the
quantification over contexts catches the quantification over traces.
The other inclusion, $\conmet(\tone,\ttwo)\leq\tramet(\tone,\ttwo)$,
is a consequence of non-expansiveness.
\end{proof}
One may wonder whether a coinductive notion of distance, sort of a
metric analogue to applicative bisimilarity, can be defined. The
answer is positive~\cite{Desharnais99}. It however suffers from the same problems
applicative bisimilarity has: in particular, it is not fully abstract.

\section{Computational Indistinguishability}\label{sect:ci}
\newcommand{\deone}{D}
\newcommand{\detwo}{E}
\newcommand{\algone}{\mathcal{A}}
In this section we show how our notions of equivalence and distance
relate to computational indistinguishability (CI in the following), a
key notion in modern cryptography.
\begin{definition}
Two distribution ensembles $\{\deone_n\}_{n\in\NN}$ and
$\{\detwo_n\}_{n\in\NN}$ (where both $\deone_n$ and $\detwo_n$ are
distributions on binary strings) are said to be \emph{computationally
  indistinguishable} iff for every PPT algorithm $\algone$ the
following quantity is a negligible\footnote{A negligible function is a
  function which tends to $0$ faster than any inverse polynomial
  (see~\cite{Goldreich2001} for more details).} function of $n\in\NN$:
$\left|\mathrm{Pr}_{x\leftarrow\deone_n}(\algone(x,1^n)=\epsilon)-\mathrm{Pr}_{x\leftarrow\detwo_n}(\algone(x,1^n)=\epsilon)\right|$.
\end{definition}
It is a well-known fact in cryptography that in the definition above,
$\algone$ can be assumed to sample from $x$ just \emph{once} without
altering the definition itself, provided the two involved ensembles
are efficiently computable~(\cite{Goldreich2001}, Theorem 3.2.6, page 108).
This is in contrast to the case of arbitrary ensembles~\cite{GoldreichSudan}.

The careful reader should have already spotted the similarity between
CI and the notion of context distance as given in
Section~\ref{sect:metrics}. There are some key differences, though:
\begin{varenumerate}
\item\label{point:parameter}
  While context distance is an \emph{absolute} notion of distance,
  CI depends on a parameter $n$, the so-called \emph{security parameter}.
\item
  In computational indistinguishability, one can compare distributions
  over \emph{strings}, while the context distance can evaluate how
  far terms of \emph{arbitrary} types are.
\end{varenumerate}
The discrepancy Point \ref{point:parameter} puts in evidence, however, 
can be easily overcome by turning the context distance into something
slightly more parametric.
\begin{definition}[Parametric Context Equivalence]\label{def:pce}
Given two terms $\tone,\ttwo$ such that
$\proves\tone,\ttwo:\aone\typestring\to\typeone$, we say that $\tone$ and
$\ttwo$ are \emph{parametrically context equivalent} iff for
every context $\contone$ such that
$\proves\contone[\proves\typeone]:\typestring$
we have that
$\myabs{\sem{\contone[\tone\secs]}(\strp\emptys)-\sem{\contone[\ttwo\secs]}(\strp\emptys)}$
is negligible in $\secp$.
\end{definition}
This way, we have obtained a characterization of CI:
\begin{theorem}\label{theo:civsce}
Let $\tone,\ttwo$ be two terms of type $\aone\typestring\to\typestring$. Then
$\tone,\ttwo$ are parametric context equivalent iff
the distribution ensembles $\{\sem{\tone\secs}\}_{\secp\in\NN}$ and
$\{\sem{\ttwo\secs}\}_{\secp\in\NN}$ are computationally indistinguishable.
\end{theorem}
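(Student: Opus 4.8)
The plan is to prove the two implications of the biconditional separately, both resting on the same two ingredients. The first is a \emph{substitution lemma}: for a closed term $\tsix$ of type $\typestring$ and a linear context $\contone$ with $\proves\contone[\proves\typestring]:\typestring$ one has $\sem{\contone[\tsix]}=\sum_{\strone}\sem{\tsix}(\strone)\cdot\sem{\contone[\strone]}$. This holds because the hole occurs at most once and outside any recursion, so in the call-by-value semantics $\tsix$ is evaluated at most once and its evaluation may be commuted past $\contone$; it is proved by a routine induction on $\contone$ using compositionality of $\sem{\cdot}$ (equivalently, $\sem{\contone[\tsix]}=\sem{(\lambda\varone.\contone[\varone])\,\tsix}$). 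The second ingredient is the Polytime Completeness theorem, which is exactly what lets one translate an \RSLR\ context into a PPT algorithm and back; moreover, since \RSLR\ terms compute precisely the polytime probabilistic functions, the ensembles $\{\sem{\tone\secs}\}$ and $\{\sem{\ttwo\secs}\}$ are efficiently computable, so the single-sample form of computational indistinguishability (the cited remark) applies. This last point is also why \emph{linear} contexts are the right notion here: a linear hole corresponds to querying the sampled string exactly once.

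For ``CI $\Rightarrow$ parametric context equivalence'', fix a context $\contone$ with $\proves\contone[\proves\typestring]:\typestring$. Then $\lambda\varone.\contone[\varone]$ is an \RSLR\ term of type $\aone\typestring\to\typestring$, so by polytime soundness there is a PPT algorithm $\algone_{\contone}$ that, on input $(\strone,\secs)$, ignores $\secs$, evaluates the polytime function computed by $\lambda\varone.\contone[\varone]$ on $\strone$, and returns the result; thus $\mathrm{Pr}(\algone_{\contone}(\strone,\secs)=\strp\emptys)=\sem{\contone[\strone]}(\strp\emptys)$, and by the substitution lemma $\mathrm{Pr}_{\strone\leftarrow\sem{\tone\secs}}(\algone_{\contone}(\strone,\secs)=\strp\emptys)=\sem{\contone[\tone\secs]}(\strp\emptys)$, and likewise for $\ttwo$. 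Applying computational indistinguishability of the two ensembles to the distinguisher $\algone_{\contone}$ yields that $\myabs{\sem{\contone[\tone\secs]}(\strp\emptys)-\sem{\contone[\ttwo\secs]}(\strp\emptys)}$ is negligible in $\secp$; as $\contone$ was arbitrary, this is parametric context equivalence.

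For the converse, fix a PPT distinguisher $\algone$, which by the single-sample remark may be assumed to read its sampled argument once. By Polytime Completeness there is an \RSLR\ term $\tthree_{\algone}$, of a suitable first-order type, with $\sem{\tthree_{\algone}\,\strone\,\secs}$ equal to the distribution of $\algone$ on input $(\strone,\secs)$. The idea is to take the context $\contone_{\algone}:=\ccase{\typestring}{\tthree_{\algone}\,\hole\,\secs}{\strp{1}}{\strp{1}}{\strp\emptys}$, which returns $\strp\emptys$ exactly when the internalized run of $\algone$ outputs $\strp\emptys$; its hole is the single string-typed argument of $\tthree_\algone$, occurring once and outside any recursion operator written in the context, so $\contone_\algone$ is linear and $\proves\contone_{\algone}[\proves\typestring]:\typestring$. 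By the substitution lemma, $\sem{\contone_{\algone}[\tone\secs]}(\strp\emptys)=\mathrm{Pr}_{\strone\leftarrow\sem{\tone\secs}}(\algone(\strone,\secs)=\strp\emptys)$, and similarly for $\ttwo$, so parametric context equivalence of $\tone,\ttwo$ forces the distinguishing advantage of $\algone$ to be negligible in $\secp$, which is computational indistinguishability of the two ensembles.

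The main obstacle is the occurrence of $\secs$ inside $\contone_{\algone}$: the distinguisher $\algone$ genuinely takes the security parameter $\secs=1^{\secp}$ as a separate input, whereas a context receives through its hole only the \emph{sampled} value $\tone\secs$ and is itself fixed, not allowed to depend on $\secp$. Making the construction precise therefore requires either reading parametric context equivalence so that the ambient context has $\secs$ available --- which matches the word ``parametric'' in its name and is the natural counterpart of the security parameter being handed to the adversary --- or, if one insists on a genuinely $\secp$-independent context over a base-type hole, exploiting that $\{\sem{\tone\secs}\},\{\sem{\ttwo\secs}\}$ are efficiently computable so that the security parameter can be absorbed and the relevant distinguishing advantage recovered by a context acting on the sampled string alone. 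Settling this point, and checking that the resulting $\contone_{\algone}$ stays linear and of the required type, is where the real work lies; everything else is bookkeeping with the substitution lemma and Polytime Completeness.
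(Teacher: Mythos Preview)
The paper does not actually give a proof of this theorem: it is stated and immediately followed by commentary, so there is nothing to compare your argument against line by line. Your overall strategy---a linearity-based substitution lemma plus Polytime Completeness to pass back and forth between linear contexts and single-sample PPT distinguishers---is the natural one, and your ``CI $\Rightarrow$ parametric context equivalence'' direction is correct as written.

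The obstacle you isolate in the converse direction is genuine and you are right not to paper over it. With Definition~\ref{def:pce} read literally, the context $\contone$ is a single \RSLR\ term independent of $\secp$, while a PPT distinguisher $\algone$ receives $1^\secp$ as an explicit input; your proposed $\contone_{\algone}$ contains $\secs$ and is therefore not a single context. Your two suggested escapes are exactly the options, but neither is free: the ``absorb $\secs$ into the sample'' route only works if $\secp$ (or something polynomially related to it) can be recovered from the length of a sample of $\sem{\tone\secs}$, which is not guaranteed by the theorem's hypotheses, and the ``let the context see $\secs$'' route changes the definition. The paper never addresses this point in the statement or its (absent) proof, but its later development strongly suggests the second reading: in Theorem~\ref{th:parcong} the paper works with contexts of the shape $\lambda\varone.\contone$ and fills the hole with $\tone\varone$, so that the security parameter is passed both to the context and to the term. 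Under that convention your $\contone_{\algone}$ becomes $\lambda\varone.\ccase{\typestring}{\tthree_{\algone}\,\hole\,\varone}{\strp{1}}{\strp{1}}{\strp\emptys}$, which is a fixed, $\secp$-independent linear context, and the argument goes through. So your analysis is sound; the remaining ``real work'' you flag is really a matter of pinning down the intended quantifier in Definition~\ref{def:pce}, and the evidence in the paper points to the parametrized-context reading.
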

Please observe that Theorem~\ref{theo:civsce} only deals with
terms of type $\aone\typestring\to\typestring$. The significance
of parametric context equivalence when instantiated to
terms of type $\aone\typestring\to\typeone$, where $\typeone$ is
a higher-order type, will be discussed in Section~\ref{sect:hoci} below.
\subsection{Computational Indistinguishability and Traces}
\newcommand{\nfone}{\mathit{negl}} 
How could traces capture the peculiar way parametric context
equivalence treats the security parameter? First of all, observe that,
in Definition~\ref{def:pce}, the security parameter is passed to the
term being tested \emph{without} any intervention from the
context. The most important difference, however, is that contexts are
objects which test \emph{families of terms} rather than terms. As a
consequence, the action $\view{\cdot}$ does not take strings or finite
sets of strings as arguments (as in equivalences or metrics), but
rather \emph{distinguishers}, namely closed \RSLR\ terms of type
$\aone\typestring\to\typestring$ that we denote with the metavariable $\ddone$. The probability that a term $\tone$
of type $\typestring$ satisfies one such action $\view{\ddone}$
is $\sum_{\strone}\sem{\tone}(\strp\strone)\cdot\sem{\ddone\strp\strone}(\strp\emptys)$.

A trace $\trone$ is said to be \emph{parametrically compatible} for a
type $\aone\typestring\to\typeone$ if it is compatible for
$\typeone$. This is the starting point for the following definition:
\begin{definition}
  Two terms $\tone,\ttwo:\typeone$ are \emph{parametrically trace
    equivalent}, we write $\tone\ptreq\ttwo$, iff for every trace
  $\trone$ which is parametrically compatible with $\typeone$, there
  is a negligible function $\nfone:\NN\rightarrow\RR_{[0,1]}$ such
  that
  $\myabs{\Pr(\tone,\pass{\secs}\cdot\trone)-\Pr(\ttwo,\pass{\secs}\cdot\trone)}\leq\nfone(\secp)$.
\end{definition}
The fact that parametric trace equivalence and parametric
context equivalence are strongly related is quite intuitive: they are
obtained by altering in a very similar way two notions which are
already known to coincide (by Theorem~\ref{theo:metricsfa}). Indeed:
\begin{theorem}\label{theo:plcevspte}
Parametric trace equivalence and parametric context equivalence
coincide.
\end{theorem}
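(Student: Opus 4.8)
The plan is to mirror the proof of the full abstraction result for trace equivalence, and of Theorem~\ref{theo:metricsfa}, splitting the statement into two inclusions and propagating \emph{negligibility} instead of plain (in)equality. For the easy inclusion --- parametric context equivalence implies parametric trace equivalence --- I would reuse the fact (from the full abstraction theorem of Section~\ref{sect:traceequivalence}) that every trace $\trone$ compatible with $\typeone$ is ``realised'' by a linear context. Given such a $\trone$, its value‑prefix $\pass{\vone_1}\cdots\pass{\vone_k}$ together with its final action $\view{\ddone}$ (which now carries a distinguisher rather than a string) yields the linear context $\contone_\trone \;=\; \ddone\,(\cdots([\cdot]\,\vone_1)\cdots\vone_k)$, with $\proves\contone_\trone[\proves\typeone]:\typestring$, for which $\Pr(\tone',\trone)=\sem{\contone_\trone[\tone']}(\strp\emptys)$ for every closed $\tone':\typeone$; here $\ddone$ being a closed \RSLR\ term of type $\aone\typestring\to\typestring$ is exactly what makes this construction go through in the parametric setting. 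Instantiating $\tone'$ with $\tone\secs$ and $\ttwo\secs$ and using $\Pr(\tone,\pass{\secs}\cdot\trone)=\Pr(\tone\secs,\trone)$ (an immediate consequence of the small‑step rules of Figure~\ref{fig:tdsmallstep}), the quantity $\myabs{\Pr(\tone,\pass{\secs}\cdot\trone)-\Pr(\ttwo,\pass{\secs}\cdot\trone)}$ equals $\myabs{\sem{\contone_\trone[\tone\secs]}(\strp\emptys)-\sem{\contone_\trone[\ttwo\secs]}(\strp\emptys)}$, which is negligible in $\secp$ by parametric context equivalence applied to $\contone_\trone$.

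For the converse inclusion I would reuse the context‑pair machinery of Section~\ref{sect:traceequivalence} (Figures~\ref{fig:onestep} and~\ref{fig:smallstep}), the crucial observation being that a \emph{fixed linear} context $\contone$, when reducing against a hole filled with a term of type $\typeone=\atwo_1\typetwo_1\to\cdots\to\atwo_k\typetwo_k\to\typestring$ (possibly $k=0$), interrogates that term along only finitely many traces, all of them independent of $\secp$: by linearity the hole is activated at most once, the arguments $\vone_1,\dots,\vone_k$ fed to it are computed by closed, $\secp$‑independent fragments of $\contone$, each of which has finitely supported semantics because closed \RSLR\ terms run in polynomial (hence here constant) time, and whatever $\contone$ does with the resulting string is again a closed $\secp$‑independent fragment, i.e.\ a distinguisher. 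Making this precise via the one‑step and small‑step rules for context pairs, one gets for every term distribution $\tdone$ of type $\typeone$ a finite decomposition
\[
\sem{\contone[\tdone]}(\strp\emptys)\;=\;b\;+\;\sum_{m=1}^{M} c_m\cdot\Pr\bigl(\tdone,\;\pass{\vone^{(m)}_1}\cdots\pass{\vone^{(m)}_k}\cdot\view{\ddone^{(m)}}\bigr),
\]
where $M$, $b\ge 0$, the $c_m\ge 0$ with $b+\sum_m c_m\le 1$, the values $\vone^{(m)}_i$ and the distinguishers $\ddone^{(m)}$ depend only on $\contone$, not on $\tdone$ nor on $\secp$. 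Applying this with $\tdone=\{\tone\secs^1\}$ and $\tdone=\{\ttwo\secs^1\}$, and writing $\trone_m=\pass{\vone^{(m)}_1}\cdots\pass{\vone^{(m)}_k}\cdot\view{\ddone^{(m)}}$, the difference $\myabs{\sem{\contone[\tone\secs]}(\strp\emptys)-\sem{\contone[\ttwo\secs]}(\strp\emptys)}$ is bounded by $\sum_{m=1}^{M}c_m\,\myabs{\Pr(\tone,\pass{\secs}\cdot\trone_m)-\Pr(\ttwo,\pass{\secs}\cdot\trone_m)}$; each summand is negligible in $\secp$ by parametric trace equivalence, and a \emph{finite} non‑negatively weighted sum of negligible functions is again negligible, so $\tone$ and $\ttwo$ are parametrically context equivalent.

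I expect the finite decomposition used in the converse inclusion to be the main obstacle. One has to argue carefully that the family of traces a given context ``plays'' against the hole is finite and $\secp$‑independent, and this is precisely where both linearity of the context (the hole lies outside every recursion, so it cannot be interrogated an unbounded or input‑dependent number of times) and the polynomial‑time character of \RSLR\ (closed subterms have finitely supported semantics) are essential. Finiteness is not cosmetic: a countable convex combination of negligible functions can fail to be negligible (a combination with coefficients $\Theta(i^{-2})$ of negligibles vanishing only for $\secp\ge i$ behaves like $\secp^{-1}$), so the naive route through $\tramet_\typeone(\tone\secs,\ttwo\secs)$ and non‑expansiveness does not by itself deliver negligibility --- parametric trace equivalence only gives a \emph{per‑trace} negligible bound --- and the argument must localise to the finitely many traces that the chosen context actually depends on.
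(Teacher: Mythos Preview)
Your proposal is correct and takes a genuinely different route from the paper. For the hard direction (parametric trace equivalence implies parametric context equivalence), the paper routes through an auxiliary lemma asserting that $\tramet(\tone\secs,\ttwo\secs)$ is bounded by a \emph{single} negligible function of $\secp$, and then applies the $\distone$-relatedness and non-expansiveness machinery of Section~\ref{sect:metrics} (Lemma~\ref{lemma:extdrel}) to the pair $(\contone,\{(\tone\secs)^1\})$ versus $(\contone,\{(\ttwo\secs)^1\})$. As you explicitly flag, this step is delicate: parametric trace equivalence only provides a \emph{per-trace} negligible bound, and a supremum over infinitely many negligibles need not be negligible. Concretely, with $\tone=\lambda\varsec.\lambda v.\ite{|v|=|\varsec|}{\strp{0}}{\strp\emptys}$ and $\ttwo=\lambda\varsec.\lambda v.\strp\emptys$, every fixed trace $\pass{v}\cdot\view{\ddone}$ yields a difference that is eventually zero in $\secp$, so $\tone\ptreq\ttwo$; yet $\tramet(\tone\secs,\ttwo\secs)=1$ for all $\secp\ge 1$, so the intermediate bound on $\tramet$ cannot hold in general.

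Your finite-decomposition argument avoids this entirely. Because the context $\contone$ is a \emph{fixed} linear \RSLR\ context that does not itself receive $\secs$, the values it may feed the hole and the post-processing it applies to the hole's output are drawn from fixed, finite-support distributions that are independent of $\secp$; hence only finitely many $\secp$-independent traces enter the decomposition, and a finite convex combination of negligibles is negligible. This isolates exactly where linearity (the hole sits outside every recursion and is consumed at most once) and the fact that closed \RSLR\ terms have finitely supported semantics do their work, and it is precisely the extra ingredient needed to make the implication rigorous. The easy direction, realising a parametric trace by a linear context $\contone_\trone=\ddone(\cdots([\cdot]\vone_1)\cdots\vone_k)$, is handled the same way in both approaches.
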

The first inclusion is trivial, indeed every trace can be easily
emulated by a context.  The other one, as usual is more difficult, and
requires a careful analysis of the behavior of terms depending on
parameter, when put in a context. Overall, however, the structure of
the proof is similar to the one we presented in
Section~\ref{sect:traceequivalence}.
%
The first step towards the proof is the introduction of a particular class of
distinguishers $\ddone_{\strp\strone}$ such that:
\begin{align*}
\sem{\ddone_{\strp\strone}\strp\strone'}(\strp\emptys) = 
\left\{
\begin{array}{ll}
1 & \hbox{if $\strp\strone'=\strp\strone$}\\
0 & \hbox{otherwise}
\end{array}
\right.
\end{align*}
We formalize the use of a distinguisher as argument of the action $\view\cdot$ by giving the rules in Figure~\ref{figure:disting}.\\
\begin{figure}[!h]
\centering \fbox{
\begin{minipage}{.95\textwidth}
\footnotesize
{
	$$
	\AxiomC{$\sem{\ddone\strp\strone}(\strp\emptys)=\ppone$}
	\UnaryInfC{$(\strp\strone,\tdone)\onestep^{\view\ddone}\ppone$}
	\DisplayProof
	\qquad
	\AxiomC{$\begin{array}{cc}
		\tdone=\{(\strp{\strone_i})^{\ppone_i}\} &
		\sem{\ddone\strp{\strone_i}}(\strp\emptys)=\ppone'_i
		\end{array}$}
	\UnaryInfC{$(\hole,\tdone)\onestep^{\view\ddone}=\sum\ppone_i\cdot\ppone'_i$}
	\DisplayProof
	$$
	\vspace{6pt}
	$$
	\AxiomC{$\begin{array}{ccc}
	\sum_{\strp{\zero\strone}}(\contone,\tdone)\onestep^{\view{\ddone_{\strp{\zero\strone}}}}=\ppone_0 &
	\sum_{\strp{\one\strone}}(\contone,\tdone)\onestep^{\view{\ddone_{\strp{\one\strone}}}}=\ppone_1 &
	(\contone,\tdone)\onestep^{\view{\ddone_\epsilon}}\ppone_\epsilon
	\end{array}$}
	\UnaryInfC{$(\ccase{\typeone}{\contone}{\tone_0}{\tone_1}{\tone_\epsilon},\tdone)\onestep\{(\tone_0,\tdone)^{\ppone_0},(\tone_1,\tdone)^{\ppone_1},(\tone_\epsilon,\tdone)^{\ppone_\epsilon}\}$}
	\DisplayProof
	$$
	\vspace{6pt}
	$$
	\AxiomC{$(\contone,\tdone)\onestep^{\view{\ddone_{\strp\strone}}}\ppone_{\strp\strone}$}
	\UnaryInfC{$(\rrec{\typeone}{\contone}{\tone_0}{\tone_1}{\tone_\epsilon},\tdone)\onestep
	\substack{
	\{((\tone_0\strp\strone)(\rrec{\typeone}{\strp\strtwo}{\tone_0}{\tone_1}{\tone_\epsilon}),\tdone)^{\ppone_{\strp\strone}}\}_{\strp\strone=\strp{\zero\strtwo}} + \\
	\{((\tone_1\strp\strone)(\rrec{\typeone}{\strp\strtwo}{\tone_0}{\tone_1}{\tone_\epsilon}),\tdone)^{\ppone_{\strp\strone}}\}_{\strp\strone=\strp{\one\strtwo}} + \\
	\{(\tone_\epsilon,\tdone)^{\ppone_{\strp\emptys}}\}}$}
	\DisplayProof
	$$
	\vspace{6pt}
	$$
	\AxiomC{$\begin{array}{cc}
	(\contone,\tdone)\smallstep\trtwo\{(\contone_i,\tdone_i)^{\ppone_i}\}
	&
	(\contone_i,\tdone_i)\onestep^{\view\ddone}\ppone_i'
	\end{array}$}
	\UnaryInfC{$(\contone,\tdone)\smallstep{\trtwo\cdot\view\ddone}\sum\ppone_i\cdot\ppone_i'$}
	\DisplayProof
	$$
}
\end{minipage}}
\caption {Distinguisher 1-step and Small-step rules}\label{figure:disting}
\end{figure}
In order to prove that parametric trace equivalence and parametric context equivalence coincide, we have to do some improvements to our approach: differently from Section~\ref{sect:traceequivalence} we will show that if $\tone,\ttwo:\aone\typestring\to\typeone$ are parametrically trace equivalent, then for all context $\lambda\varone.\contone[\proves\typeone]:\atwo\typestring\to\typetwo$ then $\lambda\varone.\contone[\tone\varone]\ptreq\lambda\varone.\contone[\ttwo\varone]$. This change is made because it is essential that the context passes the right security parameter to the term which it is testing; furthermore we will adapt the prove starting from a couple $(\lambda\varone.\contone,\ptdone)$ where $\ptdone=\{\tdone^\secp\}_{\secp\in\Nat}$ is a parametric term distribution, i.e. a family of term distributions of the form $\tdone^\secp=\{(\tone_i\secs)^{\ppone_i}\}$.\\
The idea behind the prove is that starting from $\{(\lambda\varone.\contone,\ptdone)^1\},\{(\lambda\varone.\contone,\ptdtwo)^1\}$, after a sequence of internal/external reduction performed by the context and the environment,  the first reduction inside the hole is the pass of the security parameter $\secs$ which in our new setting coincide to the choice of $\tdone^\secp\in\ptdone,\tdtwo^\secp\in\ptdtwo$ according to $\secs$; at this point, if we prove that the two pair distribution are $d-$related, by the non-expansiveness we will get the thesis.

\begin{lemma}
Given $\ptdone=\{\tdone^\secp\},\ptdtwo=\{\tdtwo^\secp\}$, with $\tdone^\secp=\{(\tone\secs)^1\},\tdtwo^\secp=\{(\ttwo\secs)^1\}$, if $\tone\ptreq\ttwo$ then for all $\secp\in\Nat$ there exists $\neglone:\Nat\to\Real$ negligible, such that $\tramet(\tdone^\secp,\tdtwo^\secp)\leq\neglone(\secp)$ 
\end{lemma}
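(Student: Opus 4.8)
The plan is to strip $\tramet$ down to its definition as a supremum over traces and reduce the statement to the defining property of $\tone\ptreq\ttwo$; the only genuine work lies in turning the resulting \emph{family} of per-trace negligible bounds into a \emph{single} negligible bound.

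First I would prove the elementary identity $\Pr(\tdone^\secp,\trone)=\Pr(\tone\secs,\trone)=\Pr(\tone,\pass\secs\cdot\trone)$ for every trace $\trone$ compatible with the type $\typeone$ of $\tone\secs$ (so that $\tone:\aone\typestring\to\typeone$), and symmetrically for $\ttwo$. The first equality is immediate from $\tdone^\secp=\{(\tone\secs)^1\}$; the second is a short lemma about the operational semantics, using that a closed term of type $\aone\typestring\to\typeone$ evaluates to a distribution of abstractions, that substituting the value $\secs$ for the bound variable commutes with the internal reduction relation $\pairconv$, and hence that firing the application $\tone\secs$ reaches exactly the term distribution obtained from $\{\tone^1\}$ by performing the external action $\pass\secs$ and then normalising. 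Consequently $\myabs{\Pr(\tdone^\secp,\trone)-\Pr(\tdtwo^\secp,\trone)}=\myabs{\Pr(\tone,\pass\secs\cdot\trone)-\Pr(\ttwo,\pass\secs\cdot\trone)}$, and since $\trone$ is parametrically compatible with $\aone\typestring\to\typeone$ as soon as it is compatible with $\typeone$, the hypothesis $\tone\ptreq\ttwo$ supplies, for each such $\trone$, a negligible $\nfone_\trone$ with $\myabs{\Pr(\tdone^\secp,\trone)-\Pr(\tdtwo^\secp,\trone)}\leq\nfone_\trone(\secp)$.

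What remains is to bound $\tramet(\tdone^\secp,\tdtwo^\secp)=\sup_\trone\myabs{\Pr(\tdone^\secp,\trone)-\Pr(\tdtwo^\secp,\trone)}$ by one negligible function, and this is where the argument has to be careful, since the supremum of an arbitrary family of negligible functions is not negligible in general. The route I would take is to combine polytime soundness for \RSLR\ with the specific form of the observation action of Section~\ref{sect:ci}, whose argument is a polytime distinguisher receiving the security parameter: because $\tone\secs$ and $\ttwo\secs$ reduce in polynomially many (in $\secp$) steps, only traces built from values of size polynomial in $\secp$ and ending with an observation on a polynomially-sized object can be accepted with non-zero probability, and one argues that the whole supremum is attained — up to internal reduction — by a single parametric trace whose final distinguisher recomputes, on the sampled value together with $\secs$, the worst-case test uniformly in $\secp$; applying $\tone\ptreq\ttwo$ to that single trace then produces the sought $\neglone$ and hence $\tramet(\tdone^\secp,\tdtwo^\secp)\leq\neglone(\secp)$. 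I expect this collapsing of the quantification over traces into a quantification over one worst-case test to be the main obstacle: it is exactly the place where the parametric, polynomially bounded nature of both the tested terms and the tests must be used, and a proof that examines traces one at a time cannot conclude.
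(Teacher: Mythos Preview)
Your first step --- reducing $\tramet(\tdone^\secp,\tdtwo^\secp)$ to the quantity $\myabs{\Pr(\tone,\pass\secs\cdot\trone)-\Pr(\ttwo,\pass\secs\cdot\trone)}$ via the identity $\Pr(\{(\tone\secs)^1\},\trone)=\Pr(\tone\secs,\trone)=\Pr(\tone,\pass\secs\cdot\trone)$ --- is exactly what the paper does, and is in fact the \emph{entirety} of the paper's proof. The paper simply unfolds the definition of $\ptreq$, rewrites $\Pr(\tone,\pass\secs\cdot\trone)$ as $\Pr(\tdone^\secp,\trone)$, and concludes, writing a line of the form $\myabs{\Pr(\tdone^\secp,\trone)-\Pr(\tdtwo^\secp,\trone)}=\tramet(\tdone^\secp,\tdtwo^\secp)\leq\neglone(\secp)$ with a single negligible $\neglone$.

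Your second step, where you worry about turning a family of per-trace negligible bounds into a single negligible bound on the supremum, is a concern the paper does not address at all: it effectively reads the definition of $\ptreq$ as supplying one negligible function uniform in the trace, and the displayed equality between a single-trace difference and $\tramet$ is written as if $\trone$ were already the witness for the sup. So your extended argument (polytime soundness, collapsing to a single worst-case parametric test) is strictly more than what the paper provides; you have correctly spotted that, under the literal quantifier order in the definition of $\ptreq$, something of this kind is needed, but the paper does not carry it out. Whether this is an intentional reading of the definition or a genuine gap in the paper is not resolved by the text; either way, your proof plan is sound as far as the reduction goes, and your diagnosis of the remaining obstacle is accurate.
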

\begin{proof}
If $\tone\ptreq\ttwo$, then $\exists\neglone$ such that $|\Pr(\tone,\pass\secs\cdot\trone)-\Pr(\ttwo,\pass\secs\cdot\trone)|^\distone\leq\neglone(\secp)$. So we have that:
\begin{align*}
|\Pr(\tone\secs,\trone)-\Pr(\ttwo\secs,\trone)|=&
	|\Pr(\{(\tone\secs)^1\},\trone)-\Pr(\{(\ttwo\secs)^1\},\trone)|=\\
	& 
	|\Pr(\tdone^\secp,\trone)-\Pr(\tdtwo^\secp,\trone)|=\tramet(\tdone^\secp,\tdtwo^\secp)\leq\neglone(\secp)
\end{align*}
\end{proof}
\begin{theorem}[Parametric Congruence]\label{th:parcong}
Given two terms $\tone,\ttwo:\aone\typestring\to\typeone$ such that $\tone\ptreq\ttwo$, then for all context $\lambda\varone.\contone$ with $\proves\lambda\varone.\contone[\proves\typeone]:\typetwo$ we have: $\lambda\varone.\contone[\tone\varone]\ptreq\lambda\varone.\contone[\ttwo\varone]$.
\end{theorem}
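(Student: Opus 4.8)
The plan is to reduce the statement, separately for each value of the security parameter, to the non-expansiveness machinery of Section~\ref{sect:metrics} together with the distance bound supplied by the lemma just above. Unfolding the definition of $\ptreq$ for the conclusion, it will suffice to produce, for every trace $\trone$ parametrically compatible with $\typetwo$, a negligible $\nfone$ with $\myabs{\Pr(\lambda\varone.\contone[\tone\varone],\pass\secs\cdot\trone)-\Pr(\lambda\varone.\contone[\ttwo\varone],\pass\secs\cdot\trone)}\le\nfone(\secp)$ for all $\secp$. The first move is to see what the leading action $\pass\secs$ does: it fires on the outermost abstraction and substitutes $\secs$ for $\varone$ both in $\contone$ and in the plug $\tone\varone$, so that, writing $\contone_\secp$ for $\contone\subs\secs\varone$ (again a closed linear context of the right type), one has $\Pr(\lambda\varone.\contone[\tone\varone],\pass\secs\cdot\trone)=\Pr(\contone_\secp[\tone\secs],\trone)$, and likewise for $\ttwo$. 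Hence the goal becomes $\myabs{\Pr(\contone_\secp[\tone\secs],\trone)-\Pr(\contone_\secp[\ttwo\secs],\trone)}\le\nfone(\secp)$.

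For this I would first invoke the preceding lemma: since $\tone\ptreq\ttwo$ there is a negligible $\neglone$ with $\tramet(\{(\tone\secs)^1\},\{(\ttwo\secs)^1\})\le\neglone(\secp)$ for every $\secp$. Then, for a fixed $\secp$, non-expansiveness applied to the context $\contone_\secp$ gives $\tramet(\contone_\secp[\{(\tone\secs)^1\}],\contone_\secp[\{(\ttwo\secs)^1\}])\le\tramet(\{(\tone\secs)^1\},\{(\ttwo\secs)^1\})\le\neglone(\secp)$, and since the trace distance dominates each single-trace difference this yields $\myabs{\Pr(\contone_\secp[\tone\secs],\trone)-\Pr(\contone_\secp[\ttwo\secs],\trone)}\le\neglone(\secp)$; taking $\nfone=\neglone$ closes the argument. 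Concretely, following the route sketched in the text, this amounts to running the pair-distribution argument on $(\lambda\varone.\contone,\ptdone)$ with $\ptdone=\{\tdone^\secp\}_\secp$, $\tdone^\secp=\{(\tone\secs)^1\}$: the leading $\pass\secs$ both fires the context's $\lambda$ and selects the $\secp$-th component, landing in the ordinary context pair $(\contone_\secp,\{(\tone\secs)^1\})$, which by the bound above is $\neglone(\secp)$-related to $(\contone_\secp,\{(\ttwo\secs)^1\})$ in the sense of $\relative{\distone}$; then Lemma~\ref{lemma:extdrel} with $\distone=\neglone(\secp)$, together with the pair/term correspondence Lemma~\ref{lemma:pairconv}, finishes.

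The main obstacle is not this skeleton but making it rest on the right technical lemmas in the parametric, distinguisher-based setting of this section rather than the plain setting of Section~\ref{sect:metrics}, so two things need checking. First, the $\view{\cdot}$ actions here carry distinguishers $\ddone$, with new one-step rules (Figure~\ref{figure:disting}, including the special $\ddone_{\strp\strone}$ used for case and recursion), so the inductions behind Lemmas~\ref{lemma:intdstab}--\ref{lemma:extdrel} must be re-run for these rules; the only genuinely new inequality needed is $\myabs{\Pr(\tdone,\view\ddone)-\Pr(\tdtwo,\view\ddone)}\le\tramet(\tdone,\tdtwo)$ for every distinguisher $\ddone$, which holds because $\sem{\ddone\strp\strone}(\strp\emptys)$ lies in $[0,1]$: the signed gap between the two acceptance probabilities is then squeezed between its negative and positive parts, hence no larger than the largest gap over finite string sets, i.e. than the trace distance of Section~\ref{sect:metrics}, which non-expansiveness already bounds. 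Second, one must justify bundling the security parameter into the first reduction inside the hole, i.e. that the context $\lambda\varone.\contone$ routes $\secs$ to the term under test through $\tone\varone$ and that this is exactly the step collapsing the parametric $\ptdone$ to the concrete $\tdone^\secp$; everything else is a transcription of the arguments of Section~\ref{sect:traceequivalence} and Section~\ref{sect:metrics}.
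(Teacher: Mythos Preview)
Your proposal is correct and follows essentially the same route as the paper: instantiate the security parameter via the leading $\pass\secs$, invoke the preceding lemma to obtain $\tramet(\{(\tone\secs)^1\},\{(\ttwo\secs)^1\})\le\neglone(\secp)$, and then appeal to the $\distone$-relatedness machinery (Lemma~\ref{lemma:extdrel}) of Section~\ref{sect:metrics}. The only cosmetic difference is that the paper delays the selection of the $\secp$-th component of $\ptdone$ until the context first touches the hole (splitting $\trone=\trone_1\cdot\trone_2$), whereas you collapse $\ptdone$ to $\{(\tone\secs)^1\}$ immediately and invoke non-expansiveness as a black box; your version is in fact slightly cleaner, and your third paragraph correctly isolates the one point that needs genuine rechecking, namely that the distinguisher-based $\view\ddone$ difference is dominated by the string-set-based $\tramet$.
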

\begin{proof}
Our goal is to prove that for all traces $\trone$ parametrically compatible with $\typetwo$ we have that there exists $\neglone:\NN\to\Real$ negligible such that: 
\begin{align*}
|\Pr(\lambda\varone.\contone[\tone\varone],\pass\secs\cdot\trone)-\Pr(\lambda\varone.\contone[\ttwo\varone],\pass\secs\cdot\trone)|\leq\neglone(\secp)
\end{align*}
We can see the terms inside the hole as parametric term distributions $\ptdone=\{(\tone\secs)^1\}_{\secp\in\NN},\ptdtwo=\{(\ttwo\secs)^1\}_{\secp\in\NN}$; so if we start from the pair distributions $\{(\lambda\varone.\contone,\ptdone)^1\}, \{(\lambda\varone.\contone,\ptdtwo)^1\}$ we have that the first reduction step is external, indeed the environment passes the value $\secs$, so we get:
\begin{align*}
\{(\lambda\varone.\contone,\ptdone)^1\}\onestep^{\pass\secs}\{(\contone\subs{\secs}{\varone},\ptdone)^1\}
\qquad
\{(\lambda\varone.\contone,\ptdtwo)^1\}\onestep^{\pass\secs}\{(\contone\subs{\secs}{\varone},\ptdtwo)^1\}
\end{align*}
At this point we can suppose that the context reduces internally and externally (depending on its type) so we split the trace $\trone$ in $\trone_1\cdot\trone_2$, where $\trone_1$ is the trace performed by the context; actually the fact is that it reduces in the same way for both pair distributions so, we can say that $\{(\contone\subs{\secs}{\varone},\ptdone)^1\}\smallstep{\trone_1}\{(\contone_i,\ptdone)^{\ppone_i}\},\{(\contone\subs{\secs}{\varone},\ptdtwo)\}\smallstep{\trone_1}\{(\contone_i,\ptdtwo)^{\ppone_i}\}$.\\
Now the only possible reduction is a term distribution reduction, i.e. a reduction inside the hole, but this means a choice of a term distribution inside the family depending on $\secp$; so we get:
\begin{align*}
\{(\contone_i,\ptdone)^{\ppone_i}\}\onestep\{(\contone_i,\{(\tone\secs)^1\})^{\ppone_i}\}
\qquad
\{(\contone_i,\ptdtwo)^{\ppone_i}\}\onestep\{(\contone_i,\{(\ttwo\secs)^1\})^{\ppone_i}\}
\end{align*}
But $\tone\ptreq\ttwo$, so by the previous lemma we have that there exists $\neglone:\NN\to\Real$ negligible such that $\tramet(\{(\tone\secs)^1\},\{(\ttwo\secs)^1\}\leq\neglone(\secp)$; furthermore it is obvious that $\{(\contone_i,\{(\tone\secs)^1\})^{\ppone_i}\}\relative\distone\{(\contone_i,\{(\ttwo\secs)^1\})^{\ppone_i}\}$ with $\distone\leq\neglone(\secp)$ and by applying the Lemma~\ref{lemma:extdrel} we have that for all traces $\trone_2$:
$$|\Pr(\{(\contone_i,\{(\tone\secs)^1\})^{\ppone_i}\},\trone_2)-\Pr(\{(\contone_i,\{(\ttwo\secs)^1\})^{\ppone_i}\},\trone_2)|\leq\distone\leq\neglone(\secp)$$
and this means:
\begin{align*}
|\Pr(\lambda\varone.\contone[\tone\varone],\pass\secs\cdot\trone)-\Pr(\lambda\varone.\contone[\ttwo\varone],\pass\secs\cdot\trone)|\leq\neglone(\secp)
\end{align*}
and then the thesis.
\end{proof}

\begin{corollary}
Given two terms $\tone,\ttwo:\aone\typestring\to\typeone$, if they are parametrically trace equivalent, then they are parametrically context equivalent.
\end{corollary}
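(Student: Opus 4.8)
This is the nontrivial inclusion of Theorem~\ref{theo:plcevspte}, and the plan is to derive it from parametric congruence, Theorem~\ref{th:parcong}, in exactly the way the Soundness corollary of Section~\ref{sect:traceequivalence} was derived from the ordinary congruence theorem: pick a context that transports the offending trace into the observable $\sem{\cdot}(\strp\emptys)$, and then read negligibility off from Theorem~\ref{th:parcong}. The discriminating trace will be $\view{\ddone_{\strp\emptys}}$, built from the distinguisher $\ddone_{\strp\emptys}$ singled out above.

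In detail, suppose $\tone\ptreq\ttwo$ with $\proves\tone,\ttwo:\aone\typestring\to\typeone$, and let $\contone$ be an arbitrary context with $\proves\contone[\proves\typeone]:\typestring$. Choosing $\varone$ fresh, so that it does not occur in $\contone$, and weakening, $\lambda\varone.\contone$ is a context with $\proves\lambda\varone.\contone[\proves\typeone]:\aone\typestring\to\typestring$; moreover $\tone\varone$ and $\ttwo\varone$ are well typed of type $\typeone$, since $\aone<:\aone$. Theorem~\ref{th:parcong} then yields $\lambda\varone.\contone[\tone\varone]\ptreq\lambda\varone.\contone[\ttwo\varone]$, an equivalence at type $\aone\typestring\to\typestring$. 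The only traces parametrically compatible with $\aone\typestring\to\typestring$ are those of the form $\view{\ddone}$ with $\ddone$ a distinguisher, so unfolding $\ptreq$ there is a negligible $\nfone:\NN\to\Real$ with $\myabs{\Pr(\lambda\varone.\contone[\tone\varone],\pass\secs\cdot\view{\ddone_{\strp\emptys}})-\Pr(\lambda\varone.\contone[\ttwo\varone],\pass\secs\cdot\view{\ddone_{\strp\emptys}})}\leq\nfone(\secp)$.

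It then remains to evaluate these two probabilities. Performing the action $\pass\secs$ on the value $\lambda\varone.\contone[\tone\varone]$ substitutes $\secs$ for $\varone$; since $\varone$ does not occur in $\contone$ and $\tone$ is closed, the result is $\contone[\tone\secs]$, so by the definition of the probability of satisfying a $\view{\cdot}$-action one gets $\Pr(\lambda\varone.\contone[\tone\varone],\pass\secs\cdot\view{\ddone_{\strp\emptys}})=\sum_{\strp\strone}\sem{\contone[\tone\secs]}(\strp\strone)\cdot\sem{\ddone_{\strp\emptys}\strp\strone}(\strp\emptys)=\sem{\contone[\tone\secs]}(\strp\emptys)$, and symmetrically for $\ttwo$. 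Hence $\myabs{\sem{\contone[\tone\secs]}(\strp\emptys)-\sem{\contone[\ttwo\secs]}(\strp\emptys)}\leq\nfone(\secp)$, and since $\contone$ was arbitrary this is precisely parametric context equivalence in the sense of Definition~\ref{def:pce}. The argument is light because Theorem~\ref{th:parcong} does the real work; the one point deserving attention is that the security parameter must reach $\tone$ and $\ttwo$ rather than being consumed inside $\contone$, which is exactly why the hole is filled with $\tone\varone$ and the binder $\lambda\varone$ sits outside $\contone$. The remaining checks — that $\ddone_{\strp\emptys}$ is \RSLR-definable and behaves as claimed, and that $(\lambda\varone.\contone[\tone\varone])\secs$ and $\contone[\tone\secs]$ have equal semantics — are routine, and I do not expect any genuinely hard step.
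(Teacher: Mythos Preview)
Your proposal is correct and follows essentially the same route as the paper: wrap the given context as $\lambda\varone.\contone$, invoke Theorem~\ref{th:parcong} to obtain $\lambda\varone.\contone[\tone\varone]\ptreq\lambda\varone.\contone[\ttwo\varone]$, and then instantiate with the trace $\pass\secs\cdot\view{\ddone_{\strp\emptys}}$ to recover $\myabs{\sem{\contone[\tone\secs]}(\strp\emptys)-\sem{\contone[\ttwo\secs]}(\strp\emptys)}\leq\nfone(\secp)$. Your write-up is in fact more explicit than the paper's about freshness of $\varone$ and the typing of $\lambda\varone.\contone$, but the underlying argument is the same.
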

\begin{proof}
For all context $\proves\contone[\typeone]:\typestring$ we have:
\begin{align*}
&|\sem{\contone[\tone\secs]}(\strp\emptys)-\sem{\contone[\ttwo\secs]}(\strp\emptys)|=
	|\sem{(\lambda\varone.\contone[\tone\varone])\secs}(\strp\emptys)-\sem{(\lambda\varone.\contone[\ttwo\varone])\secs}(\strp\emptys)|=\\
&
\qquad |\Pr(\lambda\varone.\contone[\tone\varone],\pass\secs\cdot\view{\ddone_\epsilon})-\Pr(\lambda\varone.\contone[\ttwo\varone],\pass\secs\cdot\view{\ddone_\epsilon})|\leq\neglone(\secp)
\end{align*}
where $\neglone:\NN\to\Real$ is a negligible function.
\end{proof}
\subsection{An Example}
We propose an example in which we analyze two different programs. Both
of them are functions of type
$\modal\typestring\to\modal\typestring\to\typestring$: the first
one returns the string received in input padded or cut depending on
its length and on the security parameter , the second one produces a
random string and compare it to the input (padded or cut). If the
comparison is negative it returns the input string (padded or cut),
otherwise it returns the opposite. We use some syntactic sugar in oder
to make the terms more understandable.
\begin{align*}
	\tone:=&
        \lambda\varsec.\lambda\varone.\level\ \varone\ \varsec\\ \ttwo:=&
        \lambda\varsec.\lambda\varone.\ite{(\level\ \varone\ \varsec)=(\rbg\ \varsec)}{\neg(\level\ \varone\ \varsec)}{(\level\ \varone\ \varsec)}
\end{align*}
The function $\level$ receives in input two strings and pads or cuts
the first one in order to return a string of the same length as the
second one received in input, the function $\rbg$ returns a random
string of the length of the one received in input and the function
$\neg$ switches all the bits of the string in input. So, for all
$\secp\in\Nat,\strp\strone\in\values^\typestring$, if we set
$\strp\strone'=\level\ \strp\strone\ \secs$ we have:
\begin{align*}
	\tone\smallstep{\pass{\secs}\cdot\pass{\strp\strone}}&
		\{(\strp\strone')^1\}\\
	\ttwo\smallstep{\pass{\secs}\cdot\pass{\strp\strone}}&
		\{(\strp\strone')^{1-\ppone},(\neg\strp\strone')^\ppone\}
\end{align*}
Where $\ppone=\Pr[\strp\strone' = \rbg\ \secs]=\frac{1}{2^\secp}$.
So for all distinguisher $\ddone$, if we set $\sem{\ddone\strp\strone'}=\ppone_1,\sem{\ddone(\neg\strp\strone')}=\ppone_2$ we have that:
\begin{align*}
\tone & \smallstepreal{\pass\secs\cdot\pass{\strp\strone}\cdot\view\ddone}\ppone_1\\
\ttwo & \smallstepreal{\pass\secs\cdot\pass{\strp\strone}\cdot\view\ddone}(1-\ppone)\cdot\ppone_1 + \ppone\cdot\ppone_2
\end{align*}
And so we have $\tone,\ttwo$ are parametric trace equivalent, indeed for all $\secp,\strp\strone,\ddone$ we have:
\begin{align*}
	|\Pr(\tone,\pass{\secs}\cdot\pass{\strp\strone}\cdot\view\ddone)-\Pr(\ttwo,\pass{\secs}\cdot\pass{\strp\strone}\cdot\view\ddone)|=\\
	|\ppone_1-((1-\ppone)\cdot\ppone_1+\ppone\cdot\ppone_2)| = |\ppone_1-\ppone_1+\ppone\cdot\ppone_1-\ppone\cdot\ppone_2| = \ppone\cdot|\ppone_1-\ppone_2|\leq\ppone=\frac{1}{2^\secp}
\end{align*}
which is negligible. This, in particular, implies that the two terms
are parametrically trace equivalent, thus parametrically context equivalent.
\subsection{Higher-Order Computational Indistinguishability?}\label{sect:hoci}
Theorem~\ref{theo:civsce} and Theorem~\ref{theo:plcevspte} together
tell us that two terms $\tone,\ttwo$ of type
$\aone\typestring\to\typestring$ are parametrically trace equivalent
iff the distributions they denote are computationally
indistinguishable.  But what happens if the type of the two terms
$\tone,\ttwo$ is in the form $\aone\typestring\to\typeone$ where
$\typeone$ is an \emph{higher-order} type? What do we obtain?
Actually, the literature on cryptography does not offer a precise
definition of ``higher-order'' computational indistinguishability, so
a formal comparison with parametric context equivalence is not
possible, yet.

Apparently, linear contexts do not capture equivalences as
traditionally employed in cryptography, already when $\typeone$ is the
first-order type $\aone\typestring\to\typestring$. A central concept
in cryptography, indeed, is \emph{pseudorandomness}, which can be
spelled out for strings, giving rise to the concept of a pseudorandom
generator, but also for functions, giving rise to pseudorandom
functions~\cite{KatzLindell2007}.  Formally, a function
$F:\{0,1\}^*\to\{0,1\}^*\to\{0,1\}^*$ is said to be a pseudorandom
function iff $F(s)$ is a function which is indistinguishable from a
random function from $\{0,1\}^n$ to $\{0,1\}^n$ whenever $s$ is drawn
at random from $n$-bit strings. Indistinguishability, again, is
defined in terms of PPT algorithms having \emph{oracle} access to
$F(s)$.  Now, having access to an oracle for a function is of course
different than having \emph{linear} access to it. Indeed, building a
\emph{linear} pseudorandom function is very easy: $G(s)$ is defined
to be the function which returns $s$ independently on
the value of its input. $G$ is of course not pseudorandom in the
classical sense, since testing the function multiple times a
distinguisher immediately sees the difference with a truly random
function. On the other hand, the \RSLR\ term $\tone_G$ implementing
the function $G$ above is such that $\lambda\varone.\tone_G\ttwo$ is
trace \emph{equivalent} to a term $\tthree$ where:
\begin{varitemize}
\item
  $\ttwo$ is a term which produces in output $|\varone|$ bits drawn
  at random;
\item
  $\tthree$ is the term $\lambda\varone.\tfour$
  of type $\aone\typestring\to\atwo\typestring\to\typestring$
  such that $\tfour$ returns a random function from $|\varone|$-bitstrings
  to $|\varone|$-bitstrings. Strictly speaking, $\tthree$ cannot
  be an \RSLR\ term, but it can anyway be used as an idealized construction.
\end{varitemize}
But this is not the end of the story. Sometime, enforcing linear
access to primitives is necessary. Consider, as an example, the two
terms
$$
\tone=\lambda n.(\lambda k.\lambda x.\lambda y.Enc(x,k))Gen(n)\quad
\ttwo=\lambda n.(\lambda k.\lambda x.\lambda y.Enc(y,k))Gen(n)
$$
where $Enc$ is meant to be an encryption function and $Gen$ is a
function generating a random key. $\tone$ and $\ttwo$ hould be considered
equivalent whenever $Enc$ is a secure cryptoscheme. But if $Enc$ is
secure against passive attacks (but not against active attacks), the
two terms can possibly be distinguished with high probability if
copying is available. The two terms can indeed be proved to be
parametrically context equivalent if $Enc$ is the cryptoscheme induced
by a pseudorandom generator.

Summing up, parametrized context equivalence coincides with CI when
instantiated on base types, has some interest also on higher-order
types, but is different from the kind of equivalences cryptographers
use when dealing with higher-order objects (e.g. when defining
pseudorandom functions). This discrepancy is mainly due to the linearity
of the contexts we consider here. It seems however very hard to
overcome it by just considering arbitrary nonlinear contexts instead
of linear ones.  Indeed, it would be hard to encode any
\emph{arbitrary} PPT distinguisher accessing an oracle by an
\RSLR\ context: those adversaries are only required to be PPT for
oracles implementing certain kinds of functions (e.g. $n$-bits to
$n$-bits, as in the case of pseudorandomness), while filling a
\RSLR\ context with any PPT algorithm is guaranteed to result in a PPT
algorithm. This is anyway a very interesting problem, which is outside
the scope of this paper, and that we are currently investigating in
the context of a different, more expressive, probabilistic
$\lambda$-calculus.
\section{Conclusions}
In this paper, we have studied notions of equivalence and metrics in a
language for higher-order probabilistic polytime computation. More
specifically, we have shown that the discriminating power of linear
contexts can be captured by traces, both when equivalences and metrics
are considered. Finally, we gave evidence on how applicative bisimilarity
is a sound, but not fully abstract,  methodology for context equivalence.

We believe, however, that the main contribution of this work is the
new light it sheds on the relations between computational
indistinguishability, linear contexts and traces. In particular, this
approach, which is implicitly used in the literature on the
subject~\cite{Zhang10,NowakZ10}, is shown to have some limitations,
but also to suggest a notion of higher-order indistinguishability
which could possibly be an object of study in itself. This is indeed
the main direction for future work we foresee.

\bibliographystyle{abbrv}
\bibliography{biblio}
\end{document}